\numberwithin{equation}{section}
\tikzset{->-/.style={decoration={
			markings,
			mark=at position #1 with {\arrow{latex}}},postaction={decorate}}}
\tikzset{-<-/.style={decoration={
			markings,
			mark=at position #1 with {\arrowreversed{latex}}},postaction={decorate}}}
\tikzset{cross/.style={cross out, draw, 
		minimum size=2*(#1-\pgflinewidth), 
		inner sep=0pt, outer sep=0pt}}
\definecolor{dullmagenta}{rgb}{0.4,0,0.4}   
\definecolor{darkblue}{rgb}{0,0,0.4}
\def\red#1{\textcolor[rgb]{0.9, 0, 0}{#1} }
\def\blue#1{\textcolor[rgb]{0,0,1.0}{#1}}
\def\dblue#1{\textcolor[rgb]{0,0,0.7}{#1}}
\newcommand{\ii}{{\rm i}}
\newcommand{\ee}{{\rm e}}
\newcommand{\dd}{{\rm d}}
\newcommand{\CC}{{\mathbb C}}
\newcommand{\RR}{{\mathbb R}}
\def\re{\mathop{\rm Re}}
\def\im{\mathop{\rm Im}}
\newtheorem{thm}{Theorem}[section]
\newtheorem{prop}[thm]{Proposition}
\newtheorem{lemma}[thm]{Lemma}
\newtheorem{remark}{Remark}
\renewcommand{\descriptionlabel}[1]%
         {\dblue{#1:}\\}
\title{Local Statistics in Normal Matrix Models with Merging Singularity}
\author{Torben Kr\"uger\thanks{Torben Kr\"uger was supported by the Novo Nordisk Fonden Project Grant 0064428 and the VILLUM FONDEN via Young Investigator Award (Grant No. 29369).}, Seung-Yeop Lee, Meng Yang\thanks{Meng Yang was partially supported by the Novo Nordisk Fonden Project Grant 0064428, the VILLUM FONDEN via Young Investigator Award (Grant No. 29369), the Research Grant RCXMA23007, and the Start-up funding YJKY230037 at Great Bay University..}}
\date{}                                           
\begin{document}
\maketitle

\begin{abstract}
We study the normal matrix model, also known as the two-dimensional one-component plasma at a specific temperature,   with merging singularity. As the number $n$ of particles tends to infinity
we obtain
the limiting local correlation kernel at the singularity, which is related to the parametrix of the Painlev\'e~II equation. The two main tools are Riemann-Hilbert
problems and the generalized Christoffel-Darboux identity. The 
correlation kernel exhibits a novel anisotropic scaling behavior, where the corresponding
spacing scale of particles is $n^{-1/3}$ in the direction of merging and $n^{-1/2}$ in the perpendicular direction. 
In the vicinity at different distances to the merging singularity we also observe Ginibre bulk and edge statistics, as well as the sine-kernel and the universality class corresponding to the elliptic ensemble in the weak non-Hermiticity regime for the local correlation function.

\end{abstract}

\section{Introduction and Main Results}
The normal matrix model with external potential $Q:\CC\to\RR$ is an ensemble of $n$ particles, $\{z_j\}_{j=1}^n\subset \CC$, distributed according to the probability density
\begin{equation}\label{def distribution}
\rho_n(z_1,\dots,z_n)=\frac{1}{Z_{n}}\prod_{i<j}|z_i-z_j|^2\ee^{-N\sum_{j=1}^nQ(z_j)},    \end{equation} where $Z_{n}$ is the normalization constant and $N$ is a positive real parameter. The name, normal matrix model, reflects that \eqref{def distribution} is the joint density of complex eigenvalues of random normal matrices (see e.g. \cite{Chau 1998, Chau 1992}).  For  example, the choice $Q(z) =|z|^2$ yields the complex Ginibre ensemble, i.e., the eigenvalue distribution of an $n \times n$ random matrix with independent  complex standard normal-distributed entries \cite{Ginibre 1965}.  Another interpretation of \eqref{def distribution} is as the Gibbs measure of the two-dimensional one-component plasma (OCP) at a specific  temperature. The OCP is also called  Coulomb gas or $\log$-gas. It consists of $n$ particles in the potential $Q$ subject to a logarithmic pair-interaction and plays a role in the fractional quantum Hall effect \cite{Arovas 1984, Laughlin 1983}, the theory of superfluid-superconducting films \cite{Minnhagen 1987} and two-dimensional turbulence \cite{Frohlich 1982,Marchioro 1994,Onsager 1949}.   For an extensive overview of the connection between  eigenvalues of random matrices and $\log$-gases we refer to \cite{Forrester 2010,byun 2025}.

One of the fundamental questions in the interacting particle system is to understand the correlations among the particles. This correlation is characterized by the $k$-point correlation function 
$$R_{n,N}^{k}(z_1,\dots,z_k):=\frac{n!}{(n-k)!}\int_{\CC^{n-k}}\rho_n(z_1,\dots,z_n)\prod_{j=k+1}^n\dd A(z_j),$$ where $\dd A$ is the Lebesgue area measure of the complex plane.  This function encodes the marginal probability for $k$ particles.

The $k$-point correlation function can be  written as 
\begin{equation}\label{eq determinantal}
 R_{n,N}^{k}(z_1,\dots,z_k)=\det\big[{\bf K}_{n,N}(z_i,z_j)\big]_{i,j=1}^k 
\end{equation}
(see e.g. \cite{Deift 1999}), where the correlation kernel ${\bf K}_{n,N}(z,{\zeta})$  is given by
\begin{equation}\label{def kernel}
    {\bf K}_{n,N}(z,{\zeta}):=\ee^{-\frac{N}{2}Q(z)-\frac{N}{2}Q(\zeta)}\sum_{k=0}^{n-1}\frac{1}{h_k}p_{k,N}(z)\overline{p_{k,N}(\zeta)},\end{equation}
in terms of the monic orthogonal polynomial with degree $n$, subject to the orthogonality conditions
\begin{equation}\label{eq ortho4}
\int_\CC p_{n,N}(z)\,\overline{p_{m,N}(z)}\ee^{-NQ(z)}\,\dd
A(z)=h_n\delta_{nm},\quad n,m\geq 0,\end{equation} where $h_n$ is the positive norming constant and $\delta_{nm}$ is the Kronecker delta.
These orthogonal polynomials satisfy 
$$p_{n,N}(z)={\mathbb E}\prod_{j=1}^n(z-z_j),$$
where the expectation is taken over the probability distribution in \eqref{def distribution}.

Equation \eqref{eq determinantal} implies that the correlation kernel ${\bf K}_{n,N}$ determines the statistical behavior of the particles.
In the scaling limit of $n\to\infty$, while $n/N$ is fixed to a constant, the correlation kernel ${\bf K}_{n,N}$, with the proper scaling of coordinates, often converges to a universal function. 
It is expected that this universal function only depends on the symmetry class of the model and the local behaviour of the limiting density $\lim_{n,N\to\infty} R^1_{n,N}$, i.e. on whether the density vanishes and if so on its vanishing order. 
  Below, we describe several known results about the asymptotic behaviors of the correlation kernels.

In contrast to the $2$-dimensional setting decribed above, in  unitary ensembles, all the eigenvalues are confined on the real axis, and the joint distribution of eigenvalues is proportional to 
\begin{equation*}
\prod_{i<j}(x_i-x_j)^2\ee^{-N\sum_{j=1}^nQ(x_j)}    \prod_{k=1}^n\dd x_k,
\end{equation*} 
with $x_i \in \RR$. 
If $Q$ is real analytic and grows sufficiently fast at infinity, the eigenvalues are confined to a union of intervals \cite{Deift 1998}.  It is known that the local scaling limits of the correlation kernel are universal in the sense that they only depend on the vanishing order of the limiting density $\sigma(x):=\lim_{n\to\infty}\frac{1}{n}R_{n,N}^1(x)$. By \cite{DKMVZ 19992, Bleher 1999, Pastur 1997}, in the bulk of the spectrum, the universal correlation kernel at the microscopic scale is the Sine kernel;  at the edge of the spectrum, the limiting density of the eigenvalues typically vanishes as a square root, and the universal correlation kernel at the microscopic scale is the Airy kernel \cite{Pastur 2003}. In the critical case studied in \cite{Bleher 2003, Claeys 2006}, where the spectral density vanishes quadratically in the interior of the asymptotic spectrum,
the universal limiting kernel can be written in terms of the Hastings-McLeod solution of Painlev\'e II equation,  a.k.a. the Painlev\'e II kernel. 

In a matrix ensemble with an external source, different universal limiting kernels emerge at points of vanishing spectral density within the interior of its support. In \cite{Tracy 2006,Hikami 1998,Hikami 19982,Erdos 2020}, when the spectral density vanishes like a cubic root, the universal limiting kernel can be written in terms of the Pearcey integrals, a.k.a. the Pearcey kernel \cite{Pearcey 1946}.  

In normal matrix models where eigenvalues are complex valued, as $N$ tends to infinity proportional to the number of particles $n$, the limiting density $\sigma(z):=\lim_{n,N\to\infty}R^1_{n,N}(z)$ is supported on a compact set ${\cal S}\subset \CC$, which we call the droplet \cite{HM 2013}.

In the bulk of the droplet, i.e., at $z^*\in{\rm Int}({\cal S})$ where $\sigma(z^*)
\neq 0$, the universal scaling limit of the correlation kernel is given \cite{Ameur 2011} by
\begin{equation}\label{ginibre kernel}
\lim_{n\to\infty}\frac{1}{n\Delta Q(z^*)}{\bf K}_{n,n}\bigg(z^*+\frac{\nu}{\sqrt{n\Delta Q(z^*)}},z^*+\frac{\eta}{\sqrt{n\Delta Q(z^*)}}\bigg)=G(\nu,\eta):=\ee^{\nu\overline \eta-\frac{|\nu|^2}{2}-\frac{|\eta|^2}{2}}.
\end{equation} 
Here $G(\nu,\eta)$ is also known as the Ginibre kernel \cite{Ginibre 1965}. Recently in \cite{Hedenmalm 2017} the universality has also been shown at the boundary $z^*\in\partial{\cal S}$, for some general class of potentials $Q$ where the boundary of the droplet ${\cal S}$ is a smooth Jordan curve.
\begin{equation}\label{Faddeeva kernel}
\lim_{n\to\infty}\frac{1}{n\Delta Q(z^*)}{\bf K}_{n,n}\bigg(z^*+\frac{\ee^{\ii \theta}\nu}{\sqrt{n\Delta Q(z^*)}},z^*+\frac{\ee^{\ii\theta}\eta}{\sqrt{n\Delta Q(z^*)}}\bigg)=G(\nu,\eta)\frac{1}{2}{\rm erfc}\Big(\frac{\nu+\overline \eta}{\sqrt 2}\Big).\end{equation} 
Here the angle $\theta$ is that of the outer normal direction to the boundary of the droplet at $z^*$, and ${\rm erfc}$ denotes the complementary error function. The kernel on the right hand side is also known as the Faddeeva kernel \cite{Faddeyeva 1961}. Similar edge universality results have also been established for matrices with i.i.d. non-Gaussian entries in \cite{Erdos 2021,Campbell 2024}.

For the two-dimensional OCP at inverse temperatures different from the one corresponding to the normal matrix model, universality results analogous to \eqref{ginibre kernel} and \eqref{Faddeeva kernel} are not known. However, central limit theorems for mesoscopic linear statistics have recently been established in \cite{Leble 2018} and \cite{Bauerschmidt 2017}.

In this paper we obtain the limiting correlation kernel around the point in the droplet boundary where the merging singularity occurs, see Figure \ref{Figure 1}. 
The universal behavior at such critical points have been open problems in normal matrix models while, in unitary ensembles, where the full hierarchy of critical behaviors  have been investigated \cite{Claeys 2010,Claeys 2011,Claeys 2007}.  It was expected and conjectured by physicists that  similar  behavior will emerge in the normal matrix ensemble \cite{Teodorescu 2005,Bettelheim 2005,Lee 2006}.  The current paper confirms this conjecture and the appearance of the Painlev\'e~II kernel at the merging criticality in the microscopic scale of  $\sim n^{-1/3}$.
A remarkable feature of the particle spacing at the criticality is its anisotropic nature. While the particles are spaced at a distance $\sim n^{-1/3}$ in the direction of merging, they are concentrated on the much smaller scale $\sim n^{-1/2}$ in the perpendicular direction.

We also observe a rich scaling behaviour for the particle density. This density vanishes quadratically in the  direction of merging at the critical point $b_c$, giving rise to the  Painlev\'e~II kernel. However, the characteristic  scale on which the density varies from its value at $b_c$ to its value in the bulk is $n^{-1/4}$, i.e. larger than the scale, $n^{-1/3}$, of inter-particle distance at the critical point. This is in sharp contrast to the regular edge behaviour \eqref{Faddeeva kernel}, where the particle density varies on the same  scale on which the particles are spaced. The separation of characteristic lengths for the density and particle spacing causes the appearance of additional universal particle correlations in a mesoscopic distance from the merging singularity. Letting $l$ be the distance from the merging singularity, Sine-kernel statistic emerges at $n^{-1/3}\ll l \ll n^{-1/4}$.    For $l = n^{-1/4}$ the local universality class coincides with the one observed in the weak non-Hermiticity regime of the elliptic ensemble \cite{Fyodorov 1982, Akemann 2018}. Beyond this distance, i.e. for $l \gg n^{-1/4}$, the familiar bulk and edge behaviours from \eqref{ginibre kernel} and \eqref{Faddeeva kernel} are observed.

In this paper we consider the  external potential,
\begin{equation}\label{def Q}
   Q(z)=|z|^2+2c\log\frac{1}{|z-a|},\quad a>0,\quad c>0. 
\end{equation}
This model was introduced in \cite{Ba 2015} to study the strong asymptotics of the orthogonal polynomials $p_{n,N}(z)$. For the case when $c$ is an integer, the corresponding orthogonal polynomials were also studied in \cite{Akemann 2003}.  Below we review some useful facts from \cite{Ba 2015}.

Defining $t:=n/N$, the droplet ${\cal S}$ depends on the parameters $a, c$, and $t$. When $t=t_c$ where $t_c:=a(a+2\sqrt c)$ the droplet undergoes a topological transition from genus 0 to genus 1 with the merging singularity at $z=b_c$ where $b_c:=a+\sqrt c$; see Figure \ref{Figure 1}.


\begin{figure}
\begin{center}
\includegraphics[width=0.4\textwidth]{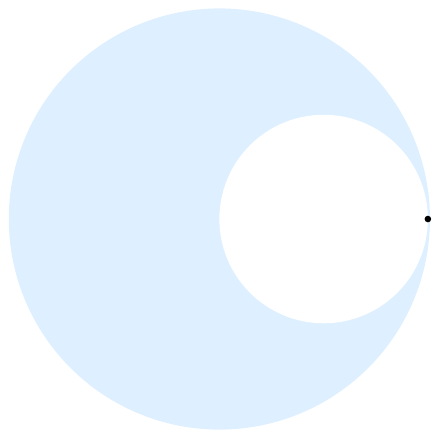}
\end{center}
\caption{The droplet ${\cal S}$ (shaded region) and the merging point $b_c$ (black dot) when $t=t_c$ for $a=1$ and $c=1$.  The center of the inner (white) disk is $a=1$.}\label{Figure 1}
\end{figure}

To state our main result, we  recall the Painlev\'e II Riemann-Hilbert problem \cite{Claeys 2006,Ba 2015} for $\widetilde\Psi:\CC\to~\CC^{2\times 2}$:
\begin{equation}\label{rhp phi}
\begin{cases}
 \widetilde\Psi_+(\xi;s)= \widetilde\Psi_-(\xi;s)\begin{pmatrix}
1&0\\1&1
\end{pmatrix},&\quad \xi\in \gamma_+,\\
\widetilde\Psi_+(\xi;s)=  \widetilde\Psi_-(\xi;s)\begin{pmatrix}
1&-1\\0&1
\end{pmatrix},&\quad \xi\in \gamma_-,\\
\widetilde\Psi(\xi;s)=\Big(I+\frac{\Pi_1(s)}{2\ii\xi}+{\cal O}\Big(\frac{1}{\xi^2}\Big)\Big)\ee^{-\ii\left(\frac{4}{3}\xi^3+s\xi\right)\sigma_3},&\quad |\xi|\to\infty,\\
\widetilde\Psi(\xi;s)\mbox{\quad is holomorphic}, &\mbox{otherwise},
\end{cases}
 \end{equation}
 where $s\in \RR$ is a parameter, $\sigma_3=\begin{pmatrix}
1&0\\0&-1
\end{pmatrix}$ is the third Pauli matrix, $\gamma_\pm$ are the piecewise straight contours shown in Figure \ref{Figure 2}, and $\Pi_1(s)$ is given in terms of $q(s)$ by
\begin{equation}\label{def Pi}
\Pi_1(s)= \begin{pmatrix}
r(s)&q(s)\\-q(s)&-r(s)
\end{pmatrix},\quad  r(s)=q'(s)^2-sq(s)^2-q(s)^4.  
\end{equation} 
The unique solution to this Riemann-Hilbert problem provides $q(s)$, which is the Hastings-McLeod solution of the Painlev\'e II equation $q''=sq+2q^3$; see \cite{Bleher 2003,Claeys 2006,Flaschka 1980} for more details. 

\begin{figure} 
\centering
\begin{tikzpicture}[scale=0.85]
\draw[thick,postaction={decorate, decoration={markings, mark = at position 0.5 with {\arrow{>}}}} ] (0,0)  -- (4,2) ;
\draw[thick,postaction={decorate, decoration={markings, mark = at position 0.5 with {\arrow{>}}}} ] (-4,2) -- (0,0);
\draw[thick,postaction={decorate, decoration={markings, mark = at position 0.5 with {\arrow{>}}}} ] (0,0) -- (4,-2) ;
\draw[thick,postaction={decorate, decoration={markings, mark = at position 0.5 with {\arrow{>}}}} ] (-4,-2) -- (0,0);
\foreach \Point/\PointLabel in {(0,0)/0}
\draw[fill=black] \Point circle (0.05) node[below] {$\PointLabel$};

 \draw[thick]
    (4,-2) coordinate (a) node[right] {}
    -- (0,0) coordinate (b) node[below] {}
    -- (4,2) coordinate (c) node[above right] {}
    pic["$\frac{\pi}{3}$", draw=black,  angle eccentricity=1.4, angle radius=0.5cm]
    {angle=a--b--c};

\foreach \Point/\PointLabel in {(3.8,1.8)/\gamma_+,(-4,1.8)/\gamma_+,(-4,-2.1)/\gamma_-,(3.8,-2.1)/\gamma_-}
\draw[fill=black]  
 \Point node[below right] {$\PointLabel$};
 \end{tikzpicture}
 \caption{The jump contours of $\widetilde\Psi(\xi;s)$. }\label{Figure 2}
 \end{figure}
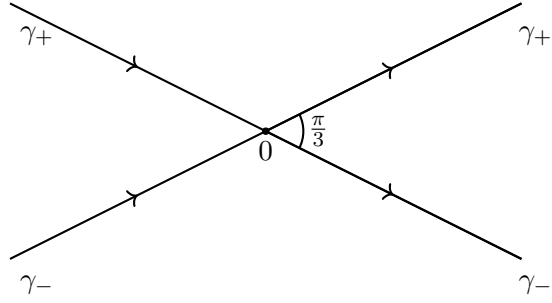


Let us define an analytic continuation, $\Psi(\xi;s)$, of the matrix function $\widetilde\Psi(\xi;s)$ on the positive imaginary axis by
\begin{equation}\label{def of psi0}
\begin{aligned}
\Psi(\xi;s):=\begin{cases}
\widetilde\Psi(\xi;s),  & \frac{\pi}{6}<\arg \xi<\frac{5\pi}{6},\vspace{0.1cm}\\
\widetilde\Psi(\xi;s)\begin{pmatrix}
0&-1\\1&1
\end{pmatrix},  &\frac{7\pi}{6}<\arg \xi<\frac{11\pi}{6},\\
\widetilde\Psi(\xi;s)\begin{pmatrix}
1&0\\1&1
\end{pmatrix},  &{\mbox{otherwise}}.
\end{cases}
\end{aligned}\end{equation}

\begin{thm}\label{main theorem}
Let $t_c:=a(a+2\sqrt c)$ and $b_c:=a+\sqrt c$. Furthermore, let $n,N\to\infty$ such that $t:=n/N$ satisfies $t=t_c+{\cal O}(N^{-2/3})$. Let \begin{equation}\label{def gammacl}
s:=\frac{\gamma_cN^{2/3}}{2b_c}(t-t_c), \quad \gamma_c:=\frac{2b_c^{1/3}c^{1/6}}{a^{1/3}}.
\end{equation} 
For any $0<\delta<1/6$, the correlation kernel in \eqref{def kernel} exhibits the asymptotic behaviour 
\begin{equation}\label{eq limiting kernel}
\frac{C_N(y,y')\gamma_c}{N^{5/6}}{\bf K}_{n,N}\bigg(b_c+\frac{x}{\sqrt N}+\frac{\ii \gamma_c y}{N^{1/3}},b_c+\frac{x'}{\sqrt N}+\frac{\ii \gamma_c y'}{N^{1/3}}\bigg)={\bf K}_s(x,y,x',y')+{\cal O}\Big(\frac{1}{N^{1/6-\delta}}\Big),    \end{equation}
uniformly over a region where $(N^{-\delta}x,y,N^{-\delta}x',y')\in \RR^4$ is bounded.

Here $C_N(y,y')$ is defined by
      \begin{equation*}
   C_N(y,y'):=
   \ee^{-\ii (\varphi_N(y)-\varphi_N(y'))}\,, \qquad \varphi_N(w):=a\gamma_cN^{2/3} w+s w+4w^3/3.
      \end{equation*}
The limiting kernel is given by  
\begin{equation}\label{main results3}
 {\bf K}_s(x,y,x',y'):=\begin{cases}\displaystyle
 \frac{\ee^{-(x^2+(x')^2)}}{\sqrt{\pi/2}}\frac{\Psi_{21}(y;s)\Psi_{11}(y';s)-\Psi_{11}(y;s)\Psi_{21}(y';s)}{2\pi\ii(y-y')},\quad y\neq y',\\\displaystyle
 \frac{\ee^{-(x^2+(x')^2)}}{\sqrt{\pi/2}}\frac{\Psi_{21}'(y;s){\Psi_{11}(y;s)}-\Psi_{11}'(y;s){\Psi_{21}(y;s)}}{2\pi\ii},\quad y=y'.
 \end{cases} 
    \end{equation}
     Here $\Psi_{11}$ and $\Psi_{21}$ are corresponding entries of $\Psi$ in \eqref{def of psi0}.  The symbol `` $'$ '' in $\Psi'_{21}$ and $\Psi'_{11}$  stands for the derivative with respect to $y$. 
\end{thm}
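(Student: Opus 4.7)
The plan is to combine two ingredients: (i) the strong asymptotics of the monic orthogonal polynomials $p_{n,N}$ near the merging point $b_c$, established via a Deift--Zhou steepest descent analysis of the associated Riemann--Hilbert problem in \cite{Ba 2015}, and (ii) a generalized Christoffel--Darboux (CD) identity for planar orthogonal polynomials. The first step is to apply such a CD identity to rewrite the pre-kernel
\[
\sum_{k=0}^{n-1}\frac{p_{k,N}(z)\overline{p_{k,N}(\zeta)}}{h_k}
\]
in closed form using only the top polynomials $p_{n,N}$, $p_{n-1,N}$ and certain Cauchy transforms; unlike the classical CD identity on the line, the planar version is typically a contour-integral identity. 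This step is decisive because it replaces the global spectral sum by a local quantity to which the RH asymptotics at $b_c$ apply.

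Second, I would insert the strong asymptotics of $p_{n,N}$ from \cite{Ba 2015}. For $t-t_c=\mathcal{O}(N^{-2/3})$, the local parametrix at $b_c$ is built from the Painlev\'e~II model $\widetilde\Psi(\xi;s)$ of \eqref{rhp phi}, via a conformal change of variable of the shape $z-b_c=(\ii\gamma_c/N^{1/3})\,\xi\,\bigl(1+\mathcal{O}(z-b_c)\bigr)$, with $s$ given by \eqref{def gammacl}. The factor $\ii$ reflects that the merging direction is imaginary in $z$-space whereas the jump contours $\gamma_\pm$ of \eqref{rhp phi} lie along real rays in $\xi$-space; this also forces the analytic continuation \eqref{def of psi0}, so that $\Psi$ can be evaluated on the positive imaginary $\xi$-axis, i.e.\ at real $y$. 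After the usual $g$-function conjugation, $p_{n,N}(z)\ee^{-NQ(z)/2}$ is then, modulo a small-norm error, an entry of $\widetilde\Psi(\xi;s)$ multiplied by a non-vanishing outer parametrix.

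Third, substituting these asymptotics into the CD formula under the scaling $z=b_c+x/\sqrt{N}+\ii\gamma_c y/N^{1/3}$ and $\zeta=b_c+x'/\sqrt{N}+\ii\gamma_c y'/N^{1/3}$, two mechanisms combine to produce $\mathbf{K}_s$. In the perpendicular ($x$) direction, the exponential factor $\ee^{-NQ(z)/2-NQ(\zeta)/2}$, Taylor expanded using $\Delta Q(b_c)=2$, yields the Gaussian envelope $\ee^{-(x^2+(x')^2)}/\sqrt{\pi/2}$, exactly as in the derivations of \eqref{ginibre kernel}--\eqref{Faddeeva kernel}. In the merging ($y$) direction the CD ratio telescopes, using $\det\widetilde\Psi\equiv 1$, into $\bigl(\Psi_{21}(y;s)\Psi_{11}(y';s)-\Psi_{11}(y;s)\Psi_{21}(y';s)\bigr)/(y-y')$. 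The prefactor $C_N(y,y')$ strips off the oscillatory phase $\ee^{-\ii(4y^3/3+sy+a\gamma_c N^{2/3}y)}$ coming from the large-$\xi$ asymptotics in \eqref{rhp phi} plus the linear-in-$N^{2/3}$ contribution from the logarithmic part of $Q$; since $C_N(y,y')=c(y)/c(y')$ with $c(y)=\ee^{-\ii\varphi_N(y)}$, it is a determinantally trivial conjugation.

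The principal obstacle is uniformity over the window where $|x|\lesssim N^{\delta}$: the perpendicular displacement $x/\sqrt{N}$ is much larger than the natural parametrix scale $N^{-1/3}$, so one has to (i) Taylor-expand the non-oscillatory prefactors accurately over a range much larger than the parametrix disk, and (ii) control the interaction between the $x$-translation and the conformal map $z\mapsto\xi$. Balancing these two sources of error forces the exponent $1/6-\delta$ in \eqref{eq limiting kernel}. A secondary delicacy is to verify that the precise coefficient $\gamma_c/(2b_c)$ in \eqref{def gammacl} defining $s$ emerges correctly from a Taylor expansion of the $g$-function around $(b_c,t_c)$; this requires careful bookkeeping of the droplet geometry near the merging transition.
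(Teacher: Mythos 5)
Your high-level strategy (generalized Christoffel--Darboux identity plus Riemann--Hilbert asymptotics of the planar orthogonal polynomials with a Painlev\'e~II local parametrix at $b_c$) is the same as the paper's, and your remarks about the anisotropic window, the analytic continuation $\Psi$ versus $\widetilde\Psi$, and the role of the conjugation $C_N$ are on target. However, there are two genuine gaps.

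First, you treat the planar C-D identity as if it expresses the pre-kernel itself ``in closed form using only the top polynomials,'' after which the ratio ``telescopes'' into $(\Psi_{21}\Psi_{11}-\Psi_{11}\Psi_{21})/(y-y')$. The identity actually used (Theorem~\ref{thm 21}) gives a closed form only for $\overline{\partial}_\zeta {\cal K}_n(z,\zeta)$, not for ${\cal K}_n$. One must therefore \emph{integrate} the identity in the anti-holomorphic variable, and the resulting ``integration constant'' $F_n$ is a priori of the same size as the answer. The paper closes this by integrating in the $v_I$-direction out to $v_I=-N^{\delta}$ (and symmetrically in $u_I$) and then killing $F_n\big|_{u_I=v_I=-N^{\delta}}$ with the exterior decay bound ${\bf K}_n(z,z)\le CN\ee^{-N{\cal U}_{\text{2D}}(z)}$ of Theorem~\ref{Thm 41} together with $|{\bf K}_n(z,\zeta)|\le\sqrt{{\bf K}_n(z,z){\bf K}_n(\zeta,\zeta)}$. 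Without this step the argument does not determine the kernel; this is also precisely where the restriction $\delta<1/6$ and the error exponent $1/6-\delta$ come from, not from the Taylor-expansion bookkeeping you describe.

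Second, inserting only the leading-order strong asymptotics from \cite{Ba 2015} is not enough. Writing $\overline{\partial}_\zeta{\cal K}_n={\rm I}_n-{\rm II}_n$ as in \eqref{main identity}, the contributions of order $N^{4/3}$ (and several of order $N$) in ${\rm I}_n$ and ${\rm II}_n$ cancel identically; the surviving term of order $N$, which produces $\Psi_{11}\overline{\Psi_{11}}-\Psi_{21}\overline{\Psi_{21}}$ with the correct constant $b_c$, is a difference of subleading coefficients. Extracting it requires the refined parametrix of Proposition~\ref{proposition sh} (the partial Schlesinger transform with $S_{11},S_{21},S_{22}$ and the second-order term $\Pi_2$ of the Painlev\'e~II expansion), two-term expansions of the recurrence data $\alpha_n,\beta_n,\gamma_n,b_n$, and the identities \eqref{relation up} among $r,q,p_{jk}$. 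Your proposal, as written, would lose the limit in the cancellation.
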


We note that $C_N(y,y')$ drops out when computing the $k$-point correlation function \eqref{eq determinantal}, hence it is irrelevant to the statistical behavior. 

This is the first observation of the limiting kernel at criticality in the normal random matrix ensemble.  The zooming scales are anisotropic: $N^{1/3}$ in the direction of the merging (represented by the coordinates $y$ and $y'$) and $N^{1/2}$ in the direction perpendicular to the merging (represented by the coordinate $x$ and $x'$).     If we instead use the zooming scale of $N^{1/3}$ for both directions, the Gaussian factor $\ee^{-x^2}$ in $x$ (and similarly for $x'$) will create the $\delta$-function in the large $N$ limit, and the limiting process will be concentrated on a 1-dimensional space represented by $y$ and $y'$, producing the same limiting process as in the 1-matrix model studied in \cite{Bleher 2003, Claeys 2006}.  This supports the conjecture made by physicists \cite{Lee 2006} that the normal matrix model has the similar universal critical behaviors as the ones observed in the unitary matrix model.



\begin{figure}
\centering
    \includegraphics[width=\textwidth]{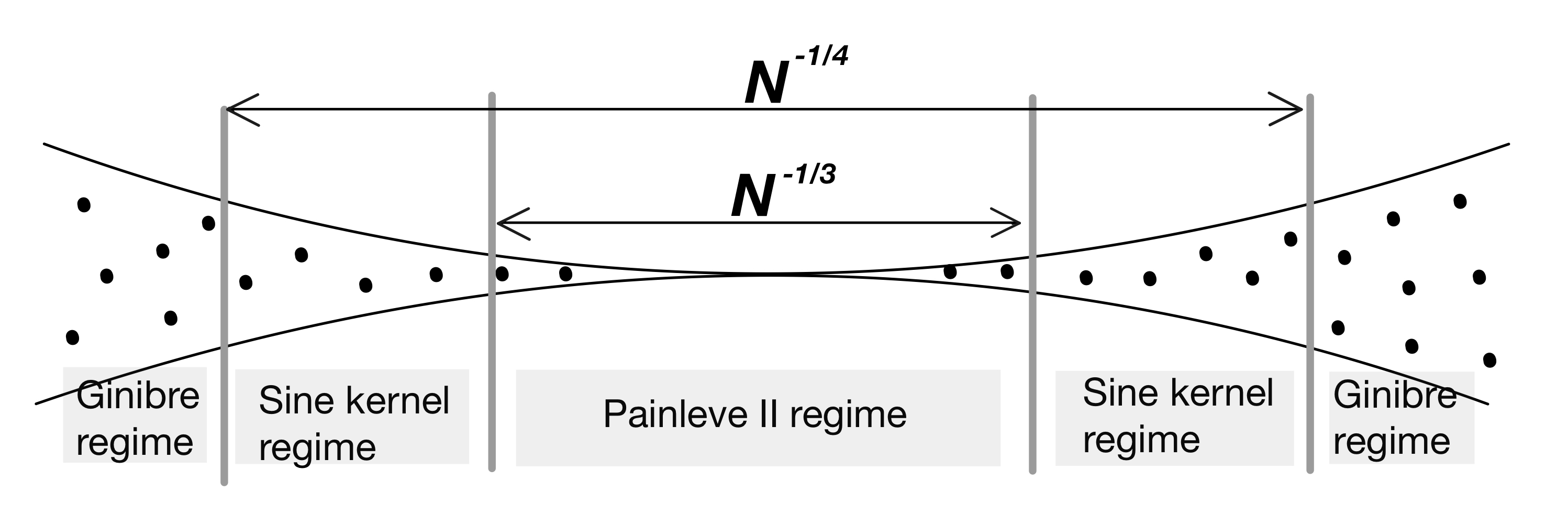} 
\caption{Various scaling behaviors appear depending on the distance from the merging point. This figure is only qualitative.  For example, in Figure \ref{Figure 1}, the two boundaries of the global droplet bend towards the same direction.  In this figure the droplet is rotated by $90^{\circ}$.  See Remark \ref{remark2}.}\label{Fig 4}
\end{figure}

The following theorem provides a detailed description of ${\bf K}_{n,N}$ at the merging singularity on mesoscopic scales, that interpolates between the singularity and the bulk regime. 

Below we introduce the parameter $\tau$ to represent the scaled distance, $\sim N^{\tau}$, from the singularity.  

\begin{thm}\label{main theorem3}
Let $t_c:=a(a+2\sqrt c)$ and $b_c:=a+\sqrt c$. Let $n,N\to\infty$ such that $t=n/N$ and $t=t_c+{\cal O}(N^{-2/3})$. 

If $\frac{1}{6}<\tau<\frac{1}{4}$, the following limit is uniform over $(\nu,\eta)$ in a compact subset of $\CC^2$ and $(X,Y)$ in a compact subset of $\RR^2\setminus\{Y=0\}$, 
\begin{equation}\label{eq main thm3}
\begin{aligned}
&\quad\lim_{n,N\to\infty}\frac{C_{N,\tau}(X\sqrt N/N^{2\tau},Y,\nu,\eta)}{n}{\bf K}_{n,N}\bigg(b_c+\frac{X}{\sqrt {2}N^{2\tau}} +\frac{\ii(2cb_c^2)^{\frac{1}{4}}Y}{N^{\tau}}+\frac{\nu}{\sqrt N},b_c+\frac{X}{\sqrt {2}N^{2\tau}} +\frac{\ii(2cb_c^2)^{\frac{1}{4}}Y}{N^{\tau}}+\frac{\eta}{\sqrt N}\bigg)\\
&=\begin{cases}\displaystyle
0, &X+b_cY^2<0 \,\, \mbox{or}\,\,  X+\sqrt cY^2>0,\\\displaystyle
    \frac{G(\nu,\eta)}{\pi t_c},&X+\sqrt cY^2<0<X+b_cY^2, \\\displaystyle
 \frac{G(\nu,\eta)}{2\pi t_c}{\rm erfc}\Big(\frac{\nu+\overline \eta}{\sqrt 2}\Big),&X+b_cY^2=0 \,\, \mbox{or}\,\, X+\sqrt cY^2=0,   
\end{cases}    
\end{aligned}
  \end{equation}
where $G(\nu,\eta)$ is the Ginibre kernel defined in \eqref{ginibre kernel},
$C_{N,\tau}(X,Y,\nu,\eta)$ is defined by
    \begin{equation}\label{def CNXY}
    \begin{aligned}
C_{N,\tau}(X,Y,\nu,\eta):&=\ee^{\frac{ (\overline\nu-\nu+\eta-\overline\eta)( X+\sqrt 2\sqrt N b_c)}{2\sqrt 2}}\ee^{\frac{\nu+\overline\nu-\eta-\overline\eta}{2}\frac{\ii(2cb_c^2)^{\frac{1}{4}} Y\sqrt N}{N^{2\tau}} }\frac{(\sqrt c+u(\nu))^{\frac{Nc}{2}}}{(\sqrt c+\overline {u(\nu)})^{\frac{Nc}{2}}}\frac{(\sqrt c+  \overline {v(\eta)})^{\frac{Nc}{2}}}{(\sqrt c+ v(\eta))^{\frac{Nc}{2}}},    \\
u(\nu)&:=\frac{\ii(2cb_c^2)^{\frac{1}{4}}Y}{N^\tau}+\frac{(X+\sqrt2\nu)}{\sqrt {2N}} ,\quad v(\eta):=\frac{\ii(2cb_c^2)^{\frac{1}{4}}Y}{N^\tau}+\frac{ (X+\sqrt2\eta)}{\sqrt {2N}},
    \end{aligned}
      \end{equation}
and {\rm erfc} is the complementary error function given by
$${\rm erfc}(z):=\frac{2}{\sqrt\pi}\int_z^\infty\ee^{-t^2}\dd t.$$
      
If $\frac{1}{4}<\tau<\frac{3}{10}$, the following limit is uniform over $(Y,x,y,x',y')$ in a compact subset of $\RR^5\setminus\{Y=0\}$,
 \begin{equation}\label{eq main thm32}
 \begin{aligned}
&\quad \lim_{n,N\to\infty}\frac{\widehat C_N(Y,x,y,x',y')N^{2\tau}}{n\sqrt N}{\bf K}_{n,N}\bigg(b_c+\frac{\ii(2cb_c^2)^{\frac{1}{4}}Y}{N^{\tau}}+\frac{\ii \sqrt 2\pi y}{N^{1-2\tau}aY^2}+\frac{x}{\sqrt {N}}-\frac{t_cY^2}{2\sqrt {2}aN^{2\tau}}+\frac{s}{\sqrt 2 \gamma_c N^{2/3}},\\
&\qquad \qquad\qquad \qquad\qquad \qquad \qquad \qquad\qquad b_c+\frac{\ii(2cb_c^2)^{\frac{1}{4}}Y}{N^{\tau}}+\frac{\ii \sqrt 2\pi y'}{N^{1-2\tau}aY^2}+\frac{x'}{\sqrt {N}}-\frac{t_cY^2}{2a\sqrt {2}N^{2\tau}}+\frac{s}{\sqrt 2 \gamma_c N^{2/3}}\bigg)\\
&=\frac{ aY^2\ee^{-x^2-(x')^2} }{ \pi^{3/2}t_c}\frac{\sin \big(\pi(y-y')\big) }{\pi(y-y')} ,    
 \end{aligned}
\end{equation} 
where $\widehat C_N(Y,x,y,x',y')$ is defined in terms of $C_{N,\tau}$ \eqref{def CNXY} by
      \begin{equation*}
\widehat C_{N}(Y,x,y,x',y'):=C_{N,\tau}\bigg(-\frac{b_c+\sqrt c}{2}\frac{Y^2\sqrt N}{N^{2\tau}}+\frac{s}{\sqrt 2 \gamma_c N^{1/6}},Y,x+\frac{\ii \sqrt 2\pi y}{aY^2}\frac{N^{2\tau}}{\sqrt N},x'+\frac{\ii \sqrt 2\pi y'}{aY^2}\frac{N^{2\tau}}{\sqrt N}\bigg).
      \end{equation*}

      For the  case when $\tau=\frac{1}{4}$, the following limit is uniform over  $(\nu,\eta)$ in a compact subset of $\CC^2$ and $(X,Y)$ in a compact subset of $\RR^2\setminus\{Y=0\}$,
\begin{equation}\label{thm1.32}
\begin{aligned}
&\quad \lim_{n,N\to\infty}\frac{C_{N,1/4}(X,Y,\nu,\eta)}{n}{\bf K}_{n,N}\bigg(b_c+\frac{X}{\sqrt {2N}} +\frac{\ii(2cb_c^2)^{\frac{1}{4}}Y}{N^{\tau}}+\frac{\nu}{\sqrt N},b_c+\frac{X}{\sqrt {2N}} +\frac{\ii(2cb_c^2)^{\frac{1}{4}}Y}{N^{\tau}}+\frac{\eta}{\sqrt N}\bigg)\\
&=\frac{G(\nu,\eta) }{2 \pi t_c}\bigg({\rm erfc}\Big(X+\sqrt c Y^2+\frac{\nu+\overline{\eta}}{\sqrt 2}\Big)-{\rm erfc}\Big(X+b_c Y^2+\frac{\nu+\overline{\eta}}{\sqrt 2}\Big)\bigg),
\end{aligned}
\end{equation} 
where 
$C_{N,1/4}(X,Y,\nu,\eta)$ is defined in \eqref{def CNXY} and $G(\nu,\eta)$ is defined in \eqref{ginibre kernel}.
\end{thm}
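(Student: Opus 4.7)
The plan is to reuse the Riemann--Hilbert machinery developed in the proof of Theorem \ref{main theorem}, but to extract asymptotics at mesoscopic distances from the merging point $b_c$. Theorem \ref{main theorem} is stated only for $y,y'$ bounded, so it does not directly cover the mesoscopic range; however, the same RH analysis provides, after a local change of variables, an explicit representation of ${\bf K}_{n,N}$ in a neighbourhood of $b_c$ in terms of the Painlev\'e II parametrix $\Psi(\xi;s)$ together with an analytic prefactor coming from the outer parametrix. The zoom at scale $N^{-\tau}$ with $\tau\in(1/6,3/10)$ corresponds, in the $\xi$-variable attached to $b_c$, to $\xi \sim N^{1/3-\tau}\to\infty$, so the proof reduces to controlling the large-$\xi$ asymptotics of $\Psi(\xi;s)$ and substituting them into the generalized Christoffel--Darboux representation of ${\bf K}_{n,N}$ underlying Theorem \ref{main theorem}.

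The large-$\xi$ expansion of $\Psi(\xi;s)$, read from the RHP normalisation $\Psi(\xi;s) = \bigl(I + \Pi_1(s)/(2\ii\xi) + {\cal O}(\xi^{-2})\bigr)\ee^{-\ii(4\xi^3/3 + s\xi)\sigma_3}$ together with the sectorial jumps in \eqref{def of psi0}, is governed by the oscillatory phase $\ee^{-\ii(4\xi^3/3 + s\xi)}$. The prefactors $C_{N,\tau}$ and $\widehat C_N$ are engineered precisely to absorb this phase; their elaborate form is obtained by tracking the WKB exponential coming from the outer parametrix back through the gauge transformations of the RH analysis. After this gauge removal, three balance regimes emerge. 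For $\tau\in(1/6,1/4)$, the leading surviving contribution is algebraic in the merging direction and Gaussian in the transverse one; the curves $X+b_cY^2=0$ and $X+\sqrt c\,Y^2=0$ are the two shifted boundaries of the droplet after the zoom, so their position relative to the zoom point determines whether the limit is the Ginibre kernel (inside the droplet), zero (outside), or the Faddeeva kernel (on the boundary). For $\tau=1/4$, the cubic term $4\xi^3/3$ in the phase becomes of the same order as the linear term $s\xi$, and a saddle-point analysis produces both saddles simultaneously, giving the interpolating difference of two complementary error functions appearing in \eqref{thm1.32}. For $\tau\in(1/4,3/10)$, the two saddles are well-separated and their interference, after rescaling of the transverse variable $y$, yields the Sine kernel of \eqref{eq main thm32}; the factor $aY^2/(\pi^{3/2}t_c)$ arises from the Jacobian of the transformation to the local coordinate attached to the shifted boundary.

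The main technical obstacle is the intermediate scale $\tau=1/4$, which is the coalescence regime of the two saddles of the phase in the parametrix. There the saddle-point analysis must be carried out uniformly as the two critical points merge, with precise constants extracted to recover the complementary error functions in \eqref{thm1.32}; this is what produces the weakly non-Hermitian elliptic-ensemble behaviour observed in \cite{Fyodorov 1982, Akemann 2018}. A secondary difficulty is the bookkeeping of the phase prefactors on the boundary-crossing curves $X+b_cY^2=0$ and $X+\sqrt c\,Y^2=0$: the cancellations between the outer parametrix and $\Psi$ produce error terms of the same order as the main terms, and these must be resolved cleanly in order to isolate the ${\rm erfc}$ correction yielding the Faddeeva kernel in \eqref{eq main thm3}.
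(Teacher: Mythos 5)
Your overall frame --- the generalized Christoffel--Darboux identity of Theorem \ref{thm 21} combined with the Riemann--Hilbert asymptotics of the orthogonal polynomials, with the prefactors $C_{N,\tau}$ absorbing the WKB phases, and the three regimes read as interior/edge/interfering-edges --- is indeed the route the paper takes, and your heuristic picture of the limits is right. But as a proof the proposal has a structural gap: the C--D identity only yields $\overline{\partial}_\zeta{\cal K}_n(z,\zeta)$, not the kernel itself. One must integrate this derivative in the anti-holomorphic variable and pin down the integration constant (a function $F_n$ annihilated by $\overline{\partial}_\zeta$ along the integration path). The paper does this by moving $\zeta$ to a reference point just outside the droplet and invoking the exterior decay bound of Theorem \ref{Thm 41}, which forces $F_n$ to be negligible; in the sine-kernel regime an additional stationary-phase estimate in $x'$ is needed because the integrand oscillates at frequency $N^{2\tau-1/2}$. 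Your proposal never says how the kernel is recovered from its $\overline{\partial}$-derivative, and without this step the argument does not close. It is precisely in this integration that the erfc's of \eqref{thm1.32} arise: they are antiderivatives of the two Gaussian edge profiles attached to the parabolas $X=-b_cY^2$ and $X=-\sqrt c\,Y^2$.

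Separately, your proposed mechanism at $\tau=1/4$ --- a uniform saddle-point analysis of coalescing critical points of the phase $\tfrac{4}{3}\xi^3+s\xi$ --- is not correct. Those saddles sit at bounded $\xi$, i.e.\ at distance $N^{-1/3}$ from $b_c$, so they are irrelevant at scale $N^{-1/4}$ (at $\xi\sim N^{1/3-\tau}$ the cubic and linear terms are of orders $N^{1-3\tau}$ and $N^{1/3-\tau}$, which are never comparable in this range); moreover coalescing saddles of a cubic phase would produce Airy-type, not erfc-type, functions. What actually balances at $\tau=1/4$ is the width $\sim Y^2N^{-2\tau}$ of the pinched droplet against the Ginibre scale $N^{-1/2}$, and the relevant analytic input is the expansion \eqref{pn omegaplus} of Proposition \ref{thm ortho1} in the annulus $D_c\setminus D_\tau$, whose two exponential branches correspond to the two nearby boundary arcs. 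Finally, the admissible ranges $\tau>1/6$ and $\tau<3/10$ come from tracking the near-cancellation of two of the four coefficient combinations in the C--D identity (they are ${\cal O}(N^{2/3})$ rather than ${\cal O}(N)$) together with the second-order Schlesinger corrections to the outer parametrix; a proof that does not engage with this bookkeeping cannot justify where the stated ranges of $\tau$ begin and end.
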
 

Again, the prefactors $C_{N, \tau}$ drop out when determining the correlations functions and, thus, do not influence the particle statistics.

\begin{remark} {\bf  Gaps in $\tau$ range:} \label{Remark 1}
Theorem \ref{main theorem3} does not address the ranges of $\tau$, where $0<\tau\leq 1/6$ and  $3/10 \leq\tau < 1/3$. Let us remark on these ranges.

In \eqref{eq main thm3} the two parabolas, $X=-b_cY^2$ and $X=-\sqrt cY^2$, are used to define the cases.  They are the quadratic approximations of the boundary of the droplet near the critical merging point.  For the case of $0<\tau\leq \frac{1}{6}$, which corresponds to moving away from the merging point, we expect the behaviour from \eqref{eq main thm3} to remain valid if one uses the better approximation of the droplet boundaries and an accompanying modification of the prefactor $C_{N, \tau}$.  This means that, instead of using the parabolas to divide the cases in \eqref{eq main thm3}, we will need a better approximation of the circles describing the droplet boundary in terms of the higher order terms as $X=-b_cY^2 + A_4 Y^4+ A_6 Y^6+\dots$ where $A_4$, $A_6,\dots$ are some (scaling) coefficients.    

For the other range of $3/10 \leq\tau < 1/3$, we expect that the limiting behaviour established for $1/4<\tau<3/10$ in \eqref{eq main thm32}  remains valid. To prove this claim we need higher order corrections of orthogonal polynomial to approach deeper (i.e. larger $\tau$) into the critical region.  The reason for our belief is based on the following heuristic check. If one assumes that the asymptotic behavior at $\tau=1/3$ in Theorem 1.1 holds for smaller values of $\tau$, one can in fact obtains  the asymptotic behavior in \eqref{eq main thm32} by using the known asymptotics of Painlev\'e II transcendents.  To be more concrete, one can substitute $x$ and $y$ in \eqref{eq limiting kernel} respectively by 
$$ x-\frac{b_c+\sqrt c}{2\sqrt 2}\frac{Y^2\sqrt N}{N^{2\tau}}+\frac{s}{\sqrt 2 \gamma_c N^{1/6}} \quad \text{and} \quad \frac{(2cb_c^2)^{\frac{1}{4}}Y N^{1/3}}{\gamma_c N^\tau}+\frac{\sqrt 2\pi y}{\gamma_c aY^2}\frac{N^{2\tau} }{N^{2/3}},$$ 
and similarly for $x'$ and $y'$, to eventually obtain the asymptotic behavior of \eqref{eq main thm32}.  Note that this substitution is not covered by the theorem because $y$ in \eqref{eq limiting kernel} is assumed to be bounded.   However such matching suggests that the result that we obtained for $1/4<\tau<3/10$, being heuristically reproduced from the result at $\tau=1/3$, may hold for all the intermediate region $3/10\leq \tau <1/3$.
\end{remark}

\begin{remark}\label{remark2}{\bf Rich Scaling Behaviour Near the Merging Singularity:}
In contrast to the 1 dimensional Coulomb gas, the merging singularity here exhibits an additional characteristic mesoscopic scale $N^{-1/4}$, greater than the microscopic length scale $N^{-1/3}$ of particle spacing at the singularity. 
On this mesoscopic length scale  the particle density has a non-trivial profile and drops from its bulk value down to zero. Such rich scaling behaviour is absent at regular edge points of normal matrix models.
 For at the distance $d\sim N^{-\tau}$ from the merging point $b_c$
 with $\frac{1}{4}<\tau<\frac{3}{10}$ the local correlation of the particles is described by the sine kernel \eqref{eq main thm32} as in the GUE. See Figure \ref{Fig 4} for the regions where the sine kernel appears. In this regime the  inter-particle distance in the $y$-direction scales as $\sim N^{2\tau-1}$, because the particle density is of size $d^2 \sim N^{-2\tau}$ and the particle system is effectively one dimensional, due to the Gaussian decay in the $x$-direction.  One can also see why the transition from Sine-kernel to the Ginibre regime occurs at $\tau=1/4$.  At $\tau=1/4$ the width of the support of the equilibrium measure becomes of order $\sim N^{-1/2}$, matching the inter-particle distance of the Ginibre ensemble. For $\tau < 1/4$ this width is  thick enough to accommodate the two-dimensional  particles system.
\end{remark}

For $\tau=1/4$, one can obtain the limiting density by setting $\nu=\eta=0$ in \eqref{thm1.32},
\begin{equation}\label{main results4}
 \lim_{n,N\to\infty}\frac{1}{n}R_{n,N}^1\bigg(b_c+\frac{X}{\sqrt {2 N}} +\frac{\ii(2cb_c^2)^{\frac{1}{4}}Y}{N^{1/4}}\bigg)=
\frac{1}{2 \pi t_c}\big({\rm erfc}\big( X+\sqrt c Y^2\big)-{\rm erfc}\big(X+b_c Y^2\big)\big).
\end{equation} See Figure \ref{Fig 3} for the limiting asymptotic behavior of $\frac{1}{n}R_{n,N}^1$  in \eqref{main results4}.

\begin{remark}
The limiting kernel for $\tau=\frac{1}{4}$ \eqref{thm1.32} in Theorem \ref{main theorem3} with $X=-\frac{(b_c+\sqrt c)Y^2}{2}$ also appears in the weak non-Hermiticity regime of elliptic ensembles \cite{Fyodorov 1982, Akemann 2018} and a  class of normal, almost-Hermitian random matrix ensembles \cite{Ameur 2020, Ameur 2023}.
\end{remark}

\begin{figure}
\centering
  \includegraphics[width=0.15\textwidth]{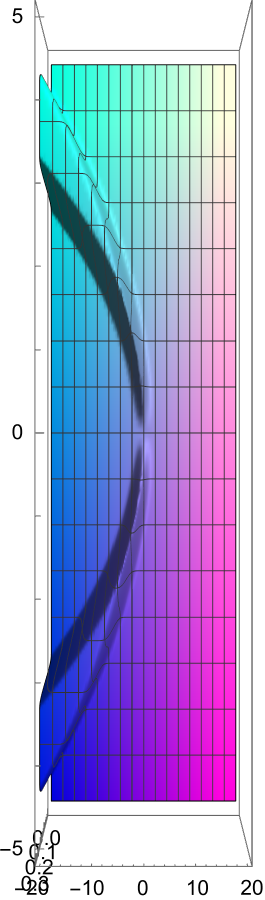} \qquad\qquad\qquad\qquad\qquad\includegraphics[width=0.4\textwidth]{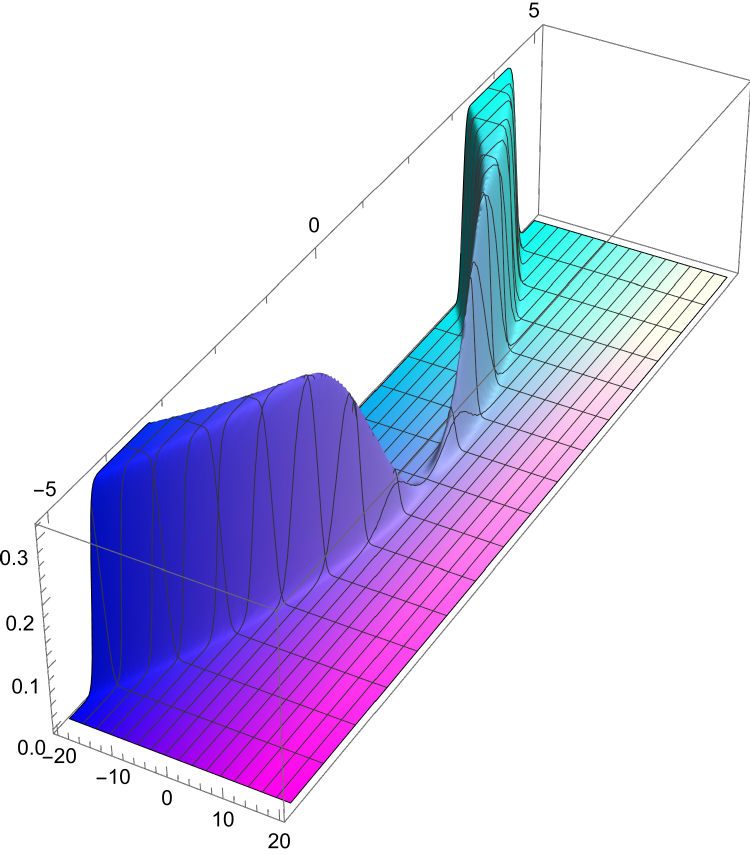}
\caption{The plot of the leading term in the right hand side of \eqref{main results4}  for $a=c=1$. Here $-20\leq X \leq 20$, $-5\leq Y \leq 5$.}\label{Fig 3}
\end{figure}

{\bf Structure of the rest of the paper:} In Section \ref{section 2}, we obtain the asymptotics of the orthogonal polynomials and the norming constants up to the first sub-leading term by using Riemann-Hilbert problems. For the proofs, the main tools are the generalized version of the Christoffel-Darboux identity (the C-D identity) and the Riemann-Hilbert problems. In sections~\ref{section 3.1} and  \ref{section 3}, we prove Theorem~\ref{main theorem}, and in Section~\ref{section 4}, we prove Theorem~\ref{main theorem3}. In Appendix~\ref{appendix a}, we obtain some relations in the Painlev\'e II Riemann-Hilbert problem. In Appendix~\ref{appendix b}, we state the proof of the Theorem~\ref{thm 21}. In Appendix~\ref{appendix c}, we state the  proof of Theorem~\ref{Thm 41}.

\section{The C-D Identity, Orthogonal Polynomials and Norming Constants}\label{section 2}
In this section, we state the generalized version of the Christoffel-Darboux identity (C-D identity), and the fine asymptotics of the orthogonal polynomials and the norming constants by using Riemann-Hilbert problems. In Subsection \ref{section 2.1} we introduce the C-D identity from \cite{byun 2021}, list the necessary facts from \cite{Ba 2015} and define the (complex) logarithmic potential, i.e. the $g$-function; In Subsection \ref{section 2.2} we determine the asymptotics of the  orthogonal polynomials $p_n$ up to the first sub-leading term. The leading order has been computed through the corresponding Riemann-Hilbert problem and its analysis in \cite{Ba 2015}. Here we make use of the ``partial Schlesinger transform'' developed in \cite{Bertola 2008} to refine this analysis and derive the asymptotic behavior of $p_n$ in the subleaing orders, which is necessary to compute the local correlation kernel in Theorems~\ref{main theorem} and \ref{main theorem3}; In Subsection \ref{section 2.3} we derive several recurrence relations through which we obtain the norming constant $h_n$ and the relations among $p_n$, $p_{n-1}$ and $p_{n+1}$.

\subsection{The C-D identity and the $g$-function}\label{section 2.1}

We recall that the monic orthogonal polynomial $p_n(z)=p_{n,N}(z)$ of degree $n$ is defined by \eqref{eq ortho4}.
Let 
\begin{equation}\label{def psi}
    \psi_n(z):=(z-a)^{Nc}p_n(z),\quad n=0,1,\dots,
\end{equation}
and write
\begin{equation}\label{def prekernel}
{\cal K}_{n}(z,{\zeta}):=\ee^{-Nz\overline\zeta}\sum_{k=0}^{n-1}\frac{1}{h_k}\psi_k(z)\overline{\psi_k(\zeta)},    
\end{equation}
where ${\cal K}_{n}(z,{\zeta})$, sometimes called the {\it pre-kernel}, is purely analytic in $z$ and anti-analytic in $\zeta$.

Using the definition of the correlation kernel in \eqref{def kernel} and the explicit expression of $Q$ in \eqref{def Q}, the correlation kernel can be written as
\begin{equation}\label{relation of kernels}
{\bf K}_{n}(z,{\zeta})={\bf K}_{n,N}(z,{\zeta})
=\frac{\ee^{Nz\overline\zeta}}{\ee^{\frac{N}{2}|z|^2+\frac{N}{2}|\zeta|^2}}\frac{|z-a|^{Nc}|\zeta-a|^{Nc}}{(z-a)^{Nc}(\overline\zeta-a)^{Nc}}{\cal K}_{n}(z,{\zeta}).
    \end{equation}

    \begin{thm}\label{thm 21}
Suppose that $a \not=0$. Then we have the following form of the Christoffel-Darboux identity:
\begin{align} \label{CDI_v3}
	\begin{split}
 \overline{\partial}_\zeta {\cal K}_{n}(z,\zeta)
		&=e^{ -N z \bar{\zeta} }  \frac{1}{ \frac{n+Nc}{N}h_{n-1}-h_{n}  }	\overline{\partial}_\zeta \overline{ \psi_{n}(\zeta) }    \Big( \psi_n(z)-z \psi_{n-1}(z) \Big)
		\\
		&\qquad -e^{ -N z \bar{\zeta} } \frac{p_{n+1}(a)}{p_n(a)} \frac{N\,h_n/h_{n-1} }{ \frac{n+Nc+1}{N} h_n-h_{n+1}   } \overline{ \psi_{n-1}(\zeta) }  \Big(  \psi_{n+1}(z)-z \psi_n(z)  \Big).
	\end{split}
\end{align}
\end{thm}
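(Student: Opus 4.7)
The plan is to prove the identity by first extracting the rank-one structure of multiplication by $z$ for this weight, and then converting the resulting recurrence into a Christoffel--Darboux-type telescoped sum for $\bar\partial_\zeta {\cal K}_n$. Starting from $\partial_{\bar z} e^{-NQ(z)} = -N(z - \frac{c}{\bar z - a}) e^{-NQ(z)}$, integration by parts in the $\bar z$-direction applied to $\int z p_k \overline{p_j}\, e^{-NQ}\, dA$ for $j \le k$, combined with the decomposition $\overline{p_j(z)} = \overline{p_j(a)} + (\bar z - a)\overline{\tilde p_j(z)}$ where $\tilde p_j(z) := (p_j(z)-p_j(a))/(z-a)$ is a polynomial of degree $<j$, gives
\begin{equation*}
z \psi_k(z) = \psi_{k+1}(z) + c I_k \sum_{j=0}^{k} \frac{\overline{p_j(a)}}{h_j}\psi_j(z),\qquad I_k := \int_{\CC} \frac{p_k(z)}{\bar z - a}\,e^{-NQ(z)}\,dA(z).
\end{equation*}
The crucial collapse to rank one uses $\int p_k \overline{\tilde p_j}\,e^{-NQ}\,dA = 0$ by orthogonality of $p_k$ to polynomials of lower degree. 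Consequently the defects $\psi_n - z\psi_{n-1}$ and $\psi_{n+1} - z\psi_n$ appearing on the right-hand side of the identity are proportional to partial sums $\widetilde S_k(z) := \sum_{j \le k}\overline{p_j(a)}\psi_j(z)/h_j$ whose successive increments consist of a single term, which is the feature that will allow clean telescoping.

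Next I would perform a summation-by-parts on the series for $\bar\partial_\zeta {\cal K}_n$. Using $\bar\partial_\zeta[e^{-Nz\bar\zeta}\overline{\psi_k(\zeta)}] = e^{-Nz\bar\zeta}(\overline{\psi_k'(\zeta)} - Nz\overline{\psi_k(\zeta)})$, multiplying through by $(\bar\zeta - a)$, and substituting the conjugate of the rank-one recurrence $(\bar\zeta - a)\overline{\psi_k(\zeta)} = \overline{\psi_{k+1}(\zeta)} + c\overline{I_k}\overline{\widetilde S_k(\zeta)} - a\overline{\psi_k(\zeta)}$ together with the dual identity $(z-a)\psi_k'(z) = (Nc+k)\psi_k(z) + \sum_{l<k}\tilde\beta_l^{(k)}\psi_l(z)$ (derived from $\psi_k' = Nc(z-a)^{Nc-1}p_k + (z-a)^{Nc}p_k'$ by expanding $(z-a)p_k'$ in the polynomial basis), one collects the resulting double sum by $\overline{\psi_m(\zeta)}$. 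Thanks to the one-term increment of $\widetilde S_k$, the intermediate contributions cancel in pairs via Abel summation, leaving only boundary contributions at $m = n-1$ and $m = n$. Dividing back by $(\bar\zeta - a)$ and repackaging via $\overline{\psi_n'(\zeta)} = \big((Nc+n)\overline{\psi_n(\zeta)} + \cdots\big)/(\bar\zeta - a)$ reproduces the two-term structure on the right-hand side.

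Finally, the precise constants are pinned down by explicit IBP-computable integrals. Identities such as $\int \psi_n' \overline{\psi_{n-1}}\, e^{-N|z|^2}\, dA = Nh_n$ and $\int (\bar z - a)\psi_n \overline{\psi_{n-1}'}\,e^{-N|z|^2}\,dA = 0$ (the latter because $(z-a)\psi_{n-1}' \in \mathrm{span}\{\psi_0, \ldots, \psi_{n-1}\}$), together with the analogous computation at level $n+1$, determine the denominators $(n+Nc)h_{n-1}/N - h_n$ and $(n+Nc+1)h_n/N - h_{n+1}$, while the factor $p_{n+1}(a)/p_n(a)$ arises from evaluating the rank-one recurrence at $z = a$, which expresses $cI_n$ in terms of $p_{n+1}(a)/p_n(a)$. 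The main obstacle throughout is the Abel-summation bookkeeping: unlike the 1D setting, where the three-term recurrence makes the Christoffel--Darboux sum telescope automatically, the 2D Hessenberg-plus-rank-one structure here requires careful tracking of the partial sums $\widetilde S_k$, and the fact that only two boundary terms ultimately survive is a nontrivial reflection of the potential $Q$ possessing a single logarithmic singularity at $a$.
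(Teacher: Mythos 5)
Your route is genuinely different from the paper's. The paper (Appendix~\ref{appendix b}) bidiagonalizes both operators at once: it introduces the lower-triangular $L$ (built from $L_{j,j-1}=-\langle z\psi_j|\phi_0\rangle/\langle z\psi_{j-1}|\phi_0\rangle$) so that $z(I+L)\Psi=B\Psi$ with $B$ upper bidiagonal, and the upper-triangular $U$ (built from $U_{j,j+1}=-\tfrac{h_{j+1}}{h_j}\tfrac{p_j(a)}{p_{j+1}(a)}$) so that $\partial(I+U)\Phi=A\Phi$ with $A$ lower bidiagonal, then establishes the one IBP identity $\tfrac1N A(I+L^*)=(I+U)B^*$, and extracts the two boundary terms in \eqref{CDI_v3} as commutators $[\Pi_n,A^*]$ and $[\Pi_n,I+U^*]$ acting on $\widehat\Psi$. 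You instead extract the rank-one structure directly: your recurrence $z\psi_k=\psi_{k+1}+cI_k\widetilde S_k$ is correct, and it is in fact equivalent to the paper's bidiagonalization (one checks $L_{j,j-1}=-I_j/I_{j-1}$ and then $z\widetilde\psi_j$ becomes bidiagonal because the $\widetilde S$-tails cancel), so you are finding the same structure but unpackaged. The paper's version buys you an automatic, index-free telescoping via projection commutators; yours buys explicit intermediate objects ($I_k$, the partial-sum kernels $\widetilde S_k$) and a derivation that reads more like the classical 1D Christoffel--Darboux computation. Your boundary computations are correct: $\int\psi_n'\overline{\psi_{n-1}}\,e^{-N|z|^2}\,dA=Nh_n$ by one IBP and $z\psi_{n-1}=\psi_n+\text{lower}$, $\int(\bar z-a)\psi_n\overline{\psi_{n-1}'}\,e^{-N|z|^2}\,dA=0$ since $(z-a)\psi_{n-1}'\in\mathrm{span}\{\psi_0,\dots,\psi_{n-1}\}$, and the factor $p_{n+1}(a)/p_n(a)$ does come out by evaluating your recurrence at $z=a$, which gives $ap_k(a)-p_{k+1}(a)=cI_k\sum_{j\le k}|p_j(a)|^2/h_j$.

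The one place that needs to be shored up is the Abel-summation step, which as written attributes the full cancellation of interior terms to ``the one-term increment of $\widetilde S_k$'' alone. That is not sufficient: the interior terms coming from $(\bar\zeta-a)S_1$ carry the expansion coefficients $\tilde\beta_l^{(k)}$ of $(z-a)\psi_k'$, while those coming from $Nz(\bar\zeta-a)S_2$ carry the $I_k$'s, and the cancellation between the two families is not automatic from the index structure — it requires the IBP link $\int(z-a)\psi_k'\overline{\psi_l}\,e^{-N|z|^2}\,dA = -h_k\delta_{kl}+N\int(z-a)\psi_k\,\overline{z\psi_l}\,e^{-N|z|^2}\,dA$, which shows $\tilde\beta_l^{(k)}=I_k\,g_l-\tfrac{Nah_k}{h_{k-1}}\delta_{l,k-1}$ for some $g_l$ independent of $k$ (so that $(z-a)\psi_k'$ has the \emph{same} rank-one tail, scaled by $I_k$, as $z\psi_k$). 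This is exactly the content of the paper's relation $\tfrac1N A(I+L^*)=(I+U)B^*$ and is what actually forces only the $m=n-1,n$ terms to survive. You invoke IBP to produce the $z$-recurrence and again at the very end for normalization, but the same tool is needed in the middle for the telescoping, and until that relation is stated the cancellation claim is an assertion rather than a proof. With it added, your argument closes.
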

We note that $h_n$ is the norming constant defined by \eqref{eq ortho4}.
Theorem \ref{thm 21} has been proved in \cite{byun 2021} and also used in \cite{byun 2022}.  For the convenience of the readers, we put the proof in Appendix~\ref{appendix b}. For the radially symmetric case when $a=0$, we have
$p_j(z)=z^j$ and $h_j= \frac{\Gamma(j+Nc+1)}{ N^{j+Nc+1} }$. As the correlation kernel can be computed directly, we exclude such case.

From Theorem \ref{thm 21}, to obtain the asymptotic behavior of the correlation kernel, we need the asymptotic behavior of $p_n$ in the scaling limit
\begin{equation}
    n,N\to\infty, \quad \frac{n}{N}=t=t_c+{\cal O}\Big(\frac{1}{N^{2/3}}\Big). 
\end{equation}
In particular, we will need the asymptotic behavior of $p_n$ up to the first sub-leading term in large $N$ expansion.

The critical parameter value $t_c$ corresponds to the droplet with the merging singularity, as shown in Figure~\ref{Figure 1}.  We will need the evolution of the droplet before ($t <t_c$) and after ($t >t_c$) the merging singularity or, more accurately, the (complex) logarithmic potential $g(z)$ generated by the limiting measure of the Coulomb particles. 

Let us introduce some notations and results from \cite{Ba 2015}.  We define the function $V(z)$ by 
$$V(z):=az-c\log(z-a)+(c+t)\log z,$$
and the function $\phi(z)$ by 
\begin{equation}\label{def phi}
 \phi(z):=az-(t+c)\log(z)+c\log(z - a)+t \ell,  
\end{equation} where the constant $\ell$ will be defined below.
Let $b$ and $\beta$ be two critical points of $\phi(z)$ such that
\begin{equation}
b:=\frac{a^2+t+\sqrt{(t-a^2)^2-4a^2c}}{2a},\quad \beta:=\frac{a^2+t-\sqrt{(t-a^2)^2-4a^2c}}{2a},    \end{equation} where we take the principal branch of the roots. We set the constant $\ell$ by
\begin{equation}\label{def ell}
  \ell:=\frac{1}{t}\big((t+c)\log\beta-c\log(\beta-a)-a\beta \big) 
\end{equation}
such that  $\phi(\beta)=0$.  Note that $\ell$ is real for $t\geq t_c$ and $\ell$ is complex for $t<t_c$.
We also have
\begin{equation}\label{def b and beta}
b=b_c+\frac{c^{1/4}}{\sqrt a}\sqrt{t-t_c}+{\cal O}(t-t_c),\quad \beta=b_c-\frac{c^{1/4}}{\sqrt a}\sqrt{t-t_c}+{\cal O}(t-t_c).    \end{equation}
These $b$ and $\beta$ are complex for $t<t_c$ and we choose them such that ${\rm Im}(\beta)>0>{\rm Im}(b)$ in this case. See Figure \ref{figue bbeta} for the illustration of $b$ and $\beta$. 

We define the curve ${\cal B}$ such that ${\cal B}$ satisfies the following conditions: ${\cal B}$ is a simple closed curve enclosing $0$ and $a$ with $\beta\in{\cal B}$ and , on the curve ${\cal B}$, the following equality is satisfied, 
\begin{equation}
    {\rm Re}\Big(\log z + \frac{c}{t}\log\Big(\frac{z}{z-a}\Big)-\frac{a z}{t}\Big) ={\rm Re}(\ell).
\end{equation}
More details about the curve is shown in \cite[Section~2.3]{Ba 2015}. See Figure \ref{figue scurve} for the plot of ${\cal B}$. 

We define the $g$-function by
\begin{equation}\label{def gfunction}
g(z):=\begin{cases}\displaystyle
\frac{1}{2t} \big(V(z)-\phi(z)+t\ell\big)=\log z+\frac{c}{t}\log \left(\frac{z}{z-a}\right),\quad &z\in{\rm Ext}({\cal B}),\\ \displaystyle
\frac{1}{2t} \big(V(z)+\phi(z)+t\ell\big)=
 \frac{az}{t}+\ell,\quad &z\in{\rm Int}({\cal B}),
\end{cases}    
\end{equation}  where ${\rm Ext}({\cal B})$ and ${\rm Int}({\cal B})$ stands for the exterior (including $\infty$) and the interior of the Jordan curve ${\cal B}$.

\begin{figure}
\centering
  \includegraphics[width=0.48\textwidth]{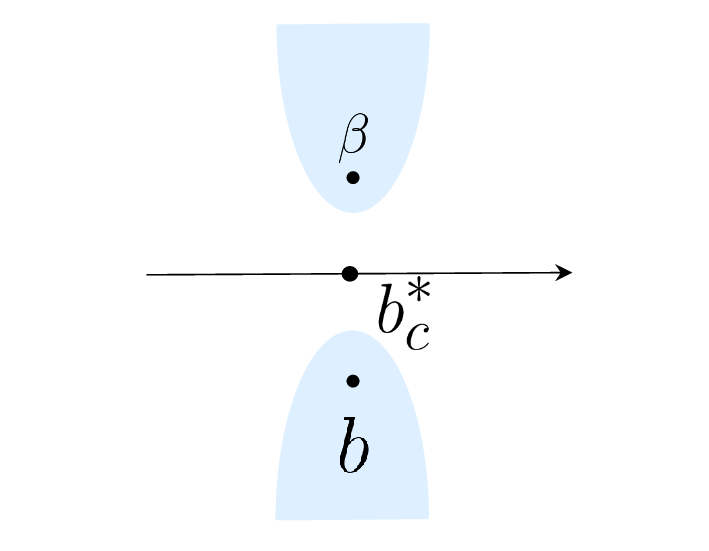} \includegraphics[width=0.48\textwidth]{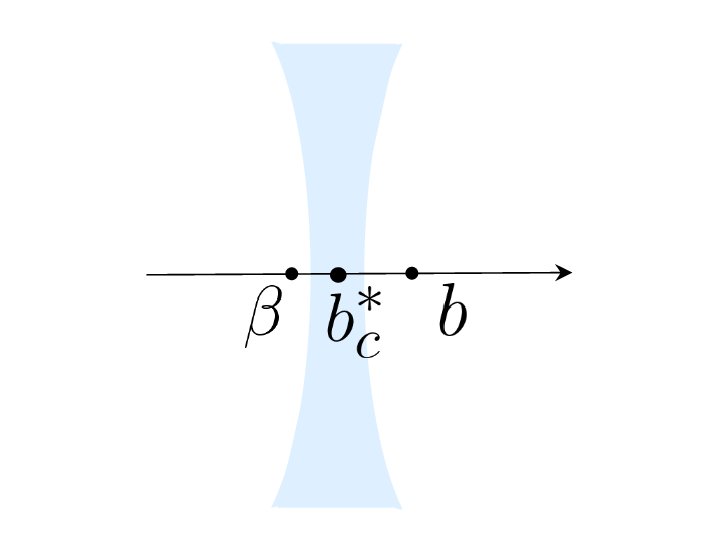}
\caption{The positions of $\beta, b$ and $b_c^*$ in $z$-coordinate for $t<t_c$(left) and $t>t_c$(right). The shaded parts represent the regions of the droplet.}\label{figue bbeta}
\end{figure}

\begin{figure}
\begin{center}
\includegraphics[width=0.52\textwidth]{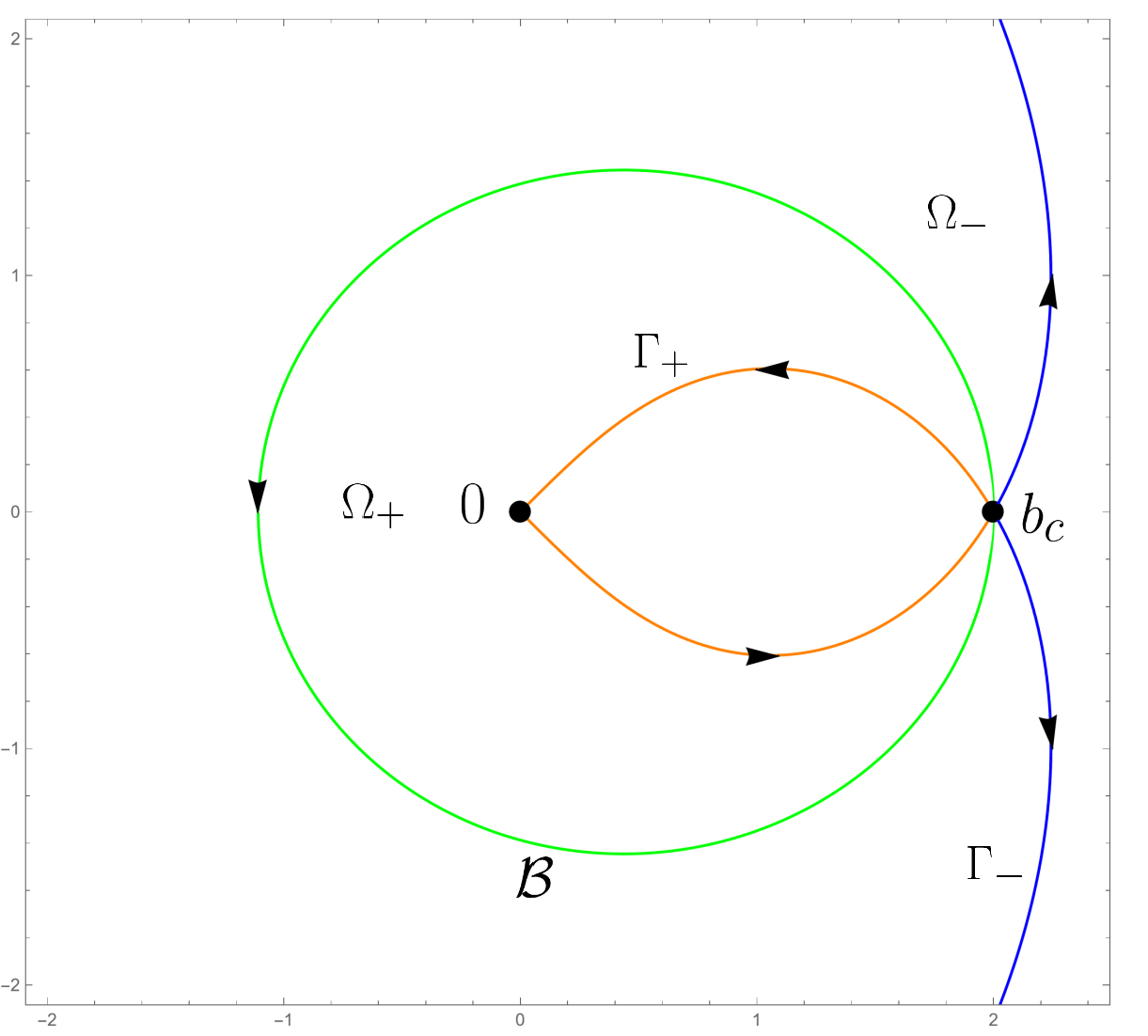}
\end{center}
\caption{When $t=t_c$, the figure describes the green curve ${\cal B}$, the orange line $\Gamma_+$, the blue line $\Gamma_-$, and the black dots $b_c$ and the origin. All the curves are assigned a direction as shown. All the figures are numerically plotted for $a=c=1$. }\label{figue scurve}
\end{figure}

\begin{remark} For readers with some knowledge of \cite{Ba 2015}, we note that our definition of $g(z)$ matches the definition for $t\geq t_c$, the post-critical regime in the reference, while for $t<t_c$, our definition does not match the definition of the pre-critical regime in the reference.  Since we are only interested in the vicinity of the criticality, i.e. $t-t_c ={\cal O}(N^{-2/3})$, such deviation does not affect our analysis. \end{remark}

\subsection{Fine asymptotics of $p_n$}\label{section 2.2}

Let $\Gamma$ be a simple closed curve enclosing $0$ and $a$, we define the matrix function 
\begin{equation}\label{def Yz}
Y(z):=Y_n(z)=\begin{pmatrix}\displaystyle
p_{n}(z)&\displaystyle\frac{1}{2\pi\ii}\int_\Gamma\frac{p_{n}(w)\omega_{n,N}(w)}{w-z}\dd w\\ q_{n}(z)&\displaystyle\frac{1}{2\pi\ii}\int_\Gamma\frac{q_{n}(w)\omega_{n,N}(w)}{w-z}\dd w
\end{pmatrix},    
\end{equation}
where \begin{equation}\label{def omega}
\omega_{n,N}(z):=\ee^{-NV(z)}=\frac{(z-a)^{Nc}\ee^{-Naz}}{z^{Nc+n}},    
\end{equation} and $q_{n}(z):=q_{n,N}(z)$ is the unique polynomial of degree $n-1$ such that
$$\frac{1}{2\pi\ii}\int_\Gamma\frac{q_{n}(w)\omega_{n,N}(w)}{w-z}\dd w=\frac{1}{z^n}\left(1+{\cal O}\Big(\frac{1}{z}\Big)\right).$$
Then $Y$ satisfies the following Riemann-Hilbert problem:
\begin{equation}\label{rhp for y}
\begin{cases}
 Y_+(z)=  Y_-(z)\begin{pmatrix}
1&\omega_{n,N}(z)\\0&1
\end{pmatrix},&\quad z\in \Gamma,\\
Y(z)=\left(I+{\cal O}(z^{-1})\right)z^{n\sigma_3},&\quad z\to\infty,\\
Y(z)\mbox{\quad is holomorphic},& \mbox{otherwise}.\end{cases}
 \end{equation}

From now on, let $\Gamma$ exactly match ${\cal B}$. We choose $\Gamma_+$ to be the steepest descent path from $\beta$ inside ${\rm Int}({\cal B})$ such that ${\rm Re}(\phi(z))<0$ on $\Gamma_+$, and $\Gamma_-$ to be the steepest descent path from $\beta$ inside ${\rm Ext}({\cal B})$ such that ${\rm Re}(\phi(z))<0$ on $\Gamma_-$. The domains $\Omega_\pm$ are defined by the open sets enclosed by ${\cal B}$ and $\Gamma_\pm$ respectively. See Figure \ref{figue scurve}.

Let us define the matrix $A(z)$ by 
\begin{align}\label{def A}
    A(z):=\begin{cases}
    \ee^{-\frac{tN\ell}{2}\sigma_3}Y(z)\ee^{-tN(g(z)-\frac{\ell}{2})\sigma_3},& z\in\CC\setminus\Omega_+\cup\Omega_-,\\
     \ee^{-\frac{tN\ell}{2}\sigma_3}Y(z)\begin{pmatrix}
1&0\\-1/\omega_{n,N}(z)&1
\end{pmatrix}\ee^{-tN(g(z)-\frac{\ell}{2})\sigma_3},&z\in\Omega_+,\\
      \ee^{-\frac{tN\ell}{2}\sigma_3}Y(z)\begin{pmatrix}
1&0\\1/\omega_{n,N}(z)&1
\end{pmatrix}\ee^{-tN(g(z)-\frac{\ell}{2})\sigma_3},& z\in\Omega_-.
    \end{cases}
\end{align}
Then $A$ satisfies the  Riemann-Hilbert problem
\begin{equation}\label{jump for A}
\begin{cases}
 A_+(z)=  A_-(z)\begin{pmatrix}
1&0\\\ee^{N\phi(z)}&1
\end{pmatrix},&\quad z\in \Gamma_\pm,\\
 A_+(z)=  A_-(z)\begin{pmatrix}
0&1\\-1&0
\end{pmatrix},&\quad z\in {\cal B},\\
A(z)=I+{\cal O}(z^{-1}),&\quad z\to\infty,\\
A(z)\mbox{\quad is holomorphic},& \mbox{otherwise}.\end{cases}
 \end{equation}
Since ${\rm Re}(\phi(z))<0$ along $\Gamma_\pm$, the jump of $A$ converges, as $N\to\infty$, to the jump of $\Phi$ that we define by the  Riemann-Hilbert problem
 \begin{equation}
\begin{cases}
 \Phi_+(z)= \Phi_-(z)\begin{pmatrix}
0&1\\-1&0
\end{pmatrix},&\quad z\in {\cal B},\\
\Phi(z)=I+{\cal O}(z^{-1}),&\quad z\to\infty,\\
\Phi(z)\mbox{\quad is holomorphic},& \mbox{otherwise}.\end{cases}
 \end{equation} 
 A solution to the above Riemann-Hilbert problem is given by
 \begin{align}\label{eq Phi}
    \Phi(z):=\begin{cases}
    I,& z\in {\rm Ext}({\cal B}),\\
   \begin{pmatrix}
0&1\\-1&0
\end{pmatrix} ,&z\in{\rm Int}({\cal B}).
    \end{cases}
\end{align}

 When $z$ is close to the points $b$ and $\beta$, the convergence of the jump matrix of $A$ to the jump matrix of $\Phi$ is getting worse. Therefore we need the local parametrices  around $b$ and $\beta$ that satisfies the exact jump conditions of $A$ in \eqref{jump for A}.

There exists a fixed disk, $D_c$, centered at $b_c$ such that there exists the univalent map $\xi: D_c\to \CC$ satisfying
\begin{equation}\label{def xixi}
\xi(\beta)=-\xi(b)\quad \mbox{and}\quad -\frac{8\ii}{3}\big(\xi(z)-\xi(\beta)\big)^2\big(\xi(z)+2\xi(\beta)\big)=
 N\phi(z),\quad z\in D_c.
   \end{equation} 
Here, the branch of the solution to the cubic equation is taken such that $\xi(\beta) >0 $ for $t<t_c$ and  $-\ii\xi(\beta) >0 $ for $t>t_c$, see Figure \ref{zeta betab}. The existence of such $\xi$ is shown in \cite[Section 6]{Ba 2015}. Note that $\xi$ maps $\Gamma_\pm$ into $\gamma_\pm$ and ${\cal B}$ into $\RR$. See Figure \ref{Figure 2} for the contours $\gamma_\pm$. 

\begin{figure}
\centering
  \includegraphics[width=0.40\textwidth]{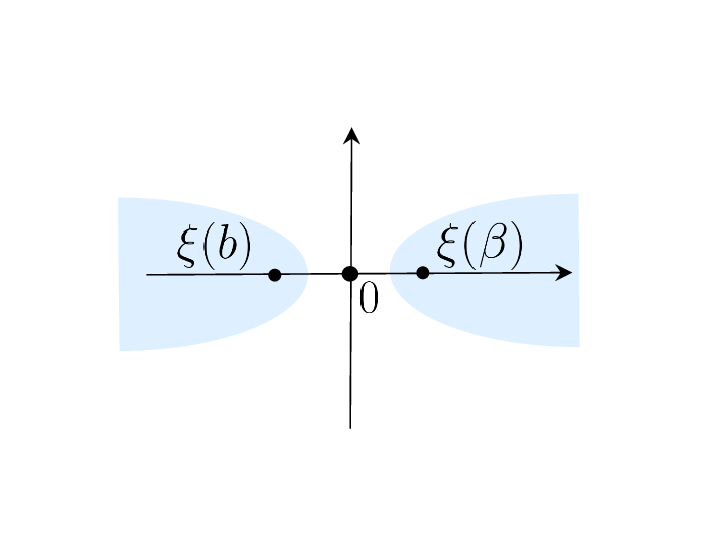} \includegraphics[width=0.40\textwidth]{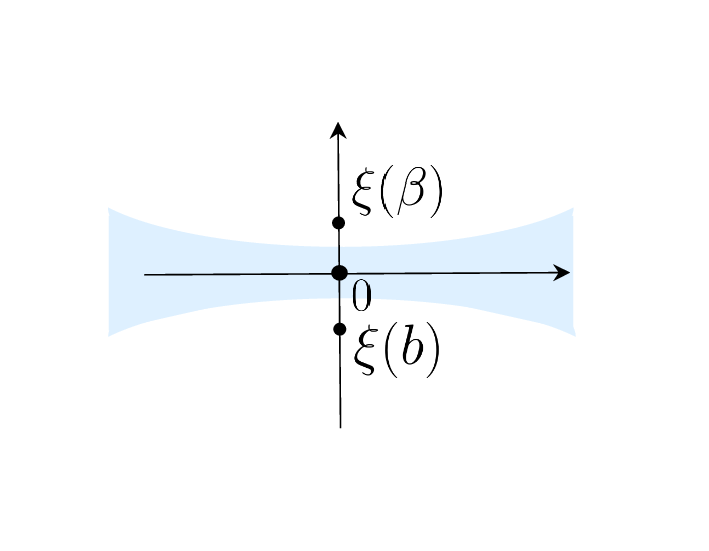}
\caption{The positions of $\xi(\beta)$, $\xi(b)$ and $\xi(b_c^*)=0$ for $t<t_c$ (left) and $t>t_c$ (right). The shaded parts represent the image of the droplet under $\xi$.}\label{zeta betab}  
\end{figure}

Let us define $\hat s$ such that \eqref{def xixi} can be written as
\begin{equation}\label{def xi}
-2\ii\left(\frac{4}{3}\xi(z)^3+\hat s \xi(z)+\frac{8}{3}\xi(\beta)^3\right)=
 N\phi(z),\quad z\in D_c.
\end{equation}
The parameter $\hat s$ is given by 
\begin{equation}\label{def zetabeta}
\hat s:=-4\xi(\beta)^2=s(1+{\cal O}(t-t_c)),\quad \text{where}\quad s:=\frac{\gamma_cN^{2/3}}{2b_c}(t-t_c).
\end{equation} 
The parameter $\gamma_c$ is defined in \eqref{def gammacl}. Note that $\hat s = s(1+{\cal O}(N^{-2/3}))$ for bounded $s$.

Let $b_c^*$ be the preimage of the origin under $\xi$,
\begin{equation}\label{def bcstar}
b_c^*:=\xi^{-1}(0).    
\end{equation}
By the definition of $b_c^*$ above and \eqref{def xixi}  we have 
\begin{equation}\label{def b star0}
 -\ii\frac{16}{3}\xi(\beta)^3=N\phi(b_c^*),\quad    \phi(b)=2\phi(b_c^*).
\end{equation}
 It follows that $$a^2+t-2ab_c^*-(t+c)\big(\log(t+c)-2\log b_c^*\big)+c\big(\log c-2\log(b_c^*-a)\big)=0.$$ Hence, we have
\begin{equation}\label{def b star}
b_c^*=b_c+\frac{s}{2\gamma_cN^{2/3}}+{\cal O}\Big(\frac{1}{N^{4/3}}\Big).           \end{equation} See Figure \ref{figue bbeta} for the illustration of $b_c^*$.

Using \eqref{def zetabeta}, we have
\begin{equation}\label{zeta beta}
   \ii\frac{16}{3}\xi(\beta)^3=\begin{cases}\displaystyle
   \frac{2\hat s^{3/2}}{3},&\hat s>0,\\\displaystyle
   \frac{2(-\hat s)^{3/2}}{3}\ii,& \hat s\leq 0.
   \end{cases}
\end{equation}

Let us define 
$$r_1:=\frac{\xi'(b_c^*)}{\ii N^{1/3}},\quad r_2:=\frac{\xi''(b_c^*)}{\ii N^{1/3}},\quad r_3:=\frac{\xi'''(b_c^*)}{\ii N^{1/3}},$$
so that we have
\begin{equation}\label{zeta expanssion}
\frac{\xi(z)}{\ii N^{1/3}}=r_1(z-b_c^*)+\frac{r_2}{2}(z-b_c^*)^2 +\frac{r_3}{6}(z-b_c^*)^3+{\cal O}(z-b_c^*)^4,\quad \mbox{as $z\to b_c^*$.}
\end{equation}

By the definition of $\xi$ in \eqref{def xixi}, the facts that
\begin{equation}
\begin{aligned}
\xi'(b_c^*)&=\frac{N}{-2\ii \hat s}\phi(b_c^*)',\quad \xi''(b_c^*)=\frac{N}{-2\ii \hat s}\phi(b_c^*)'',
\quad \xi'''(b_c^*)=\frac{N}{-2\ii \hat s}\phi(b_c^*)'''-\frac{8}{\hat s}\big(\xi'(b_c^*)\big)^3,
\end{aligned}\end{equation}  and \eqref{def phi}, we have 
\begin{equation}\label{eq-gamma123}
r_1=-\frac{1}{\gamma_c}+{\cal O}\Big(\frac{1}{N^{2/3}}\Big),\quad r_2=\frac{t_c} {2ab_c\sqrt c \gamma_c}+{\cal O}\Big(\frac{1}{N^{2/3}}\Big), \quad
r_3=-\frac{ (9a^2+12a\sqrt c+16c) }{8b_c^2c\gamma_c}+{\cal O}\Big(\frac{1}{N^{2/3}}\Big).
    \end{equation}

Let us define the following Riemann-Hilbert problem for the matrix function $\Pi(\xi;\hat s)$,
\begin{equation}
\begin{cases}
\Pi_+(\xi;\hat s)= \Pi_-(\xi;\hat s),&\quad \xi\in \RR,\\
\Pi_+(\xi;\hat s)= \Pi_-(\xi;\hat s)\begin{pmatrix}
1&0\\\ee^{2\ii\left(\frac{4}{3}\xi^3+\hat s\xi\right)}&1
\end{pmatrix},&\quad \xi\in \gamma_+,\\
 \Pi_+(\xi;\hat s)= \Pi_-(\xi;\hat s)\begin{pmatrix}
1&-\ee^{-2\ii\left(\frac{4}{3}\xi^3+\hat s\xi\right)}\\0&1
\end{pmatrix},&\quad \xi\in \gamma_-,\\
\Pi(\xi;\hat s)=I+{\cal O}(\xi^{-1}) ,&\quad  \xi\to \infty.
\end{cases}
 \end{equation} 
 
The solution of the above Riemann-Hilbert problem can be written as 
\begin{equation}\label{pi to psi}
 \Pi(\xi;\hat s)=\widetilde\Psi(\xi;\hat s)\ee^{\ii\left(\frac{4}{3}\xi^3+\hat s\xi\right)\sigma_3},   
\end{equation} where $\widetilde\Psi(\xi;\hat s)$ is the solution of the Riemann-Hilbert problem in \eqref{rhp phi}.
 
 Let us define  ${\cal P}(z)$ by
\begin{equation}\label{def p}
 {\cal P}(z)=   \begin{cases}
    \ee^{-\ii\frac{8}{3}\xi(\beta)^3\sigma_3}H(z)\Pi(\xi(z);\hat s)\ee^{\ii\frac{8}{3}\xi(\beta)^3\sigma_3},&\quad z\in D_c\cap{\rm Int}(\cal B), \\
    \begin{pmatrix}
0&1\\-1&0
\end{pmatrix}\ee^{-\ii\frac{8}{3}\xi(\beta)^3\sigma_3}H(z)\Pi(\xi(z);\hat s)\ee^{\ii\frac{8}{3}\xi(\beta)^3\sigma_3}\begin{pmatrix}
0&-1\\1&0
\end{pmatrix},&\quad z\in D_c\cap{\rm Ext}(\cal B).
    \end{cases}
\end{equation} 
Here $H(z)$ is a holomorphic matrix function that will be determined in Proposition \ref{proposition sh}. 
 One can check that $\Phi(z){\cal P}(z)$ satisfies the jump conditions \eqref{jump for A} of $A$ in $D_c$.

In order to obtain the asymptotics of $p_n$ up to the first sub-leading term, we need the first two sub-leading terms of $\Pi(\xi;\hat s)$.
Let $\Pi(\xi;\hat s)$ be given by
\begin{equation}\label{def pi1pi2}
\Pi(\xi;\hat s)=I+\frac{\Pi_1(\hat s)}{2\ii\xi}+\frac{\Pi_2(\hat s)}{\xi^2}+{\cal O}\left(\frac{1}{\xi^3}\right),\end{equation} where
\begin{equation*}
    \Pi_1(\hat s)= \begin{pmatrix}
r(\hat s)&q(\hat s)\\-q(\hat s)&-r(\hat s)
\end{pmatrix},\quad \Pi_2(\hat s)= \begin{pmatrix}
p_{11}(\hat s)&p_{12}(\hat s)\\p_{21}(\hat s)&p_{22}(\hat s)
\end{pmatrix}.
\end{equation*} The series above is in integer powers of $\xi$ because the monodromy of $\Pi(\xi;\hat s)$ converges to the identity exponentially fast as $z\to\infty$.
Here we have
\begin{equation}\label{relation up}
\begin{aligned}
r(\hat s)&=q'(\hat s)^2-\hat sq(\hat s)^2-q(\hat s)^4,\\ p_{12}(\hat s)&=p_{21}(\hat s)=\frac{q(\hat s)r(\hat s)+q'(\hat s)}{4}, \\
 p_{11}(\hat s)&=p_{22}(\hat s)=\frac{q(\hat s)^2-r(\hat s)^2}{8}.
\end{aligned}    
\end{equation}

The relations in \eqref{relation up} are derived in Appendix~\ref{appendix a}.

The following proposition derives a rational matrix function $S(z)$ with the only pole at $b_c^*$ and a holomorphic matrix function $H(z)$ such that the modified global parametrix $\Phi(z)S(z)$ matches with the local parametrix $\Phi(z){\cal P}(z)$ along $\partial D_c$. This procedure of improving the local parametrix is known as the ``partial Schlesinger transform'' \cite{Bertola 2008}. The construction of $H_1(z)$ and $S(z)$ with a simple pole and the constant matrix $S_{11}(\hat s)$ was also described in \cite{Ba 2015}. Here we construct $H_1(z)$, $H_2(z)$ and $S(z)$ with a simple pole and double poles such that the modified global parametrix better matches with the local parametrix along $\partial D_c$.

\begin{prop}\label{proposition sh}
Let $\Pi_1(\hat s)$ and $\Pi_2(\hat s)$ be given in \eqref{def pi1pi2}.   We assume  $t-t_c = {\cal O}(N^{-2/3})$ and, therefore, $\hat s={\cal O}(1)$ as $N$ grows to infinity.  Let $S(z)$ be a rational matrix function  with the only pole at $b_c^*$ given by
\begin{equation}\label{def SS}
S(z):=\ee^{-\ii\frac{8}{3}\xi(\beta)^3\sigma_3}\Big(I+\frac{S_{11}(\hat s)+S_{21}(\hat s)}{z-b_c^*}+\frac{S_{22}(\hat s)}{(z-b_c^*)^2}\Big)\ee^{\ii\frac{8}{3}\xi(\beta)^3\sigma_3},   
\end{equation} where \begin{align}\label{def s11}
 S_{11}(\hat s)&=-\frac{\Pi_1(\hat s)}{2r_1N^{1/3}},  \\ \label{def s21}
     S_{21}(\hat s)&=\frac{r_2}{r_1^3N^{2/3}}\begin{pmatrix}
0&p_{12}(\hat s)\\p_{21}(\hat s)&0
\end{pmatrix},\\\label{def s22}
S_{22}(\hat s)&=-\frac{\Pi_2(\hat s)}{r_1^2N^{2/3}}.
   \end{align}
Let $H(z)$ be the matrix function given by \begin{equation}
H(z):=H_2(z)H_1(z),
\end{equation} where
\begin{align}\label{def h1}
H_1(z)&:=I-\frac{\Pi_1(\hat s)}{2\ii\xi(z)}+\frac{S_{11}(\hat s)}{z-b_c^*},\\
\label{def h2}
H_2(z)&:=I+\frac{ S_{21}(\hat s)}{z-b_c^*}+\frac{ S_{22}(\hat s)}{(z-b_c^*)^2}-
\bigg(\frac{\Pi_2(\hat s)}{\xi(z)^2}+\Big(H_1(z)-I\Big)\frac{\Pi_1(\hat s)}{2\ii\xi(z)}\bigg),
\end{align}
such that $H(z)$ is holomorphic at $b_c^*$.
Then there exists a fixed disk $D_c$ centered at $b_c$ such that
\begin{align}\label{eq hcpi3}
    \ee^{-\ii\frac{8}{3}\xi(\beta)^3\sigma_3}H(z)\Pi(\xi(z);\hat s)\ee^{\ii\frac{8}{3}\xi(\beta)^3\sigma_3}&=S(z)\bigg(I+{\cal O}\Big(\frac{1}{N}\Big)\bigg),\quad \mbox{as $N\to\infty$},
\end{align} uniformly on $\partial D_c$. Let us further denote $H(z)$  by \begin{equation}\label{def HH}
H(z)=I+\begin{pmatrix}
h_{11}(z)&h_{12}(z)\\
h_{21}(z)&h_{22}(z)
\end{pmatrix}.
\end{equation}
We have, uniformly over $z\in D_c$,
\begin{equation}\label{eq h11}
\begin{aligned}
h_{11}(z)&=-\frac{r_2r(\hat s)}{4r_1^2N^{1/3}}+\frac{(3r_2^2-2r_1r_3)r(\hat s)(z-b_c^*)}{24r_1^3N^{1/3}}\\
&\quad+\frac{2(9r_2^2-4r_1r_3)p_{11}(\hat s)-(3r_2^2-r_1r_3)(q(\hat s)^2-r(\hat s)^2)}{24r_1^4N^{2/3}}+{\cal O}\Big(\frac{1}{N},\frac{(z-b_c^*)^2}{N^{1/3}}\Big),\\
h_{12}(z)&=-\frac{r_2q(\hat s)}{4r_1^2N^{1/3}}+\frac{(3r_2^2-2r_1r_3)q(\hat s)(z-b_c^*)}{24r_1^3N^{1/3}}+\frac{(9r_2^2-4r_1r_3)p_{12}(\hat s)}{12r_1^4N^{2/3}}+{\cal O}\Big(\frac{1}{N},\frac{(z-b_c^*)^2}{N^{1/3}}\Big),\\
h_{21}(z)&=\frac{r_2q(\hat s)}{4r_1^2N^{1/3}}-\frac{(3r_2^2-2r_1r_3)q(\hat s)(z-b_c^*)}{24r_1^3N^{1/3}}+\frac{(9r_2^2-4r_1r_3)p_{21}(\hat s)}{12r_1^4N^{2/3}}+{\cal O}\Big(\frac{1}{N},\frac{(z-b_c^*)^2}{N^{1/3}}\Big),\\
h_{22}(z)&=\frac{r_2r(\hat s)}{4r_1^2N^{1/3}}-\frac{(3r_2^2-2r_1r_3)r(\hat s)(z-b_c^*)}{24r_1^3N^{1/3}}\\
&\quad+\frac{2(9r_2^2-4r_1r_3)p_{22}(\hat s)-(3r_2^2-r_1r_3)(q(\hat s)^2-r(\hat s)^2)}{24r_1^4N^{2/3}} +{\cal O}\Big(\frac{1}{N},\frac{(z-b_c^*)^2}{N^{1/3}}\Big),\end{aligned}    
\end{equation}
where the big $\cal O$ notation with multiple arguments is defined
by  ${\cal O}(A, B) ={\cal O}({\rm max}(|A|,|B|))$.

\end{prop}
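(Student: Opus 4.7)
I follow the partial Schlesinger transform of \cite{Bertola 2008}: construct a holomorphic $H(z)$ on $D_c$ and a rational $S(z)$ with only a pole at $b_c^*$ so that $H(z)\Pi(\xi(z);\hat s)=S(z)(I+{\cal O}(1/N))$ on $\partial D_c$. Since $\xi(z)={\cal O}(N^{1/3})$ on $\partial D_c$, truncating \eqref{def pi1pi2} after $\Pi_2/\xi^2$ already incurs an ${\cal O}(1/N)$ error, so it suffices to cancel the ${\cal O}(N^{-1/3})$ and ${\cal O}(N^{-2/3})$ contributions to $\Pi-I$ by left multiplication. Substituting the Taylor expansion \eqref{zeta expanssion} of $\xi(z)$ around $b_c^*$ and setting $\alpha_1:=r_2/(2r_1)$, $\alpha_2:=r_3/(6r_1)$, one obtains the principal-part expansions
\begin{equation*}
\frac{1}{2\ii\xi(z)}=-\frac{1-\alpha_1(z-b_c^*)+(\alpha_1^2-\alpha_2)(z-b_c^*)^2+\dots}{2r_1N^{1/3}(z-b_c^*)},\qquad \frac{1}{\xi(z)^2}=-\frac{1-2\alpha_1(z-b_c^*)+(3\alpha_1^2-2\alpha_2)(z-b_c^*)^2+\dots}{r_1^2N^{2/3}(z-b_c^*)^2}.
\end{equation*}
The residue $-\Pi_1/(2r_1N^{1/3})$ of $\Pi_1/(2\ii\xi(z))$ at $z=b_c^*$ together with the double- and simple-pole coefficients $-\Pi_2/(r_1^2N^{2/3})$ and $r_2\Pi_2/(r_1^3N^{2/3})$ of $\Pi_2/\xi(z)^2$ then motivate the definitions \eqref{def s11}--\eqref{def s22}.

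With $H_1$ defined as in \eqref{def h1}, a direct expansion using the matrix identity $\Pi_1(\hat s)^2=(r(\hat s)^2-q(\hat s)^2)I$ (immediate from \eqref{def Pi}) gives, on $\partial D_c$,
\begin{equation*}
H_1(z)\Pi(\xi(z);\hat s)=I+\frac{S_{11}}{z-b_c^*}+E(z)+{\cal O}(1/N),\quad E(z):=\frac{\Pi_2}{\xi(z)^2}+\frac{(r^2-q^2)I}{4\xi(z)^2}+\frac{S_{11}\Pi_1}{2\ii\xi(z)(z-b_c^*)}.
\end{equation*}
By \eqref{def h2}, $H_2(z)=I+S_{21}/(z-b_c^*)+S_{22}/(z-b_c^*)^2-E(z)$, and I verify that $H_2$ is holomorphic at $b_c^*$ by computing the principal parts of $E(z)$: the three double-pole contributions combine to exactly $S_{22}/(z-b_c^*)^2$ after $\Pi_1^2=(r^2-q^2)I$, and the three simple-pole contributions sum to $\frac{r_2}{r_1^3N^{2/3}(z-b_c^*)}\bigl[\Pi_2+\frac{r^2-q^2}{8}I\bigr]$, which the Painlev\'e~II identities $p_{11}=p_{22}=(q^2-r^2)/8$ from \eqref{relation up} collapse to $S_{21}/(z-b_c^*)$, as required by \eqref{def s21}. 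Hence $H=H_2H_1$ is holomorphic on $D_c$, and the matching \eqref{eq hcpi3} follows since every omitted term is ${\cal O}(1/N)$ on $\partial D_c$.

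For the Taylor expansion \eqref{eq h11}, I write $H-I=(H_1-I)+(H_2-I)+(H_2-I)(H_1-I)$; because $(H_1-I)={\cal O}(N^{-1/3})$ and $(H_2-I)={\cal O}(N^{-2/3})$ near $b_c^*$, the cross-term is ${\cal O}(1/N)$ and drops out. The ${\cal O}(N^{-1/3})$ part of each $h_{ij}$ comes from $(H_1-I)$ alone, with constant term $-r_2\Pi_1/(4r_1^2N^{1/3})$ and $(z-b_c^*)^1$-coefficient $\Pi_1(3r_2^2-2r_1r_3)/(24r_1^3N^{1/3})$ via $\alpha_1^2-\alpha_2=(3r_2^2-2r_1r_3)/(12r_1^2)$. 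The ${\cal O}(N^{-2/3})$ constant term equals $-E(b_c^*)$, which evaluates, using $3\alpha_1^2-2\alpha_2=(9r_2^2-4r_1r_3)/(12r_1^2)$ and $2\alpha_1^2-\alpha_2=(3r_2^2-r_1r_3)/(6r_1^2)$, to $\frac{(9r_2^2-4r_1r_3)\Pi_2}{12r_1^4N^{2/3}}+\frac{(r^2-q^2)(3r_2^2-r_1r_3)I}{24r_1^4N^{2/3}}$; reading off entries yields \eqref{eq h11}. \emph{The main obstacle} is the algebraic bookkeeping in the holomorphicity step: three double-pole and three simple-pole contributions must conspire—through the matrix identities \eqref{def Pi} and \eqref{relation up} that encode the Painlev\'e~II parametrix structure—to cancel exactly the prescribed principal parts of $S(z)$, and without these identities neither $H_2$ would be holomorphic at $b_c^*$ nor would the stated formulas for $S_{21},S_{22}$ be consistent.
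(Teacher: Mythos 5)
Your proof is correct and follows essentially the same partial Schlesinger transform approach as the paper: construct $H_1$ to cancel the $O(N^{-1/3})$ pole, then $H_2$ to cancel the $O(N^{-2/3})$ principal parts, and read off the matching on $\partial D_c$ and the Taylor expansion of $h_{ij}$ near $b_c^*$. You go further than the paper in one useful respect: the paper asserts that $H_2$ is holomorphic at $b_c^*$ "by definition," whereas you actually verify the cancellation of the double- and simple-pole contributions and make explicit that it hinges on the matrix identities $\Pi_1(\hat s)^2=(r(\hat s)^2-q(\hat s)^2)I$ and $p_{11}(\hat s)=p_{22}(\hat s)=(q(\hat s)^2-r(\hat s)^2)/8$ from \eqref{relation up} — without which the definitions of $S_{21}$ and $S_{22}$ would be inconsistent.
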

\begin{proof}
In the proof below, we will use the following bounds on $\partial D_c$ whenever necessary:
\begin{equation}
    S_{11}(\hat s)={\cal O}(N^{-1/3}),\quad S_{21}(\hat s)={\cal O}(N^{-2/3}),\quad S_{22}(\hat s)={\cal O}(N^{-2/3}),\quad  \Pi_j(\hat s) = {\cal O}(1),\quad  \xi ={\cal O}(N^{1/3}).
\end{equation}
All the bounds in the proof should be understood in the limit of large $N$.
When $z \in \partial D_c$ and therefore $1/\xi = O(N^{-1/3})$.  We sometimes use the notation ${\cal O}(1/\xi^3)$ to indicate the origin of the bound.

Firstly, by the definition of $H_1(z)$ in \eqref{def h1} with $S_{11}(\hat s)$ in \eqref{def s11}, one can see that $H_1(z)$ is holomorphic at $b_c^*$. Moreover, using  $\Pi(\xi;\hat s)$ with $\xi=\xi(z)$ in \eqref{def pi1pi2} and  $H_1(z)$ in \eqref{def h1}, we have

\begin{equation}\label{eq h1s1}
\begin{aligned}
H_1(z)\Pi(\xi;\hat s)&=I+\frac{S_{11}(\hat s)}{z-b_c^*}+\bigg(\frac{\Pi_2(\hat s)}{\xi^2}+\big(H_1(z)-I\big)\frac{\Pi_1(\hat s)}{2\ii\xi}\bigg)+{\cal O}\left(\frac{S_{11}(\hat s)}{(z-b_c^*)\xi^2},\frac{1}{\xi^3}\right).
\end{aligned}
\end{equation}
Since
\begin{equation}\label{eq aprrixpi2}
\frac{\Pi_2(\hat s)}{\xi^2}+\big(H_1(z)-I\big)\frac{\Pi_1(\hat s)}{2\ii\xi}\sim {\cal O}\Big(\frac{1}{N^{2/3}}\Big)   \end{equation} as $N\to \infty$ and $z\in\partial D_c$, we have
\begin{equation}\label{require holo1}
H_1(z)\Pi(\xi;\hat s)=I+\frac{S_{11}(\hat s)}{z-b_c^*}+{\cal O}\left(\frac{1}{N^{2/3}}\right)    
\end{equation} as $N\to \infty$ and $z\in\partial D_c$.

Secondly,  by the definition of $H_2(z)$ in \eqref{def h2} with  $\Pi_1(\hat s)$, $\Pi_2(\hat s)$ in \eqref{def pi1pi2}, $S_{21}(\hat s)$ in \eqref{def s21}, and $S_{22}(\hat s)$ in \eqref{def s22}, one can see that $H_2(z)$ is holomorphic at $b_c^*$. It follows that $H(z)$ is holomorphic at $b_c^*$.   Moreover, using $\Pi(\xi(z);\hat s)$ in \eqref{def pi1pi2} , $H_1(z)$ in \eqref{def h1}, and $H_2(z)$ in \eqref{def h2}, we have
\begin{equation}\label{eq hc21}
\begin{aligned}
H_2(z)H_1(z)\Pi(\xi;\hat s)&=
H_2(z)\bigg(I+\frac{S_{11}(\hat s)}{z-b_c^*}+\frac{\Pi_2(\hat s)}{\xi^2}+\big(H_1(z)-I\big)\frac{\Pi_1(\hat s)}{2\ii\xi}+{\cal O}\big(\frac{1}{\xi^3}\big)\bigg)\\
&=I+\frac{S_{11}(\hat s)+S_{21}(\hat s)}{z-b_c^*}+\frac{S_{22}(\hat s)}{(z-b_c^*)^2}+\bigg(\frac{ S_{21}(\hat s)}{z-b_c^*}+\frac{ S_{22}(\hat s)}{(z-b_c^*)^2}\bigg)\frac{S_{11}(\hat s)}{z-b_c^*}\\
&\qquad -\bigg(\frac{\Pi_2(\hat s)}{\xi^2}+\Big(H_1(z)-I\Big)\frac{\Pi_1(\hat s)}{2\ii\xi}\bigg)\frac{S_{11}(\hat s)}{z-b_c^*}+{\cal O}\left(\frac{1}{\xi^3}\right),
    \end{aligned}
\end{equation} where the 1st equation is obtained by \eqref{eq h1s1}, the 2nd equation is obtained by the definition of $H_2(z)$ in \eqref{def h2}. As $N\to \infty$ and $z\in\partial D_c$, using \eqref{eq aprrixpi2}, \eqref{eq hc21} and the definitions of $S_{11}(\hat s)$, $S_{21}(\hat s)$ and $S_{22}(\hat s)$ in \eqref{def s11}, \eqref{def s21}, \eqref{def s22}, respectively, we have
\begin{equation*}
\bigg(\frac{ S_{21}(\hat s)}{z-b_c^*}+\frac{ S_{22}(\hat s)}{(z-b_c^*)^2}\bigg)\frac{S_{11}(\hat s)}{z-b_c^*}-\bigg(\frac{\Pi_2(\hat s)}{\xi^2}+\Big(H_1(z)-I\Big)\frac{\Pi_1(\hat s)}{2\ii\xi}\bigg)\frac{S_{11}(\hat s)}{z-b_c^*}+{\cal O}\left(\frac{1}{\xi^3}\right)={\cal O}\Big(\frac{1}{N}\Big).
\end{equation*}
It follows that
\begin{equation*}
\begin{aligned}
H_2(z)H_1(z)\Pi(\xi;\hat s)
&=I+\frac{S_{11}(\hat s)+S_{21}(\hat s)}{z-b_c^*}+\frac{S_{22}(\hat s)}{(z-b_c^*)^2}+{\cal O}\Big(\frac{1}{N}\Big)\\
&=\bigg(I+\frac{S_{11}(\hat s)+S_{21}(\hat s)}{z-b_c^*}+\frac{S_{22}(\hat s)}{(z-b_c^*)^2}\bigg)\Big(I+{\cal O}\Big(\frac{1}{N}\Big)\Big).
    \end{aligned}\end{equation*} Therefore, \eqref{eq hcpi3} holds. Moreover,
when $z\in D_c$, using the definitions of $H_1(z)$ in \eqref{def h1} and $H_2(z)$ in \eqref{def h2}, \eqref{eq h11} holds.
\end{proof}

Let us define $A^\infty(z)$ by
\begin{align}\label{eq Ainfty}
    A^\infty(z):=\begin{cases}
    \Phi(z)S(z),& z\in {\rm Int}({\cal B})\setminus D_c,\\
    \Phi(z)\begin{pmatrix}
0&1\\-1&0
\end{pmatrix}S(z)\begin{pmatrix}
0&-1\\1&0
\end{pmatrix} ,&z\in{\rm Ext}({\cal B})\setminus D_c,\\
      \Phi(z){\cal P}(z),& z\in D_c.
    \end{cases}
\end{align}
This will be the strong asymptotics of $A$ and we define the error matrix by $${\cal E}(z)=A^\infty(z)A^{-1}(z).$$ 

When $z\in {\rm Int}({\cal B}) \cap \partial D_c$, we have
\begin{equation}
    \begin{aligned}
    {\cal E}_+(z)({\cal E}_-(z))^{-1}&=A_+^\infty(z)(A_-^\infty(z))^{-1}=\Phi(z){\cal P}(z)S(z)^{-1}\Phi(z)^{-1}\\
    &=\begin{pmatrix}
0&1\\-1&0
\end{pmatrix}\ee^{-\ii\frac{8}{3}\xi(\beta)^3\sigma_3}H(z)\Pi(\xi(z);\hat s)\ee^{\ii\frac{8}{3}\xi(\beta)^3\sigma_3}S(z)^{-1}\begin{pmatrix}
0&-1\\1&0
\end{pmatrix}\\
&=\begin{pmatrix}
0&1\\-1&0
\end{pmatrix}S(z)\bigg(I+{\cal O}\Big(\frac{1}{N}\Big)\bigg)S(z)^{-1}\begin{pmatrix}
0&-1\\1&0
\end{pmatrix}\\
&=I+{\cal O}\Big(\frac{1}{N}\Big).
    \end{aligned}
\end{equation} Here the 2nd equality is obtained by the definition of $A^\infty(z)$ in \eqref{eq Ainfty}, the 3rd equality is obtained by \eqref{def p} and the 4th equality is obtained by \eqref{eq hcpi3}.

By a similar computation, the same error bound holds for $z\in {\rm Ext}({\cal B}) \cap \partial D_c$. One can check that the error is exponentially small in $N$ away from $\partial D_c$. By the small norm theorem \cite{DKMVZ 1999, Deift 1999}, we obtain that 
\begin{equation}\label{small norm}
A(z)=\bigg(I+{\cal O}\Big(\frac{1}{N}\Big)\bigg)A^\infty(z).
\end{equation}

\begin{prop}\label{thm ortho} Let $\xi$ and $\ell$ be given in \eqref{def xixi} and \eqref{def ell}.
Let $H(z)$  be defined in \eqref{def HH}. When $z\in D_c$, we have the following asymptotics,
\begin{equation}\label{pn strong}
    \begin{aligned}
    p_n(z)
  &=\ee^{\ii\frac{8}{3}\xi(\beta)^3}\ee^{Ntg(z)}\Big(\ee^{\frac{N}{2}\phi(z)}[H\Psi]_{21}^z+{\cal O}\big(N^{-1}\big)\Big),\\
  q_n(z)&=-\ee^{-\ii\frac{8}{3}\xi(\beta)^3}\ee^{Nt(g(z)-\ell)}\Big(\ee^{\frac{N}{2}\phi(z)}[H\Psi]_{11}^z+{\cal O}\big(N^{-1}\big)\Big),
    \end{aligned}
\end{equation} where
$$\begin{aligned}
[H\Psi]_{21}^z:=\big[H(z)\Psi(\xi(z);\hat s)\big]_{21}&=(1+h_{22}(z))\Psi_{21}(\xi(z);\hat s)+h_{21}(z)\Psi_{11}(\xi(z);\hat s),\\
[H\Psi]_{11}^z:=\big[H(z)\Psi(\xi(z);\hat s)\big]_{11}&=(1+h_{11}(z))\Psi_{11}(\xi(z);\hat s)+h_{12}(z)\Psi_{21}(\xi(z);\hat s).
\end{aligned}$$ 
Here $[H\Psi]_{jk}^z$ is the corresponding entry of $H\Psi$ and  the error bounds are uniform over $D_c$.
\end{prop}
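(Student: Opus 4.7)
The plan is to start from the small norm bound \eqref{small norm}, $A(z)=(I+{\cal O}(N^{-1}))A^\infty(z)$ uniformly in $D_c$, and to unwind the chain of transformations $Y\mapsto A\mapsto A^\infty=\Phi\cdot{\cal P}$ in order to extract the asymptotics of $p_n=Y_{11}$ and $q_n=Y_{21}$. Combining this with the definitions \eqref{def A}, \eqref{def p}, and the global/local parametrix \eqref{eq Ainfty}, the claim reduces to an explicit case-dependent algebraic identity.

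First, I invert \eqref{def A}. On the region $\CC\setminus(\Omega_+\cup\Omega_-)$ one immediately obtains $Y(z)=\ee^{tN\ell\sigma_3/2}A(z)\ee^{tN(g(z)-\ell/2)\sigma_3}$, which gives $p_n(z)=\ee^{Ntg(z)}A_{11}(z)$ and $q_n(z)=\ee^{Nt(g(z)-\ell)}A_{21}(z)$. Inside the lens regions $\Omega_\pm$ an additional triangular right-factor produces terms carrying an exponential $\ee^{\mp N\phi(z)}$; after expressing $V$ in terms of $g$ and $\phi$ via \eqref{def gfunction}, these extra contributions either recombine with the leading part of $A^\infty$ into the same expression $\ee^{N\phi/2}[H\Psi]^z$ or are exponentially suppressed on the relevant side of $\Gamma_\pm$, and hence are absorbed into the ${\cal O}(N^{-1})$ error.

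Next, I substitute $A^\infty=\Phi\,{\cal P}$ from \eqref{eq Ainfty} together with the explicit form of ${\cal P}$ in \eqref{def p}, and expand $\Pi$ through \eqref{pi to psi} as $\Pi(\xi;\hat s)=\widetilde\Psi(\xi;\hat s)\ee^{\ii(\frac{4}{3}\xi^3+\hat s\xi)\sigma_3}$. The key algebraic step is the identity \eqref{def xi}, which yields
\[
\ii\left(\tfrac{4}{3}\xi(z)^3+\hat s\,\xi(z)\right)=-\tfrac{N\phi(z)}{2}-\ii\tfrac{8}{3}\xi(\beta)^3.
\]
Hence the exponential concealed inside $\Pi$ becomes $\ee^{\mp N\phi(z)/2}\ee^{\mp\ii\frac{8}{3}\xi(\beta)^3}$, which together with the outer conjugation $\ee^{\pm\ii\frac{8}{3}\xi(\beta)^3\sigma_3}$ in \eqref{def p} generates exactly the prefactor $\ee^{\ii\frac{8}{3}\xi(\beta)^3}\ee^{Ntg(z)}\ee^{N\phi(z)/2}$ claimed in \eqref{pn strong}. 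The entries of $\widetilde\Psi$ surviving in the selected matrix element of $H\Pi$ are then rewritten in terms of $\Psi$ via \eqref{def of psi0}: since $\xi$ maps ${\cal B}\cap D_c$ into $\RR$ and $\Gamma_\pm\cap D_c$ into $\gamma_\pm$, the image $\xi(z)$ lies in one of the four sectors defining $\Psi$, and a straightforward linear-algebra check shows that in every relevant sector the pertinent entry of $H\widetilde\Psi$ coincides with $[H\Psi]_{21}^z=(1+h_{22})\Psi_{21}+h_{21}\Psi_{11}$ for $p_n$ and with $[H\Psi]_{11}^z$ for $q_n$.

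The main obstacle is this case analysis. One has to track simultaneously the action of the left multiplication by $\Phi$ (which jumps across ${\cal B}$), the two diagonal conjugations by $\ee^{\pm\ii\frac{8}{3}\xi(\beta)^3\sigma_3}$, the change from $\Pi$ to $\widetilde\Psi$, the triangular right-multiplication in $\Omega_\pm$, and the sector-dependent relation $\widetilde\Psi=\Psi\cdot J$; one must then check that the resulting asymptotic formulas for $p_n$ and $q_n$ are consistent across all the jump curves (${\cal B}$, $\Gamma_\pm$, and the rays $\arg\xi=\pi/6,5\pi/6,7\pi/6,11\pi/6$), which they must be since $p_n$ and $q_n$ are entire. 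The uniformity of the error ${\cal O}(N^{-1})$ on $D_c$ follows from the uniformity in \eqref{small norm} and the boundedness of $[H\Psi]_{jk}^z$ on $D_c$, which uses the bounds on $H$ from Proposition \ref{proposition sh} and the fact that $\Psi(\,\cdot\,;\hat s)$ is bounded near the origin for $\hat s={\cal O}(1)$.
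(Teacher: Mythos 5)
Your proposal follows the paper's proof essentially step for step: invert the transformations $Y\mapsto A$ in \eqref{def A}, feed in the small-norm estimate \eqref{small norm}, substitute the parametrix $A^\infty=\Phi\,{\cal P}$ with ${\cal P}$ from \eqref{def p} and $\Pi=\widetilde\Psi\,\ee^{\ii(\frac{4}{3}\xi^3+\hat s\xi)\sigma_3}$, and then collect the exponential prefactors using \eqref{def xi} together with the sector-dependent definition \eqref{def of psi0} of $\Psi$. The only substantive point I would flag is in your handling of the lens regions $\Omega_\pm$. You write that the extra triangular right factor "either recombines with the leading part of $A^\infty$ into the same expression $\ee^{N\phi/2}[H\Psi]^z$ or is exponentially suppressed on the relevant side of $\Gamma_\pm$, and hence is absorbed into the ${\cal O}(N^{-1})$ error." Only the first alternative is viable. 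The lens factor $(\begin{smallmatrix}1&0\\\pm1/\omega_{n,N}&1\end{smallmatrix})$, after conjugation by the accumulated diagonal $\ee^{tN(g-\ell/2)\sigma_3}\ee^{\ii\frac{8}{3}\xi(\beta)^3\sigma_3}\ee^{\ii(\frac{4}{3}\xi^3+\hat s\xi)\sigma_3}$, becomes exactly the constant triangular matrix that turns $\widetilde\Psi$ into $\Psi$ in the corresponding $\xi$-sector; this is an identity, not a bound. Exponential suppression cannot be used here because the formula must hold uniformly on all of $D_c$, and near $b_c^*$ one has $N\phi(z)={\cal O}(1)$ (since $\xi(z)$ is bounded there by \eqref{def xi}), so the factor $\ee^{-N\phi}$ carried by the lens term is ${\cal O}(1)$, not small, and certainly not ${\cal O}(N^{-1})$. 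If you committed to the suppression alternative, the argument would fail precisely in the region where the Proposition is most needed. Apart from this, the rest of the bookkeeping (the diagonal conjugations, $\Pi\to\widetilde\Psi\to\Psi$, the uniformity of the error from the small-norm theorem, and the boundedness of $[H\Psi]^z_{jk}$ on $D_c$) is laid out correctly and matches the paper's argument.
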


\begin{proof}
When $z\in D_c \cap  {\rm Int}({\cal B})\setminus \Omega_+  $,  we have
\begin{equation}\label{computation intb}
    \begin{aligned}
    Y(z)&=\ee^{\frac{tN\ell}{2}\sigma_3}A(z)\ee^{\frac{tN\left(2g(z)-\ell\right)}{2}\sigma_3}=\ee^{\frac{tN\ell}{2}\sigma_3}\bigg(I+{\cal O}\Big(\frac{1}{N}\Big)\bigg)A^\infty(z)\ee^{\frac{tN\left(2g(z)-\ell\right)}{2}\sigma_3}\\
    &=\ee^{\frac{tN\ell}{2}\sigma_3}\bigg(I+{\cal O}\Big(\frac{1}{N}\Big)\bigg)\Phi(z){\cal P}(z)\ee^{\frac{tN\left(2g(z)-\ell\right)}{2}\sigma_3}\\
    &=\ee^{\frac{tN\ell}{2}\sigma_3}\bigg(I+{\cal O}\Big(\frac{1}{N}\Big)\bigg)\begin{pmatrix}
0&1\\-1&0
\end{pmatrix}\ee^{-\ii\frac{8}{3}\xi(\beta)^3\sigma_3}H(z)\Pi(\xi;\hat s)\ee^{\ii\frac{8}{3}\xi(\beta)^3\sigma_3} \ee^{\frac{tN\left(2g(z)-\ell\right)}{2}\sigma_3},
    \end{aligned}
\end{equation}
where the 1st equality is obtained by \eqref{def A},  the 2nd equality is obtained by \eqref{small norm}, the 3rd equality is obtained by \eqref{eq Ainfty}, the 4th equality is obtained by the definition of $\Phi(z)$ in \eqref{eq Phi} and the definition of ${\cal P}$ in \eqref{def p}. It follows that 
\begin{equation}
    p_n(z)=[Y(z)]_{11}=\ee^{\ii\frac{16}{3}\xi(\beta)^3} \ee^{Ntg(z)}\Big(\big[H(z)\Pi(\xi;\hat s)\big]_{21}+{\cal O}\big({N^{-1}}\big)\Big),
\end{equation}
where $[H\Pi]_{21}$ is the $(2,1)$-entry of $H\Pi$. By using the fact in \eqref{pi to psi}, the above identity can be further written as
\begin{equation}\label{ortho1 for later}
    \begin{aligned}
    p_n(z)
    &=\ee^{\ii\frac{16}{3}\xi(\beta)^3} \ee^{Ntg(z)}\bigg(\ee^{\ii\big(\frac{4}{3}\xi(z)^3+\hat s\xi(z)\big)}\big[H(z)\widetilde\Psi(\xi(z);\hat s)\big]_{21}+{\cal O}\Big(\frac{1}{N}\Big)\bigg)\\
     &=\ee^{\ii\frac{16}{3}\xi(\beta)^3} \ee^{Ntg(z)}\bigg(\ee^{\ii\big(\frac{4}{3}\xi(z)^3+\hat s\xi(z)\big)}\big[H(z)\Psi(\xi(z);\hat s)\big]_{21}+{\cal O}\Big(\frac{1}{N}\Big)\bigg)\\
  &=\ee^{\ii\frac{8}{3}\xi(\beta)^3}\ee^{Ntg(z)}\Big(\ee^{\frac{N}{2}\phi(z)}[H\Psi]_{21}^z+{\cal O}\big({N^{-1}}\big)\Big).
    \end{aligned}
\end{equation}  Here the 2nd equality is obtained by the definition of $\Psi$ in \eqref{def of psi0},  the 3rd equality is obtained by the definition of $\xi$ in \eqref{def xi} and \eqref{def xixi}. By the definition of $H$ in \eqref{def HH}, $$[H\Psi]_{21}^z=(1+h_{22}(z))\Psi_{21}(\xi(z);\hat s)+h_{21}(z)\Psi_{11}(\xi(z);\hat s).$$

Similarly, we have
\begin{equation}\label{ortho2 for later}
    \begin{aligned}
    q_n(z)&=[Y(z)]_{21}=-\ee^{Nt(g(z)-\ell)}\bigg(\ee^{\ii\big(\frac{4}{3}\xi(z)^3+\hat s\xi(z)\big)} [H\Psi]_{11}^z+{\cal O}\Big(\frac{1}{N}\Big)\bigg)\\
  &=-\ee^{-\ii\frac{8}{3}\xi(\beta)^3}\ee^{Nt(g(z)-\ell)}\Big(\ee^{\frac{N}{2}\phi(z)}[H\Psi]_{11}^z+{\cal O}\big({N^{-1}}\big)\Big),
    \end{aligned}
\end{equation} 
where $$[H\Psi]_{11}^z=(1+h_{11}(z))\Psi_{11}(\xi(z);\hat s)+h_{12}(z)\Psi_{21}(\xi(z);\hat s).$$ 

Following the same arguments as above  when $z\in D_c \cap \Omega_+  $ and using \eqref{def A},  we have
\begin{equation*}
    \begin{aligned}
   & Y(z)=\ee^{\frac{tN\ell}{2}\sigma_3}A(z)\ee^{\frac{tN\left(2g(z)-\ell\right)}{2}\sigma_3}\begin{pmatrix}
1&0\\1/w_{n,N}(z)&1
\end{pmatrix}\\
    &=\ee^{\frac{tN\ell}{2}\sigma_3}\bigg(I+{\cal O}\Big(\frac{1}{N}\Big)\bigg)A^\infty(z)\ee^{\frac{tN\left(2g(z)-\ell\right)}{2}\sigma_3}\begin{pmatrix}
1&0\\1/w_{n,N}(z)&1
\end{pmatrix}\\
    &=\ee^{\frac{tN\ell}{2}\sigma_3}\bigg(I+{\cal O}\Big(\frac{1}{N}\Big)\bigg)\Phi(z){\cal P}(z)\ee^{\frac{tN\left(2g(z)-\ell\right)}{2}\sigma_3}\begin{pmatrix}
1&0\\1/w_{n,N}(z)&1
\end{pmatrix}\\
    &=\ee^{\frac{tN\ell}{2}\sigma_3}\bigg(I+{\cal O}\Big(\frac{1}{N}\Big)\bigg)\begin{pmatrix}
0&1\\-1&0
\end{pmatrix}\ee^{-\ii\frac{8}{3}\zeta_\beta^3\sigma_3}H(z)\Pi(\xi(z),\hat s)\ee^{\ii\frac{8}{3}\zeta_\beta^3\sigma_3}\ee^{\frac{tN\left(2g(z)-\ell\right)}{2}\sigma_3}\begin{pmatrix}
1&0\\1/w_{n,N}(z)&1
\end{pmatrix}\\
 &=\ee^{\frac{tN\ell}{2}\sigma_3}\bigg(I+{\cal O}\Big(\frac{1}{N}\Big)\bigg)\begin{pmatrix}
0&1\\-1&0
\end{pmatrix}\ee^{-\ii\frac{8}{3}\zeta_\beta^3\sigma_3}H(z)\widetilde\Psi(\xi,\hat s)\ee^{\ii\left(\frac{4}{3}\xi^3+\hat s\xi\right)\sigma_3}\ee^{\ii\frac{8}{3}\xi(\beta)^3\sigma_3} \ee^{\frac{tN\left(2g(z)-\ell\right)}{2}\sigma_3} \begin{pmatrix}
1&0\\1/w_{n,N}(z)&1
\end{pmatrix}
\\
 &=\ee^{\frac{tN\ell}{2}\sigma_3}\bigg(I+{\cal O}\Big(\frac{1}{N}\Big)\bigg)\begin{pmatrix}
0&1\\-1&0
\end{pmatrix}\ee^{-\ii\frac{8}{3}\zeta_\beta^3\sigma_3}H(z)\widetilde\Psi(\xi,\hat s)\begin{pmatrix}
1&0\\1&1
\end{pmatrix}\ee^{\ii\left(\frac{4}{3}\xi^3+\hat s\xi\right)\sigma_3}\ee^{\ii\frac{8}{3}\xi(\beta)^3\sigma_3} \ee^{\frac{tN\left(2g(z)-\ell\right)}{2}\sigma_3}\\
 &=\ee^{\frac{tN\ell}{2}\sigma_3}\bigg(I+{\cal O}\Big(\frac{1}{N}\Big)\bigg)\begin{pmatrix}
0&1\\-1&0
\end{pmatrix}\ee^{-\ii\frac{8}{3}\zeta_\beta^3\sigma_3}H(z)\Psi(\xi,\hat s)\ee^{\ii\left(\frac{4}{3}\xi^3+\hat s\xi\right)\sigma_3}\ee^{\ii\frac{8}{3}\xi(\beta)^3\sigma_3} \ee^{\frac{tN\left(2g(z)-\ell\right)}{2}\sigma_3}
.
    \end{aligned}
\end{equation*}
where the 1st equality is obtained by \eqref{def A},  the 2nd equality is obtained by \eqref{small norm}, the 3rd equality is obtained by \eqref{eq Ainfty}, the 4th equality is obtained by the definition of $\Phi(z)$ in \eqref{eq Phi} and the definition of ${\cal P}$ in \eqref{def p}, the 5th equality is obtained by the fact in \eqref{pi to psi},  the 6th equality is obtained by the definition of $\omega_{n,N}(z)$ in \eqref{def omega}, the definition of $g$-function in \eqref{def gfunction}, and the fact that
$$\ee^{\ii\left(\frac{4}{3}\xi^3+\hat s\xi\right)\sigma_3}\ee^{\ii\frac{8}{3}\xi(\beta)^3\sigma_3} \ee^{\frac{tN\left(2g(z)-\ell\right)}{2}\sigma_3}\begin{pmatrix}
1&0\\1/w_{n,N}(z)&1
\end{pmatrix}=\begin{pmatrix}
1&0\\1&1
\end{pmatrix}\ee^{\ii\left(\frac{4}{3}\xi^3+\hat s\xi\right)\sigma_3}\ee^{\ii\frac{8}{3}\xi(\beta)^3\sigma_3} \ee^{\frac{tN\left(2g(z)-\ell\right)}{2}\sigma_3},$$ 
and the 7th equality is obtained by the definition of $\Psi$ in \eqref{def of psi0}.

Similar to the computations in \eqref{ortho1 for later} and \eqref{ortho2 for later}, we  see that \eqref{pn strong} holds for $z\in D_c\cap \Omega_+ $.

The analogous calculation can be done for $z\in D_c \cap  {\rm Ext}({\cal B})$.  \end{proof}

\begin{prop}\label{thm ortho1} 
When $z$ is away from $D_c$ and $z\in {\rm Int}({\cal B})$, we have the  asymptotics,
\begin{equation}\label{pn inside b}
    \begin{aligned}
p_n(z)&=\ee^{\ii\frac{16}{3}\xi(\beta)^3}\ee^{Naz+tN\ell}\bigg(\frac{q(\hat s)}{2r_1N^{1/3}(z-b_c^*)}+\frac{r_2p_{21}(\hat s)}{r_1^3N^{2/3}(z-b_c^*)}-\frac{p_{21}(\hat s)}{r_1^2N^{2/3}(z-b_c^*)^2}+{\cal O}\Big(\frac{1}{N}\Big)\bigg),\\
  q_n(z)&=-\ee^{Naz}\bigg(1-\frac{r(\hat s)}{2r_1N^{1/3}(z-b_c^*)}-\frac{p_{11}(\hat s)}{r_1^2N^{2/3}(z-b_c^*)^2}+{\cal O}\Big(\frac{1}{N}\Big)\bigg), \\
  [Y(z)]_{12}&=\ee^{-Naz}\bigg(1+\frac{r(\hat s)}{2r_1N^{1/3}(z-b_c^*)}-\frac{p_{22}(\hat s)}{r_1^2N^{2/3}(z-b_c^*)^2}+{\cal O}\Big(\frac{1}{N}\Big)\bigg).
    \end{aligned}
\end{equation}  Here the error bounds are uniform over a compact subset of ${\rm Int}({\cal B})\setminus {D}_c$.

When $z$ is away from $D_c$ and $z\in {\rm Ext}({\cal B})$, we have the  asymptotics,
\begin{equation}\label{pn outside b}
\begin{aligned}
p_n(z)&=z^n\left(\frac{z}{z-a}\right)^{Nc}\bigg(1+\frac{r(\hat s)}{2r_1N^{1/3}(z-b_c^*)}-\frac{p_{22}(\hat s)}{r_1^2N^{2/3}(z-b_c^*)^2}+{\cal O}\Big(\frac{1}{N}\Big)\bigg),\\
q_n(z)&=-\frac{z^n}{\ee^{\ii\frac{16}{3}\xi(\beta)^3}\ee^{tN\ell}}\left(\frac{z}{z-a}\right)^{Nc}\bigg(-\frac{q(\hat s)}{2r_1N^{1/3}(z-b_c^*)}+\frac{r_2p_{12}(\hat s)}{r_1^3N^{2/3}(z-b_c^*)}-\frac{p_{12}(\hat s)}{r_1^2N^{2/3}(z-b_c^*)^2}+{\cal O}\Big(\frac{1}{N}\Big)\bigg),\\
[Y(z)]_{12}&=-\frac{\ee^{\ii\frac{16}{3}\xi(\beta)^3}\ee^{tN\ell}}{z^n}\left(\frac{z-a}{z}\right)^{Nc}\bigg(\frac{q(\hat s)}{2r_1N^{1/3}(z-b_c^*)}+\frac{\gamma_2p_{21}(\hat s)}{r_1^3N^{2/3}(z-b_c^*)}-\frac{p_{21}(\hat s)}{r_1^2N^{2/3}(z-b_c^*)^2}+{\cal O}\Big(\frac{1}{N}\Big)\bigg).
\end{aligned}      
\end{equation} Here the error bounds are uniform over a compact subset of ${\rm Ext}({\cal B})\setminus {D}_c$.

Let ${D}_\tau:=\{z: |z-b_c^*|\leq\frac{1}{N^\tau}\}$ with $0\leq \tau<\frac{1}{3}$. When $z\in D_c\setminus D_\tau$, we have the  asymptotics, 
\begin{equation}\label{pn omegaplus}
\begin{aligned}
p_n(z)&=\ee^{Naz+tN\ell}\ee^{\ii\frac{16}{3}\xi(\beta)^3}\bigg(\frac{q(\hat s)}{2r_1N^{1/3}(z-b_c^*)}+\frac{r_2p_{21}(\hat s)}{r_1^3N^{2/3}(z-b_c^*)}-\frac{p_{21}(\hat s)}{r_1^2N^{2/3}(z-b_c^*)^2}+{\cal O}\Big(\frac{1}{N^{1-3\tau}}\Big)\bigg)\\
&\quad +\ee^{Naz+tN\ell}\ee^{-N\phi(z)}\bigg(1+\frac{r(\hat s)}{2r_1N^{1/3}(z-b_c^*)}-\frac{p_{22}(\hat s)}{r_1^2N^{2/3}(z-b_c^*)^2}+{\cal O}\Big(\frac{1}{N^{1-3\tau}}\Big)\bigg),\\
q_n(z) &=-\ee^{Naz}\bigg(1-\frac{r(\hat s)}{2r_1N^{1/3}(z-b_c^*)}-\frac{p_{11}(\hat s)}{r_1^2N^{2/3}(z-b_c^*)^2}+{\cal O}\Big(\frac{1}{N^{1-3\tau}}\Big)\bigg)\\
&\quad+\ee^{-\ii\frac{16}{3}\xi(\beta)^3}\ee^{Naz}\ee^{-N\phi(z)}\bigg(\frac{q(\hat s)}{2r_1N^{1/3}(z-b_c^*)}-\frac{r_2p_{12}(\hat s)}{r_1^3N^{2/3}(z-b_c^*)}+\frac{p_{12}(\hat s)}{r_1^2N^{2/3}(z-b_c^*)^2}+{\cal O}\Big(\frac{1}{N^{1-3\tau}}\Big)\bigg).
\end{aligned}        
\end{equation} Here the error bounds are uniform over $D_c\setminus D_\tau$.
\end{prop}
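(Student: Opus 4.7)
The overall strategy is to apply the small-norm theorem \eqref{small norm}, which gives $A(z) = (I + {\cal O}(1/N))\,A^\infty(z)$ uniformly outside the jump contours, and to invert the transformation \eqref{def A} to recover the entries of $Y(z)$. By \eqref{def Yz}, $[Y(z)]_{11} = p_n(z)$ and $[Y(z)]_{21} = q_n(z)$, so the three cases in the statement correspond to the three branches in the definition \eqref{eq Ainfty} of $A^\infty$, combined with the region-dependent $g$-function \eqref{def gfunction} and the region-dependent triangular factor in \eqref{def A}.

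For $z\in\mathrm{Int}({\cal B})\setminus D_c$, one has $\Phi(z)=\bigl(\begin{smallmatrix} 0&1\\-1&0\end{smallmatrix}\bigr)$ and $g(z)=az/t+\ell$, so inverting \eqref{def A} gives, up to a relative error ${\cal O}(1/N)$,
\[
 p_n(z)=\ee^{Naz+tN\ell}[\Phi S]_{11}=\ee^{Naz+tN\ell}[S]_{21},\quad q_n(z)=-\ee^{Naz}[S]_{11},\quad [Y(z)]_{12}=\ee^{-Naz}[S]_{22}.
\]
The relevant entries of $S(z)$ are read off from \eqref{def SS}--\eqref{def s22} via the identifications $[\Pi_1]_{11}=r(\hat s)$, $[\Pi_1]_{21}=-q(\hat s)$, $[\Pi_1]_{22}=-r(\hat s)$, $[\Pi_2]_{ij}=p_{ij}(\hat s)$, together with the diagonal conjugation $\ee^{\mp\frac{8\ii}{3}\xi(\beta)^3\sigma_3}$ which produces the prefactor $\ee^{\ii\frac{16}{3}\xi(\beta)^3}$ multiplying the $(2,1)$-entry and leaves the diagonal entries untouched. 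This yields \eqref{pn inside b}. For $z\in\mathrm{Ext}({\cal B})\setminus D_c$ one instead has $\Phi(z)=I$, and $A^\infty=\bigl(\begin{smallmatrix}0&1\\-1&0\end{smallmatrix}\bigr)S(z)\bigl(\begin{smallmatrix}0&-1\\1&0\end{smallmatrix}\bigr)$, which swaps rows and columns with signs: $[A^\infty]_{11}=[S]_{22}$, $[A^\infty]_{21}=-[S]_{12}$, $[A^\infty]_{12}=-[S]_{21}$. Combining with the alternate $g$-function $Ntg(z)=n\log z+Nc\log(z/(z-a))$ outside ${\cal B}$ produces \eqref{pn outside b}.

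For $z\in D_c\setminus D_\tau$ with $0<\tau<1/3$, one has $|\xi(z)|\gtrsim N^{1/3-\tau}\to\infty$, so the expansion \eqref{def pi1pi2} gives $\Pi(\xi(z);\hat s)=I+\Pi_1/(2\ii\xi)+\Pi_2/\xi^2+{\cal O}(\xi^{-3})$ with error ${\cal O}(N^{-(1-3\tau)})$ after substituting $1/\xi\sim N^{\tau-1/3}(z-b_c^*)^{-1}$. Using $A\sim\Phi{\cal P}$ and \eqref{def p}, Proposition~\ref{proposition sh} shows $H(z)\Pi(\xi(z);\hat s)$ agrees with the rational matrix $I+\bigl(S_{11}+S_{21}\bigr)/(z-b_c^*)+S_{22}/(z-b_c^*)^2$ to the required order. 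This recovers the first line of each formula in \eqref{pn omegaplus}, which is nothing but the analytic continuation of the bulk formulas \eqref{pn inside b} into $D_c\setminus D_\tau$. The second line with the $\ee^{-N\phi(z)}$ factor appears only when $z\in\Omega_\pm$: there \eqref{def A} carries the extra triangular factor $\bigl(\begin{smallmatrix}1&0\\\pm 1/\omega_{n,N}(z)&1\end{smallmatrix}\bigr)$, and the relation $V(z)=2tg(z)-t\ell-\phi(z)$ inside ${\cal B}$ combined with the inverse $g$-factor in $Y=\ee^{tN\ell\sigma_3/2}A\ee^{tN(g-\ell/2)\sigma_3}\bigl(\begin{smallmatrix}1&0\\1/\omega_{n,N}&1\end{smallmatrix}\bigr)$ picks up the second column of the rational piece of $S$ multiplied by $\ee^{-N\phi(z)}$. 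The relevant entries of that second column are then again read off from \eqref{def s11}--\eqref{def s22}, producing the explicit coefficients involving $r(\hat s)$, $q(\hat s)$, $p_{11}, p_{12}, p_{21}, p_{22}$ in \eqref{pn omegaplus}.

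The principal difficulty is the bookkeeping, which has three layers: (i) the conjugations by $\ee^{\pm\frac{8\ii}{3}\xi(\beta)^3\sigma_3}$ propagating through the definitions of ${\cal P}$, $S$, and $H$; (ii) the sector-dependent relation between $\widetilde\Psi$ and $\Psi$ from \eqref{def of psi0}, which controls which Stokes column is dominant in each $\Omega_\pm$ component; and (iii) the piecewise form of $g$ and of \eqref{def A} interacting with the weight $\omega_{n,N}(z)=\ee^{-NV(z)}$ to produce the exponential factor $\ee^{-N\phi(z)}$ cleanly. Once these are organized consistently, the formulas \eqref{pn inside b}, \eqref{pn outside b}, and \eqref{pn omegaplus} follow directly from Proposition~\ref{proposition sh} and the explicit form of $S(z)$ with no further analysis needed.
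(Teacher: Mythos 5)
Your proposal is correct and follows essentially the same route as the paper: invert the chain of transformations $Y\to A$ using the small-norm estimate \eqref{small norm}, substitute the explicit $A^\infty$ from \eqref{eq Ainfty} (the rational matrix $S$ away from $D_c$, and $H\Pi$ expanded to ${\cal O}(\xi^{-3})={\cal O}(N^{3\tau-1})$ in $D_c\setminus D_\tau$), and read off the entries, with the lens factor $\bigl(\begin{smallmatrix}1&0\\ \pm 1/\omega_{n,N}&1\end{smallmatrix}\bigr)$ producing the $\ee^{-N\phi(z)}$ terms exactly as in the paper's computation \eqref{comp in annulus}. The bookkeeping you outline (the $\sigma_3$-conjugations, the piecewise $g$, the sector-dependent $\Psi$) is the same as what the paper carries out.
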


\begin{proof}
When $z\in {\rm Int}({\cal B})\setminus D_c$, using \eqref{def A},  we have
\begin{equation}\label{comp inBnotd}
    \begin{aligned}
   & Y(z)=\ee^{\frac{tN\ell}{2}\sigma_3}\bigg(I+{\cal O}\Big(\frac{1}{N}\Big)\bigg)A^\infty(z)\ee^{\frac{tN\left(2g(z)-\ell\right)}{2}\sigma_3}\\
   &=\ee^{\frac{tN\ell}{2}\sigma_3}\bigg(I+{\cal O}\Big(\frac{1}{N}\Big)\bigg)\Phi(z)S(z)\ee^{\frac{tN\left(2g(z)-\ell\right)}{2}\sigma_3}\\
     &=\ee^{\frac{Nt\ell}{2}\sigma_3}\bigg(I+{\cal O}\Big(\frac{1}{N}\Big)\bigg)\begin{pmatrix}
0&1\\-1&0
\end{pmatrix}\ee^{-\ii\frac{8}{3}\xi(\beta)^3\sigma_3}\bigg(I+\frac{S_{11}(\hat s)+S_{21}(\hat s)}{z-b_c^*}+\frac{S_{22}(\hat s)}{(z-b_c^*)^2}\bigg)\ee^{\ii\frac{8}{3}\xi(\beta)^3\sigma_3} \ee^{\frac{Nt\left(2g(z)-\ell\right)}{2}\sigma_3}.
    \end{aligned}
\end{equation}
where the 1st equality is obtained by \eqref{small norm}, the 2nd equality is obtained by \eqref{eq Ainfty}, the 3rd equality is obtained by the definition of $\Phi(z)$ in \eqref{eq Phi} and the definition of $S$ in \eqref{def SS}. 

Using the definitions of $S_{11}$ in \eqref{def s11}, $S_{21}$ in \eqref{def s21}, and $S_{22}$ in \eqref{def s22}, we have \eqref{pn inside b}. 

When $z$ is away from $D_c$, similar computations can be done for $z$ in ${\rm Ext}({\cal B})\setminus\Omega_-$ and $\Omega_\pm$. 

When $z\in D_c\setminus{D}_\tau$, using $\xi$ in \eqref{zeta expanssion}, we have $\xi ={\cal O}(N^{1/3-\tau})$. For $z\in \Omega_+  $  we have
\begin{equation}\label{comp in annulus}
    \begin{aligned}
    &Y(z)=\ee^{\frac{tN\ell}{2}\sigma_3}A(z)\ee^{\frac{tN\left(2g(z)-\ell\right)}{2}\sigma_3}\begin{pmatrix}
1&0\\1/w_{n,N}(z)&1
\end{pmatrix}\\
&=\ee^{\frac{tN\ell}{2}\sigma_3}\bigg(I+{\cal O}\Big(\frac{1}{N}\Big)\bigg)A^\infty(z)\ee^{\frac{tN\left(2g(z)-\ell\right)}{2}\sigma_3}\begin{pmatrix}
1&0\\1/w_{n,N}(z)&1
\end{pmatrix}\\
    &=\ee^{\frac{tN\ell}{2}\sigma_3}\bigg(I+{\cal O}\Big(\frac{1}{N}\Big)\bigg)\Phi(z){\cal P}(z)\ee^{\frac{tN\left(2g(z)-\ell\right)}{2}\sigma_3}\begin{pmatrix}
1&0\\1/w_{n,N}(z)&1
\end{pmatrix}\\
&=\ee^{\frac{tN\ell}{2}\sigma_3}\bigg(I+{\cal O}\Big(\frac{1}{N}\Big)\bigg)\begin{pmatrix}
0&1\\-1&0
\end{pmatrix}{\cal P}(z)\ee^{\frac{tN\left(2g(z)-\ell\right)}{2}\sigma_3}\begin{pmatrix}
1&0\\1/w_{n,N}(z)&1
\end{pmatrix}\\
    &=\ee^{\frac{tN\ell}{2}\sigma_3}\Big(I+{\cal O}\big(\frac{1}{N}\big)\Big)\begin{pmatrix}
0&1\\-1&0
\end{pmatrix}\ee^{-\ii\frac{8}{3}\xi(\beta)^3\sigma_3}H(z)\Pi(\xi;x)\ee^{\ii\frac{8}{3}\xi(\beta)^3\sigma_3} \ee^{\frac{tN\left(2g(z)-\ell\right)}{2}\sigma_3}\begin{pmatrix}
1&0\\1/w_{n,N}(z)&1
\end{pmatrix}\\
    &=\ee^{\frac{tN\ell}{2}\sigma_3}\Big(I+{\cal O}\big(\frac{1}{N}\big)\Big)\begin{pmatrix}
0&1\\-1&0
\end{pmatrix}\ee^{-\ii\frac{8}{3}\xi(\beta)^3\sigma_3} \bigg(I+\frac{S_{11}(\hat s)+S_{21}(\hat s)}{z-b_c^*}+\frac{S_{22}(\hat s)}{(z-b_c^*)^2}+{\cal O}\Big(\frac{1}{N^{1-3\tau}}\Big)\bigg)\\
&\quad \times\ee^{\ii\frac{8}{3}\xi(\beta)^3\sigma_3} \ee^{\frac{tN\left(2g(z)-\ell\right)}{2}\sigma_3}\begin{pmatrix}
1&0\\1/w_{n,N}(z)&1
\end{pmatrix},
    \end{aligned}
\end{equation}
where the last equality is obtained by $H(z)\Pi(\xi;x)$  in \eqref{eq hc21} and the following fact
\begin{equation*}
\bigg(\frac{ S_{21}(\hat s)}{z-b_c^*}+\frac{ S_{22}(\hat s)}{(z-b_c^*)^2}\bigg)\frac{S_{11}(\hat s)}{z-b_c^*}-\bigg(\frac{\Pi_2(\hat s)}{\xi^2}+\Big(H_1(z)-I\Big)\frac{\Pi_1(\hat s)}{2\ii\xi}\bigg)\frac{S_{11}(\hat s)}{z-b_c^*}+{\cal O}\left(\frac{1}{\xi^3}\right)={\cal O}\Big(\frac{1}{N^{1-3\tau}}\Big).
\end{equation*}  Using the definitions of $S_{11}$ in \eqref{def s11}, $S_{21}$ in \eqref{def s21}, and $S_{22}$ in \eqref{def s22}, and the definition of $g$-function in \eqref{def gfunction}, we have \eqref{pn omegaplus}. 

A similar computation can be done for $z$ in other regions of $D_c\setminus D_\tau.$ 

In particular, when $z\in D_c\setminus D_\tau$ and $z\in {\rm Int}({\cal B})$, $\ee^{N\phi(z)}$ is exponentially small in $N$, \eqref{pn omegaplus} matches the corresponding expressions in \eqref{pn inside b} by simply setting $\tau=0$ in the error bound ${\cal O}(N^{3\tau-1})$.
\end{proof}

\subsection{Recurrence relations}\label{section 2.3}

By Theorem \ref{thm 21}, we will also need the asymptotic behavior of $p_{n-1}$ and $p_{n+1}$ to obtain asymptotic behavior of the correlation kernel. In order to obtain $p_{n-1}$ and $p_{n+1}$ we need the following recurrence relations. Similar recurrence relations have been obtained in \cite[Appendix B]{Lee 2017} for generating numerical plots of the zeros of planar orthogonal polynomials with the external potential $Q(z)=|z|^2+\frac{2c}{N}\log\frac{1}{|z-a|}$. 

Let us define
\begin{equation}\label{transform y to ytilde}
 \widetilde{Y}(z):=\widetilde{Y}_{n}(z)=Y(z)\begin{bmatrix}
\left(\frac{z-a}{z}\right)^{Nc}\frac{1}{\ee^{Naz}}&0 \\
0&z^{n}
\end{bmatrix},   
\end{equation} where $Y$ satisfies the Riemann-Hilbert problem in \eqref{rhp for y}.  The Riemann-Hilbert problem for $\widetilde{Y}(z)$ is given by
\begin{equation}\nonumber
\begin{cases}
\widetilde{Y}_+(z)=\widetilde{Y}_-(z)\begin{bmatrix}
1&1 \\
0&1
\end{bmatrix} ,&z\in\Gamma,\vspace{0.1cm}\\
\widetilde{Y}_+(z)=\widetilde{Y}_-(z)\begin{bmatrix}
\ee^{2Nc\pi\ii}&0 \\
0&1
\end{bmatrix} ,&z\in (0,a),\\
\widetilde{Y}(z)=\displaystyle\left(I+\mathcal
{O}\left(\frac{1}{z}\right)\right)\begin{bmatrix}
\left(\frac{z-a}{z}\right)^{Nc}\frac{z^n}{\ee^{Naz}}&0 \\
0&1
\end{bmatrix},& z\to\infty,\\
\widetilde{Y}(z) \text{ is holomorphic},& \text{otherwise}.
\end{cases}
\end{equation}
Since $\det{Y}(z)\equiv 1$, the inverse of
${\widetilde{Y}(z)}$ exists in
$\mathbb{C}\setminus(\Gamma\cup(0,a))$,  and we  define
\begin{equation}\label{eq An}
A_n(z):=\frac{\dd\widetilde{Y}_n(z)}{\dd z}{\widetilde{Y}_n(z)}^{-1}.    
\end{equation}
As $\widetilde{Y}_n(z)$ and $\frac{\dd\widetilde{Y}_n(z)}{\dd z}$ have
the same jump matrices, the matrix function $A_n(z)$ is meromorphic and can be determined by
identifying the singularities at $\infty$, $0$ and $a$. For $z\to\infty,$ writing
\begin{equation}\label{Ytilde inf}
\widetilde{Y}_n(z)=\left(I+\frac{1}{z}\begin{bmatrix}
a_n&b_n \\
c_n&d_n
\end{bmatrix}+{\cal O}\Big(\frac{1}{z^2}\Big)\right)\begin{bmatrix}
\left(\frac{z-a}{z}\right)^{Nc}\frac{z^n}{\ee^{Naz}}&0 \\
0&1
\end{bmatrix},     
\end{equation}
we have
\begin{equation}\label{eq anz}
A_n(z)=\begin{bmatrix}
-Na&0 \\
0&0
\end{bmatrix}+\frac{1}{z}\begin{bmatrix}
n&N a b_n \\
-N a c_n&0
\end{bmatrix}+\mathcal
{O}\Big(\frac{1}{z^2}\Big).
\end{equation}

Similarly, for $z\to 0$, writing
\begin{equation}\label{Ytilde 0}
\widetilde{Y}_n(z)=\begin{bmatrix}
\alpha_n&\beta_n \\
\gamma_n&\eta_n
\end{bmatrix}\left(I+\mathcal
{O}(z)\right)\begin{bmatrix}
\left(\frac{z-a}{z}\right)^{Nc}\frac{1}{\ee^{Naz}}&0 \\
0&z^{n}
\end{bmatrix},
\end{equation}
we have
\begin{equation}\nonumber
A_n(z)=\frac{1}{z}\begin{bmatrix}
-Nc-(Nc+n)\beta_n\gamma_n&(Nc+n)\alpha_n\beta_n \\
-(Nc+n)\gamma_n\eta_n&n+(Nc+n)\beta_n\gamma_n
\end{bmatrix}+{\cal O}(1).
\end{equation}
Therefore we have
\begin{equation}\label{def An}
\begin{split}
A_n(z)=\begin{bmatrix}
-Na&0 \\
0&0
\end{bmatrix}&+\frac{1}{z}\begin{bmatrix}
-Nc-(Nc+n)\beta_n\gamma_n&(Nc+n)\alpha_n\beta_n \\
-(Nc+n)\gamma_n\eta_n&n+(Nc+n)\beta_n\gamma_n
\end{bmatrix}\\
& +\frac{1}{z-a}\begin{bmatrix}
(Nc+n)\left(1+\beta_n\gamma_n\right)&Nab_n-(Nc+n)\alpha_n\beta_n \\
-Nac_n+(Nc+n)\gamma_n\eta_n&-n-(Nc+n)\beta_n\gamma_n
\end{bmatrix}.
\end{split}
\end{equation}
As $z$ goes to $\infty$, one can see that the coefficient matrix of $1/z$ in $A_n(z)$ at \eqref{def An}   exactly matches  the one of $A_n(z)$ at \eqref{eq anz}.
Note that the notations $c_n$ and $\gamma_n$ are similar but unrelated to the charge $c$ in the external potential $Q$ in \eqref{def Q} and $\gamma_c$ in \eqref{def gammacl}, respectively.

Defining $M_n(z):=\widetilde{Y}_{n+1}(z){\widetilde{Y}_n(z)}^{-1}$, knowing $\widetilde{Y}_{n+1}(z)$ and $\widetilde{Y}_{n}(z)$ have
the same jump matrices, 
by a similar procedure as above, we obtain that
\begin{equation}\label{def mn}
M_n(z)=\begin{bmatrix}
z+a_{n+1}-a_n&-b_n \\
c_{n+1}&1
\end{bmatrix}.
\end{equation}

Since $\det M_n(z)=z$, we have
\begin{equation}\label{def mn1}
M_n(z)^{-1}=\frac{1}{z}\begin{bmatrix}
1&b_n \\
-c_{n+1}&z+a_{n+1}-a_n
\end{bmatrix},\quad  a_{n+1}-a_n+b_nc_{n+1}=0.
\end{equation}

The compatibility of the Lax pair,
\begin{equation}\label{ytilde lax}
\frac{\dd\widetilde{Y}_n(z)}{\dd z}=A_n(z){\widetilde{Y}_n(z)},\quad 
\widetilde{Y}_{n+1}(z)=M_n(z){\widetilde{Y}_n(z)},
\end{equation}
gives
\begin{equation}\nonumber
A_{n+1}(z)M_n(z)=\frac{\dd M_n(z)}{\dd z}+M_n(z)A_n(z).
\end{equation}
This yields the following recurrence relation:
\begin{equation}\label{recurence1}
\begin{split}
&a_{n+1}=a_n+\frac{b_n \left(1+\beta_n\gamma_n\right)}{\alpha_n
\beta_n},\quad\alpha_{n+1}=\frac{b_n}{\beta_n},\quad
\gamma_{n+1}=-\frac{1}{\beta_n},
\\
 &b_{n+1}=\frac{(1+n+a^2N)b_n}{aN}+\frac{(Nc+n)\alpha_n\beta_n}{N}+\frac{b^2_n\left(1+\beta_n\gamma_n\right)}
{\alpha_n\beta_n},\\
&b_{n-1}=\alpha_n\beta_{n-1}=-\frac{\alpha_n}{\gamma_n},
\\
&\beta_{n+1}=\frac{\tilde{c}_n}{(1+Nc+n)
\left((Nc+n)\alpha_n\beta_n-aNb_n\right)\alpha^2_n\beta_n},
\end{split}
\end{equation}
where
\begin{equation}\nonumber
\begin{split}
\tilde{c}_n&=a^2N-Nc-a(1+2(Nc+n))\alpha_n\beta_n
+\left(a^2N-Nc-a(Nc+n)\alpha_n\beta_n\right)\beta_n\gamma_n\\
&+(Nc+n)(Nc+n+1)\alpha^3_n\beta^3_n+aN^2b^3_n
\left(1+\beta_n\gamma_n\right)^2.
\end{split}
\end{equation}

\begin{prop} \label{prop 2.5}
The following identities hold,
    \begin{align}
    p_{n-1}(z)
    &=\frac{1}{z}\Big(p_{n}(z)-\frac{\alpha_n}{\gamma_n}q_{n}(z)\Big),\label{identity 1}\\
 p_{n+1}(z) &=\bigg(z+\frac{b_n (1+\beta_n\gamma_n)}{\alpha_n
\beta_n}\bigg)p_n(z)-b_nq_n(z),\label{identity 2}\\
 \frac{p_n'(z)}{N}
&=\bigg(\frac{t+(c+t)\beta_n\gamma_n}{z-a}-\frac{(c+t)\beta_n\gamma_n}{z}\bigg)p_n(z)+\bigg(\frac{(c+t)\beta_n\alpha_n}{z}+\frac{ab_n-(c+t)\beta_n\alpha_n}{z-a}\bigg)q_n(z),\label{identity 3}\quad \\
h_n&=\frac{\Gamma(Nc+n+1)}{N^{Nc+n+1}}\pi \beta_n.\label{identity 4}
    \end{align}
\end{prop}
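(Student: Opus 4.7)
The first three identities fall directly out of the Lax pair \eqref{ytilde lax}. For (2), I read off the $(1,1)$-entry of the discrete relation $\widetilde Y_{n+1}(z) = M_n(z)\widetilde Y_n(z)$ with $M_n$ from \eqref{def mn}, giving $[\widetilde Y_{n+1}]_{11} = (z+a_{n+1}-a_n)[\widetilde Y_n]_{11} - b_n[\widetilde Y_n]_{21}$. The transformation \eqref{transform y to ytilde} exhibits $[\widetilde Y_n]_{11}$ and $[\widetilde Y_n]_{21}$ as $p_n(z)$ and $q_n(z)$ multiplied by the common diagonal factor $\bigl((z-a)/z\bigr)^{Nc}e^{-Naz}$, so cancelling this factor leaves $p_{n+1}(z) = (z+a_{n+1}-a_n)p_n(z) - b_nq_n(z)$, and substituting $a_{n+1}-a_n = b_n(1+\beta_n\gamma_n)/(\alpha_n\beta_n)$ from \eqref{recurence1} gives (2). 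Identity (1) follows identically from the inverse relation $\widetilde Y_{n-1}(z) = M_{n-1}(z)^{-1}\widetilde Y_n(z)$ with $M_{n-1}^{-1}$ from \eqref{def mn1} and $b_{n-1} = -\alpha_n/\gamma_n$. For (3), I take the $(1,1)$-entry of the differential Lax equation $\dd\widetilde Y_n/\dd z = A_n(z)\widetilde Y_n(z)$: differentiating $[\widetilde Y_n]_{11}$ produces the same diagonal factor times $p_n'(z) + p_n(z)\bigl(Nc/(z-a) - Nc/z - Na\bigr)$; cancelling the factor, substituting the explicit $[A_n]_{11},[A_n]_{12}$ from \eqref{def An}, and using $n = Nt$ so that $(Nc+n)/N = c+t$ yields (3) after a brief rearrangement.

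\textbf{Identity (4) by matching two evaluations of $\beta_n$.} The last identity, which ties the $L^2$-norming constant $h_n$ to the local data $\beta_n$ at $z = 0$, is not a direct Lax-pair computation. The plan is to compute $[Y_n(0)]_{12}$ in two different ways. From \eqref{Ytilde 0} and \eqref{transform y to ytilde} we first obtain $\beta_n = [Y_n(0)]_{12}$. The Cauchy representation in \eqref{def Yz}, combined with the residue at $w = 0$ (shrinking $\Gamma$ inward, and introducing a keyhole around the segment $(0,a)$ when $Nc$ is not an integer), produces
\begin{equation*}
\beta_n = \frac{1}{2\pi\ii}\int_\Gamma \frac{p_n(w)(w-a)^{Nc}e^{-Naw}}{w^{Nc+n+1}}\,\dd w = \bigl[w^{Nc+n}\bigr]\bigl\{p_n(w)(w-a)^{Nc}e^{-Naw}\bigr\},
\end{equation*}
the second equality being Cauchy's coefficient formula in the case of positive integer $Nc$. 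Independently, the planar orthogonality \eqref{eq ortho4} together with $\overline{p_n(z)} = \overline z^n + \text{lower orders in }\overline z$ gives $h_n = \int_{\CC}p_n(z)\overline z^n |z-a|^{2Nc}e^{-N|z|^2}\,\dd A(z)$. Expanding $(\overline z - a)^{Nc}$ binomially and applying the elementary identity $\int F(z)\overline z^m e^{-N|z|^2}\,\dd A(z) = \pi\, m!\, N^{-m-1}[z^m]F(z)$ for entire $F$ with Gaussian decay converts $h_n$ into a finite linear combination of the Taylor coefficients of $p_n(w)(w-a)^{Nc}$. Matching this sum against the single-coefficient formula for $\beta_n$ above, and invoking the \emph{lower} orthogonality relations $\int p_n(z)\overline z^k |z-a|^{2Nc}e^{-N|z|^2}\,\dd A(z) = 0$ for $k = 0,\dots,n-1$ to absorb the surplus low-order terms (which appear in the $\beta_n$ expansion through the Taylor series of $e^{-Naw}$ but are absent from the direct $h_n$ sum) yields the claimed constant ratio $\pi\Gamma(Nc+n+1)/N^{Nc+n+1}$ between $h_n$ and $\beta_n$.

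\textbf{Main obstacle.} The genuine work is in (4). Identities (1)--(3) are algebraic rearrangements of the known Lax pair, but (4) requires bridging a 2D $L^2$ integral with a 1D contour residue, and the identification only closes after using the full family of $n$ lower orthogonality relations simultaneously---it is an identity that fails term by term and succeeds only globally. For non-integer $Nc$ the contour integral demands the additional technical care of a keyhole around the branch cut on $(0,a)$, but this does not alter the structure of the argument.
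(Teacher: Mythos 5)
Your treatment of \eqref{identity 1}--\eqref{identity 3} is correct and coincides with the paper's: the paper likewise reads off entries of $\widetilde Y_{n+1}=M_n\widetilde Y_n$ and of $\dd\widetilde Y_n/\dd z=A_n\widetilde Y_n$, peels off the diagonal conjugation factor $\big(\tfrac{z-a}{z}\big)^{Nc}\ee^{-Naz}$, and substitutes the recurrences \eqref{recurence1}. Nothing to add there.

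For \eqref{identity 4} you take a genuinely different route. The paper does not re-derive the bridge between the planar and the contour data: it quotes Proposition 7.1 of \cite{Ba 2015}, namely $h_n=-\tfrac{\Gamma(Nc+n+1)}{2\ii N^{Nc+n+1}}\tfrac{\widetilde h_n}{p_{n+1}(0)}$, and then finishes with the two one-line facts $\widetilde h_n=-2\pi\ii b_n$ and $p_{n+1}(0)=\alpha_{n+1}=b_n/\beta_n$. You instead compare $h_n$ and $\beta_n$ directly. When $Nc$ is a positive integer your scheme does close, and in fact more cleanly than you describe: pair $p_n$ against $\overline{(z-a)^n}$ rather than $\overline{z^n}$ (legitimate by the lower orthogonality relations, since $(z-a)^n=z^n+\text{lower}$). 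Writing $G(z):=p_n(z)(z-a)^{Nc}=\sum_m g_m z^m$ and $L:=Nc+n$, the Gaussian moment identity gives
\begin{equation*}
h_n=\int_\CC G(z)\,\overline{(z-a)^{L}}\,\ee^{-N|z|^2}\dd A(z)=\pi\sum_m\binom{L}{m}(-a)^{L-m}\frac{m!}{N^{m+1}}g_m=\frac{\pi L!}{N^{L+1}}\sum_m\frac{(-Na)^{L-m}}{(L-m)!}g_m=\frac{\pi\,\Gamma(L+1)}{N^{L+1}}\beta_n,
\end{equation*}
with no residual cancellation to arrange. So the "fails term by term, succeeds globally" difficulty you anticipate disappears with the right test function.

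The genuine gap is the non-integer case, which is the generic one here ($c>0$ arbitrary), and you dismiss it too quickly. Your argument uses, in an essential way, that $(z-a)^{Nc}$ is entire: the binomial expansion of $(\bar z-a)^{Nc}$ becomes an infinite series with radius of convergence $|z|<a$ (or an expansion at infinity), so the identity $\int F(z)\overline{z^m}\ee^{-N|z|^2}\dd A=\pi m!N^{-m-1}[z^m]F$ no longer applies, and on the contour side $\omega_{n,N}(w)/w$ has a branch cut on $[0,a]$ \emph{inside} $\Gamma$, so $\beta_n=[Y_n(0)]_{12}$ cannot be collapsed to a residue or Taylor coefficient at $w=0$; a keyhole around $(0,a)$ replaces it by an integral against the jump of $((w-a)/w)^{Nc}$ across the cut, and relating that to the two-dimensional Gaussian integral of $|z-a|^{2Nc}$ is precisely the nontrivial equivalence of planar and contour orthogonality that \cite{Ba 2015} establishes and the paper imports. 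So your structure is sound and your integer-$Nc$ computation is a valid alternative to the citation, but as written the proof of \eqref{identity 4} is incomplete for non-integer $Nc$; you would either need to carry out the cut analysis or, as the paper does, invoke \cite{Ba 2015}.
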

\begin{proof}

Using  $\widetilde{Y}_{n+1}(z)=M_n(z){\widetilde{Y}_n(z)}$ in \eqref{ytilde lax}, the definition of ${\widetilde{Y}_n(z)}$ in \eqref{transform y to ytilde} with \eqref{def mn} and  \eqref{def mn1}, we have
\begin{equation}
    \begin{aligned}
    Y_n(z)
    =M_n(z)^{-1}Y_{n+1}(z)\begin{bmatrix}
1&0 \\
0&z
\end{bmatrix}\quad \mbox{and}\quad
 Y_{n+1}(z)=M_n(z)Y_{n}(z)\begin{bmatrix}
1&0 \\
0&z^{-1}
\end{bmatrix}.
    \end{aligned}
\end{equation} It follows that
\begin{equation}\label{eq pn pn1}
    \begin{aligned}
    p_n(z)
    =\frac{1}{z}\Big(p_{n+1}(z)+b_nq_{n+1}(z)\Big)\quad \mbox{and}\quad
 p_{n+1}(z)=(z+a_{n+1}-a_n)p_n(z)-b_nq_n(z).
    \end{aligned}\end{equation}
Combining \eqref{recurence1} and \eqref{eq pn pn1}, we can see that \eqref{identity 1} and \eqref{identity 2} hold.
    
By \eqref{eq An} and the definition of ${\widetilde{Y}_n(z)}$ in \eqref{transform y to ytilde} we have
    \begin{equation}\nonumber
\frac{\dd{Y}_n(z)}{\dd z}=A_n(z){{Y}_n(z)}-Y_n(z)\frac{\dd}{\dd z}\begin{bmatrix}
\left(\frac{z-a}{z}\right)^{Nc}\frac{1}{\ee^{Naz}}&0 \\
0&z^{n}
\end{bmatrix}\cdot \begin{bmatrix}
\left(\frac{z-a}{z}\right)^{Nc}\frac{1}{\ee^{Naz}}&0 \\
0&z^{n}
\end{bmatrix}^{-1}.    
\end{equation}
    
Using the above equation and the explicit expression of $A_n(z)$ in \eqref{def An} we have
$$\begin{aligned}
p_n'(z)&=\bigg(-N a+\frac{-Nc-(Nc+n)\beta_n\gamma_n}{z}+\frac{Nc+n+(Nc+n)\beta_n\gamma_n}{z-a}\bigg)p_n(z)\\
&\quad +\bigg(\frac{(Nc+n)\beta_n\alpha_n}{z}+\frac{Nab_n-(Nc+n)\beta_n\alpha_n}{z-a}\bigg)q_n(z)-\bigg(\frac{Nc}{z-a}-\frac{Nc}{z}-Na\bigg)p_n(z)\\
&=\bigg(-\frac{(Nc+Nt)\beta_n\gamma_n}{z}+\frac{Nt+(Nc+Nt)\beta_n\gamma_n}{z-a}\bigg)p_n(z)\\
&\quad +\bigg(\frac{(Nc+Nt)\beta_n\alpha_n}{z}+\frac{Nab_n-(Nc+Nt)\beta_n\alpha_n}{z-a}\bigg)q_n(z)\\
&=N\bigg(-\frac{(c+t)\beta_n\gamma_n}{z}+\frac{t+(c+t)\beta_n\gamma_n}{z-a}\bigg)p_n(z)+N\bigg(\frac{(c+t)\beta_n\alpha_n}{z}+\frac{ab_n-(c+t)\beta_n\alpha_n}{z-a}\bigg)q_n(z).
\end{aligned}$$ 

Let us define the norming constant $\widetilde h_n$ of $p_n(z)$ with respect to the weight $\omega_{n,N}(z)\dd z$ by
$$\widetilde h_n:=\int_\Gamma p_{n}^2(z)\omega_{n,N}(z)\dd z.$$ Using Proposition 7.1 in \cite{Ba 2015}
we have, 
    \begin{equation}
    h_n=-\frac{\Gamma(Nc+n+1)}{2\ii N^{Nc+n+1}}\frac{\widetilde{h}_n}{p_{n+1}(0)}=\frac{\Gamma(Nc+n+1)}{N^{Nc+n+1}}\frac{\pi b_n}{\alpha_{n+1}}=\frac{\Gamma(Nc+n+1)}{N^{Nc+n+1}}\pi \beta_n,
\end{equation}
where the 2nd equality is obtained by the fact that $$\widetilde h_n=-\lim_{z\to\infty }2\pi\ii z^{n+1}[Y_n(z)]_{12}=-2\pi\ii b_n,$$ and the last equality is obtained by \eqref{recurence1}. This ends the proof.
\end{proof}

\section{The C-D identity: Proof of Lemma \ref{preker}}\label{section 3.1}
In this section, we prove Lemma \ref{preker} by combining the C-D identity in Theorem \ref{thm 21} with the asymptotics of orthogonal polynomials in Propositions \ref{thm ortho} and \ref{thm ortho1}, and the norming constants and the recurrence relations in Section \ref{section 2.3}.

\begin{lemma}\label{preker}
Let ${\cal K}_{n}(z,\zeta)$ be the pre-kernel defined in \eqref{def prekernel}.  Defining the local scaling coordinates $u$ and $v$ such that
\begin{equation}\label{eq zzeta def}
z=b_c+\frac{\ii \gamma_c}{N^{1/3}} u\quad  \mbox{and} \quad \zeta=b_c+\frac{\ii \gamma_c}{N^{1/3}}v,    
\end{equation}
 we have
\begin{equation}\label{eq lem26}
\begin{aligned}
\frac{1}{N^{5/6}}\overline{\partial}_{\zeta} {\cal K}_{n}(z, \zeta)
&=\exp\bigg(\frac{N^{1/3}\gamma_c^2(u-\bar v)^2}{2}\bigg)\exp \bigg(\ii s (u-\bar v)-\frac{4\ii(b_c+\sqrt c)}{3a}(u^3-\bar v^3)\bigg)\\
&\quad \times \frac{N^{2/3}}{ \sqrt 2 \pi^{3/2}}\bigg(\Psi_{11}(u;s)\overline{\Psi_{11}(v;s)}-\Psi_{21}(u;s)\overline{\Psi_{21}(v;s)}+{\cal O}\Big(\frac{1}{N^{1/3}}\Big)\bigg).
\end{aligned}
\end{equation}
Here the error bound is uniform over $u$ and $v$ in a compact subset of $\CC$.
\end{lemma}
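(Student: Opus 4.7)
The proof applies the Christoffel-Darboux identity \eqref{CDI_v3} and substitutes the refined orthogonal-polynomial asymptotics of Propositions~\ref{thm ortho}--\ref{thm ortho1}. First, using \eqref{identity 1} one rewrites $\psi_n(z)-z\psi_{n-1}(z)=(z-a)^{Nc}(\alpha_n/\gamma_n)q_n(z)$, and using \eqref{identity 2}, $\psi_{n+1}(z)-z\psi_n(z)$ becomes a linear combination of $\psi_n(z)$ and $(z-a)^{Nc}q_n(z)$ whose coefficients are determined by the recurrence \eqref{recurence1}. Each summand on the right-hand side of \eqref{CDI_v3} is thereby expressed as a product of one of $\{(z-a)^{Nc}p_n(z),(z-a)^{Nc}q_n(z)\}$ with one of $\{\bar\partial_\zeta\overline{\psi_n(\zeta)},\overline{\psi_{n-1}(\zeta)}\}$. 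The three prefactors $1/\bigl(\tfrac{n+Nc}{N}h_{n-1}-h_n\bigr)$, $(Nh_n/h_{n-1})/\bigl(\tfrac{n+Nc+1}{N}h_n-h_{n+1}\bigr)$ and $p_{n+1}(a)/p_n(a)$ are then computed from \eqref{identity 4}, Stirling's formula for $\Gamma(Nc+n+k)$, and the large-$N$ expansions of $\beta_n$ and the other recurrence coefficients read off from Proposition~\ref{thm ortho1} via \eqref{Ytilde inf}--\eqref{Ytilde 0} together with \eqref{recurence1}.

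Next I substitute $z=b_c+i\gamma_c u/N^{1/3}$ and $\zeta=b_c+i\gamma_c v/N^{1/3}$ into \eqref{pn strong}. Using \eqref{zeta expanssion} with the explicit values \eqref{eq-gamma123} one obtains $\xi(z)=u+O(N^{-1/3})$ and $\hat s=s+O(N^{-2/3})$, so that $[H\Psi]_{j1}^z=\Psi_{j1}(u;s)+O(N^{-1/3})$, and analogously for the conjugate entries at $v$, with the sectorial analytic continuation governed by \eqref{def of psi0}. The two terms in \eqref{CDI_v3} then produce a bilinear expression in $\Psi_{11}(u;s),\Psi_{21}(u;s)$ and their $v$-counterparts; the Painlev\'e~II identities \eqref{relation up}, combined with the recurrence-coefficient algebra of the first step, collapse this bilinear into $\Psi_{11}(u;s)\overline{\Psi_{11}(v;s)}-\Psi_{21}(u;s)\overline{\Psi_{21}(v;s)}$ with the overall constant $N^{2/3}/(\sqrt 2\,\pi^{3/2})$.

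The exponential prefactor in \eqref{eq lem26} is assembled by expanding
\[
-Nz\bar\zeta+Na(z+\bar\zeta)+Nc\log\bigl((z-a)(\bar\zeta-a)\bigr)
\]
to order $N^{-1/3}$ around $z=\zeta=b_c$. Using $(b_c-a)^2=c$, the putative $O(N^{2/3})$ and $O(N^{1/3})$ imbalances cancel after completing the square, leaving $\gamma_c^2N^{1/3}(u-\bar v)^2/2$ together with the cubic contribution $-i\gamma_c^3(u^3-\bar v^3)/(3\sqrt c)=-i\,\tfrac{8b_c}{3a}(u^3-\bar v^3)$. Combined with the Painlev\'e phases $e^{N\phi(z)/2+\overline{N\phi(\zeta)/2}}$, which by \eqref{def xi} contribute $-\tfrac{4i}{3}(u^3-\bar v^3)$ and the $is(u-\bar v)$ term (sign fixed by the domain conventions of $g(z)$ in \eqref{def gfunction} and the analytic continuation \eqref{def of psi0} of $\Psi$), and with the paired constants $e^{\pm i\tfrac{8}{3}\xi(\beta)^3}$ from Proposition~\ref{thm ortho}, one recovers the full prefactor stated in \eqref{eq lem26}.

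The main obstacle is the subleading analysis of the normalisation differences $\tfrac{n+Nc}{N}h_{n-1}-h_n$ and $\tfrac{n+Nc+1}{N}h_n-h_{n+1}$: the leading Gamma-ratio parts cancel, so the correct $N^{-5/6}$ normalisation emerges only after tracking the $O(N^{-1/3})$ and $O(N^{-2/3})$ corrections to $\beta_n$ through the partial Schlesinger transform of Proposition~\ref{thm ortho1}. Matching these refined asymptotics with the subleading data in $[H\Psi]_{jk}^z$, so that the resulting bilinear collapses exactly via \eqref{relation up} into the stated Wronskian of $\Psi_{11}$ and $\Psi_{21}$, is where the bulk of the bookkeeping lies.
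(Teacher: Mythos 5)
Your proposal follows essentially the same route as the paper's proof: split the C--D identity into ${\rm I}_n-{\rm II}_n$, rewrite the $\psi$-differences via the recurrence identities \eqref{identity 1}--\eqref{identity 4}, insert the refined asymptotics of Propositions~\ref{thm ortho} and~\ref{thm ortho1} together with Stirling, and observe that the large terms cancel in the difference so that the Wronskian-type combination $\Psi_{11}\overline{\Psi_{11}}-\Psi_{21}\overline{\Psi_{21}}$ survives at order $N$; you also correctly locate the real work in the subleading expansion of $\beta_n$ and of $[H\Psi]^z_{jk}$. One arithmetic inconsistency in your prefactor assembly: the two cubic contributions you list, $-\tfrac{8\ii b_c}{3a}(u^3-\bar v^3)$ from the potential part and $-\tfrac{4\ii}{3}(u^3-\bar v^3)$ from the Painlev\'e phase, sum (using $b_c=a+\sqrt c$) to $-\tfrac{\ii(12a+8\sqrt c)}{3a}(u^3-\bar v^3)$ rather than the stated $-\tfrac{4\ii(b_c+\sqrt c)}{3a}(u^3-\bar v^3)=-\tfrac{\ii(4a+8\sqrt c)}{3a}(u^3-\bar v^3)$. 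The Painlev\'e phase must enter with $+\tfrac{4\ii}{3}(u^3-\bar v^3)$, with the same sign as the $+\ii s(u-\bar v)$ term you retain from the very same factor $\ee^{\ii(\frac43\xi^3+\hat s\xi)}$ (cf.\ \eqref{eqxi3}); with that sign the totals match \eqref{eq lem26}.
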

\begin{proof}

 By Theorem \ref{thm 21}, we write
  \begin{equation}\label{main identity}
\overline{\partial}_{\zeta} {\cal K}_{n}(z, \zeta)= {\rm I}_n- {\rm II}_n  ,
\end{equation}
where 
\begin{equation}\label{eq IN}
{\rm I}_n:={\rm I}_n(z,\zeta)={\ee^{ -Nz\overline{\zeta} }}  \frac{1}{ \frac{n+Nc}{N}h_{n-1}-h_n  }
 \overline{ \psi'_{n}(\zeta) }    \Big( \psi_{n}(z)-z \psi_{n-1}(z) \Big),
\end{equation}
and
\begin{equation}\label{eq IIN}
{\rm II}_n:={\rm II}_n(z,\zeta)={\ee^{ -Nz\overline{\zeta} }}\frac{p_{n+1}(a)}{p_n(a)} \frac{Nh_n/h_{n-1} }{ \frac{n+Nc+1}{N} h_n-h_{n+1}    } 
 \overline{ \psi_{n-1}(\zeta) }  \Big(  \psi_{n+1}(z)-z\psi_n(z)  \Big).
\end{equation}

In order to compute \eqref{main identity}, let us prepare several useful identities.

First of all, by \eqref{identity 1}, \eqref{identity 2}, and the definition of $\psi_n$ in \eqref{def psi},  we have    
\begin{align}
\psi_{n}(z)- z\psi_{n-1}(z) 
    &=\frac{\alpha_n}{\gamma_n}(z-a)^{Nc}q_{n}(z),\label{psi differ1}\\
 \psi_{n+1}(z)-z\psi_{n}(z) &=\frac{b_n \left(1+\beta_n\gamma_n\right)}{\alpha_n
\beta_n}\psi_{n}(z)-b_n(z-a)^{Nc}q_n(z).\label{psi differ2}
    \end{align}
Using \eqref{identity 2}, we have the following identity
\begin{equation}\label{eq pn1 and pn11}
\frac{p_{n+1}(a)}{p_n(a)}=a+\frac{b_n \left(1+\beta_n\gamma_n\right)}{\alpha_n
\beta_n}-b_n\frac{q_n(a)}{p_n(a)}.    
\end{equation}

Moreover, using \eqref{identity 3} and the definition of $\psi_n$ in \eqref{def psi}, we have
\begin{equation}\label{psin p}
\begin{aligned}
\psi_n'(z)&=\frac{Nc}{z-a}\psi_n(z)+(z-a)^{Nc}p'_n(z)\\
&=\frac{Nc}{z-a}\psi_n(z)+N\bigg(\frac{t+(c+t)\beta_n\gamma_n}{z-a}-\frac{(c+t)\beta_n\gamma_n}{z}\bigg)\psi_n(z)\\
&\quad +N\bigg(\frac{(c+t)\beta_n\alpha_n}{z}+\frac{ab_n-(c+t)\beta_n\alpha_n}{z-a}\bigg)(z-a)^{Nc}q_n(z)\\
&=N(c+t)\bigg(\frac{1+\beta_n\gamma_n}{z-a}-\frac{\beta_n\gamma_n}{z}\bigg)\psi_n(z)+N\bigg(\frac{(c+t)\beta_n\alpha_n}{z}+\frac{ab_n-(c+t)\beta_n\alpha_n}{z-a}\bigg)(z-a)^{Nc}q_n(z).
\end{aligned}    
\end{equation}

Lastly, using \eqref{identity 4} in Proposition \ref{prop 2.5} and \eqref{recurence1}, we have
\begin{align}
    \frac{1}{ \frac{n+Nc}{N}h_{n-1}-h_n  }&=\frac{N^{Nc+n+1}}{\Gamma(Nc+n+1)}\frac{1}{\pi \big(\beta_{n-1}-\beta_n\big)}=-\frac{N^{Nc+n+1}}{\Gamma(Nc+n+1)}\frac{1}{\pi \big(\frac{1}{\gamma_n}+\beta_n\big)},\label{hn1}\\
    \frac{1}{ \frac{n+Nc+1}{N}h_{n}-h_{n+1}}&=\frac{N^{Nc+n+2}}{\Gamma(Nc+n+2)}\frac{1}{\pi \big(\beta_{n}-\beta_{n+1}\big)},\label{hn2}\\
\frac{Nh_n}{h_{n-1}}&=(Nc+n)\frac{\beta_n}{\beta_{n-1}}=-(Nc+n)\beta_n\gamma_n.\label{hn3}
\end{align}

Using \eqref{hn1}, \eqref{psin p} and \eqref{psi differ1}, we rewrite ${\rm I}_n$ from  \eqref{eq IN} as 
\begin{align}\label{eq in middle}
\begin{split}
&{\rm I}_n=-{\ee^{ -Nz\overline{\zeta} }} \frac{N^{Nc+n+1}}{\pi\Gamma(Nc+n+1)}\frac{N\alpha_n}{ 1+\beta_n\gamma_n}(z-a)^{Nc}q_{n}(z)\\
&\quad\times \Bigg((c+t)\bigg(\frac{1+\beta_n\gamma_n}{\zeta-a}-\frac{\beta_n\gamma_n}{\zeta}\bigg)\psi_n(\zeta)+\bigg(\frac{(c+t)\beta_n\alpha_n}{\zeta}+\frac{ab_n-(c+t)\beta_n\alpha_n}{\zeta-a}\bigg)(\zeta-a)^{Nc}q_n(\zeta)\Bigg)^*.\quad\quad  
\end{split}
\end{align} 
Then we have
\begin{align}\label{eq in}
\begin{split}
{\rm I}_n
&= {\ee^{ -Nz\overline{\zeta} }} \frac{N^{Nc+n+1}}{\pi\Gamma(Nc+n+1)} \ee^{\ii\frac{8}{3}\xi(\beta)^3}(z-a)^{Nc}\ee^{Ntg(z)}\Big(\ee^{\ii\frac{8}{3}\xi(\beta)^3}(\zeta-a)^{Nc}\ee^{Ntg(\zeta)}\Big)^*\\
&\quad\times \frac{N\alpha_n}{1+\beta_n\gamma_n}\frac{ \ee^{\frac{N}{2}\phi(z)}[H\Psi]_{11}^z+{\cal O}(N^{-1})}{\ee^{tN\ell+\ii\frac{16}{3}\xi(\beta)^3}}\Bigg((c+t)\bigg(\frac{1+\beta_n\gamma_n}{\zeta-a}-\frac{\beta_n\gamma_n}{\zeta}\bigg)\Big(\ee^{\frac{N}{2}\phi(\zeta)}[H\Psi]_{21}^\zeta+{\cal O}\big(N^{-1}\big)\Big)\\
&\quad\quad-\bigg(\frac{(c+t)\beta_n\alpha_n}{\zeta}+\frac{ab_n-(c+t)\beta_n\alpha_n}{\zeta-a}\bigg)\frac{\ee^{\frac{N}{2}\phi(\zeta)}[H\Psi]_{11}^\zeta+{\cal O}(N^{-1})}{\ee^{tN\ell+\ii\frac{16}{3}\xi(\beta)^3}}\Bigg)^*\\
&=Q_n  \frac{N\alpha_n}{1+\beta_n\gamma_n}\frac{ \ee^{\frac{N}{2}\phi(z)}[H\Psi]_{11}^z+{\cal O}(N^{-1})}{\ee^{tN\ell+\ii\frac{16}{3}\xi(\beta)^3}\ee^{\frac{N}{2}\phi(z)}}\Bigg((c+t)\bigg(\frac{1+\beta_n\gamma_n}{\zeta-a}-\frac{\beta_n\gamma_n}{\zeta}\bigg)\frac{\ee^{\frac{N}{2}\phi(\zeta)}[H\Psi]_{21}^\zeta+{\cal O}\big(N^{-1}\big)}{\ee^{\frac{N}{2}\phi(\zeta)}}\\
&\quad\quad-\bigg(\frac{(c+t)\beta_n\alpha_n}{\zeta}+\frac{ab_n-(c+t)\beta_n\alpha_n}{\zeta-a}\bigg)\frac{\ee^{\frac{N}{2}\phi(\zeta)}[H\Psi]_{11}^\zeta+{\cal O}(N^{-1})}{\ee^{tN\ell+\ii\frac{16}{3}\xi(\beta)^3}\ee^{\frac{N}{2}\phi(z)}}\Bigg)^*,
\end{split}
\end{align} 
where we defined the common factor
\begin{equation}\label{eq common factor Q}
Q_n ={e^{ -Nz\overline{\zeta}}  }\frac{N^{Nc+n+1}}{\pi\Gamma(Nc+n+1)}\ee^{\ii\frac{8}{3}\xi(\beta)^3}(z-a)^{{Nc}}\ee^{\frac{N(2tg(z)+\phi(z))}{2}}\Big(\ee^{\ii\frac{8}{3}\xi(\beta)^3}(\zeta-a)^{{Nc}}\ee^{\frac{N(2tg(\zeta)+\phi(\zeta))}{2}}\Big)^*,
\end{equation}
that will also appear in the expression for ${\rm II}_n$ in \eqref{eq iin} below.
Here and below the superscript $*$ means the complex conjugation but $b_c^*$ does not represent the complex conjugation of $b_c$, which is defined in \eqref{def bcstar}.  The 1st equality is obtained by \eqref{pn strong}. The error bound is uniform over $D_c$.

Similarly, using \eqref{eq pn1 and pn11}, \eqref{hn2}, \eqref{hn3}, and \eqref{psi differ2}, we rewrite ${\rm II}_n$ from \eqref{eq IIN} as 
\begin{align}\label{eq iin middle}
\begin{split}
{\rm II}_n&={\ee^{ -Nz\overline{\zeta} }} \frac{p_{n+1}(a)}{p_n(a)} \frac{Nh_n/h_{n-1} }{ \frac{n+Nc+1}{N} h_n-h_{n+1}    }\overline{ \psi_{n-1}(\zeta) }    \Big( \psi_{n+1}(z)-z \psi_{n}(z) \Big)\\
&={\ee^{ -Nz\overline{\zeta} }}\frac{N^{Nc+n+2}}{\Gamma(Nc+n+2)}\bigg(a+\frac{b_n \left(1+\beta_n\gamma_n\right)}{\alpha_n
\beta_n}-b_n\frac{q_n(a)}{p_n(a)}\bigg)\frac{(Nc+n)\beta_n\gamma_n}{\pi \big(\beta_{n+1}-\beta_n\big)}\\
&\quad\times \bigg(\frac{b_n \left(1+\beta_n\gamma_n\right)}{\alpha_n
\beta_n}\psi_{n}(z)-b_n(z-a)^{Nc}q_n(z)\bigg)\bigg(\frac{1}{\zeta}\Big(\psi_{n}(\zeta)-\frac{\alpha_n}{\gamma_n}(\zeta-a)^{Nc}q_{n}(\zeta)\Big)\bigg)^*.
\end{split}
\end{align}
Then we have
\begin{align}\label{eq iin}
\begin{split}
{\rm II}_n
&={\ee^{ -Nz\overline{\zeta} }} \frac{N^{Nc+n+1}}{\pi\Gamma(Nc+n+1)} \ee^{\ii\frac{8}{3}\xi(\beta)^3}(z-a)^{Nc}\ee^{Ntg(z)}\Big(\ee^{\ii\frac{8}{3}\xi(\beta)^3}(\zeta-a)^{Nc}\ee^{Ntg(\zeta)}\Big)^*\\
&\quad\times \frac{N(Nc+n)}{Nc+n+1}\bigg(a+\frac{b_n \left(1+\beta_n\gamma_n\right)}{\alpha_n
\beta_n}-b_n\frac{q_n(a)}{p_n(a)}\bigg)\frac{\beta_n\gamma_n}{ \beta_{n+1}-\beta_n} b_n\Bigg(\frac{ \ee^{\frac{N}{2}\phi(z)}[H\Psi]_{21}^z+{\cal O}(N^{-1})}{\alpha_n
\beta_n(1+\beta_n\gamma_n)^{-1}}\\
&\quad +\frac{ \ee^{\frac{N}{2}\phi(z)}[H\Psi]_{11}^z+{\cal O}(N^{-1})}{\ee^{tN\ell+\ii\frac{16}{3}\xi(\beta)^3}}\Bigg) \Bigg(\frac{ \ee^{\frac{N}{2}\phi(\zeta)}[H\Psi]_{21}^\zeta+{\cal O}(N^{-1})}{\zeta}+\frac{\alpha_n}{\gamma_n}\frac{ \ee^{\frac{N}{2}\phi(\zeta)}[H\Psi]_{11}^\zeta+{\cal O}(N^{-1})}{\zeta\ee^{tN\ell+\ii\frac{16}{3}\xi(\beta)^3}}\Bigg)^*\\
&=Q_n \frac{N(Nc+n)}{Nc+n+1}\bigg(a+\frac{b_n \left(1+\beta_n\gamma_n\right)}{\alpha_n
\beta_n}-b_n\frac{q_n(a)}{p_n(a)}\bigg)\frac{\beta_n\gamma_n}{ \beta_{n+1}-\beta_n} b_n\Bigg(\frac{ \ee^{\frac{N}{2}\phi(z)}[H\Psi]_{21}^z+{\cal O}(N^{-1})}{\ee^{\frac{N}{2}\phi(z)}\alpha_n
\beta_n(1+\beta_n\gamma_n)^{-1}}\\
&\quad +\frac{ \ee^{\frac{N}{2}\phi(z)}[H\Psi]_{11}^z+{\cal O}(N^{-1})}{\ee^{tN\ell+\ii\frac{16}{3}\xi(\beta)^3}\ee^{\frac{N}{2}\phi(z)}}\Bigg) \Bigg(\frac{ \ee^{\frac{N}{2}\phi(\zeta)}[H\Psi]_{21}^\zeta+{\cal O}(N^{-1})}{\zeta\ee^{\frac{N}{2}\phi(\zeta)}}+\frac{\alpha_n}{\gamma_n}\frac{ \ee^{\frac{N}{2}\phi(\zeta)}[H\Psi]_{11}^\zeta+{\cal O}(N^{-1})}{\zeta\ee^{tN\ell+\ii\frac{16}{3}\xi(\beta)^3}\ee^{\frac{N}{2}\phi(\zeta)}}\Bigg)^*.
\end{split}
\end{align}
Here the 1st equality is obtained by \eqref{pn strong}. The error bound is again uniform over $D_c$.

We recall that
\begin{equation}\label{eq n and N}
n=Nt,\quad  t=t_c+\frac{2b_cs}{\gamma_cN^{2/3}}.   
\end{equation}

We define the local scaling coordinates $u$ and $v$ by
\begin{equation}\label{def z and zeta}
z=b_c+\frac{\ii \gamma_c u}{N^{1/3}},\quad \zeta=b_c+\frac{\ii \gamma_c v}{N^{1/3}}.    
\end{equation}

We now compute ${\rm I}_n$ and ${\rm II}_n$. Using the expression of $\beta$ in \eqref{def b and beta} with \eqref{eq n and N}  we have 
\begin{equation}\label{beta expansion}
  \beta=b_c-\frac{\gamma_c\sqrt s}{2 N^{1/3}}+{\cal O}\Big(\frac{1}{N^{2/3}}\Big).
\end{equation}
Moreover, using $\ell$ in \eqref{def ell} with \eqref{beta expansion} we have 
\begin{equation} \label{ell expansion}
  t\ell =-ab_c+b_c^2\log (b_c)-\frac{c\log (c)}{2}+\frac{2b_c s\log(b_c)}{\gamma_cN^{2/3}}-\frac{2s^{3/2}}{3N}+{\cal O}\Big(\frac{1}{N^{4/3}}\Big). 
\end{equation}

Using the expressions for $z$ and $\zeta$ in \eqref{def z and zeta}, the expansion for $\ell$ in \eqref{ell expansion}, and \eqref{eq n and N} we have \begin{equation}\label{common1}
\begin{aligned}
\ee^{ -Nz\overline{\zeta}} &=\exp\big( -Nb_c^2-\ii N^{2/3} b_c\gamma_c(u-\bar v) -N^{1/3}\gamma_c^2u\bar v\big) ,\\
\frac{N^{Nc+n+1}}{\pi\Gamma(Nc+n+1)}&=\exp\big(Nb_c^2\big)b_c^{-2Nb_c^2-N^{1/3}\frac{4b_cs}{\gamma_c}}\frac{\sqrt N}{\sqrt 2 \pi^{3/2}b_c}\bigg(1+{\cal O}\Big(\frac{1}{N^{1/3}}\Big)\bigg),\\
\ee^{\frac{Naz+tN\ell}{2}}&=\exp\bigg[ \frac{N^{2/3}}{2}a\gamma_c \ii u-\frac{s^{3/2}}{3}\bigg] b_c^{\frac{N}{2}b_c^2+N^{1/3}\frac{b_cs}{\gamma_c}}c^{-\frac{Nc}{4}}\bigg(1+{\cal O}\Big(\frac{1}{N^{1/3}}\Big)\bigg),\\
z^{\frac{n+Nc}{2}}(z-a)^{\frac{Nc}{2}}&=\exp\bigg[\frac{N^{2/3}}{2}(a+2\sqrt c)\gamma_c \ii u+\frac{N^{1/3}}{2} \gamma_c^2u^2+\ii s u-\frac{\ii (b_c+\sqrt c) \gamma_c^3u^3}{6\sqrt c b_c}\bigg]\\
&\qquad \times b_c^{\frac{N}{2}b_c^2+N^{1/3}\frac{b_cs}{\gamma_c}}c^{\frac{Nc}{4}}\bigg(1+{\cal O}\Big(\frac{1}{N^{1/3}}\Big)\bigg).  
\end{aligned}    
\end{equation}
Here the 2nd identity is obtained by the Stirling approximation formula \begin{equation}\label{eq gamma Stirling}
 \Gamma(z)=\frac{z^z}{\ee^z}\sqrt{\frac{2\pi}{z}}\bigg(1+{\cal O}\Big(\frac{1}{z}\Big)\bigg).   
\end{equation}

Moreover, for $z\in D_c\cap {\rm Ext}{\cal B}$, using the definition of $\xi(z)$ in \eqref{def xi}, the expansion of $\xi(z)$ in \eqref{zeta expanssion} with \eqref{eq-gamma123} and \eqref{def b star}, $\hat s$ in \eqref{def zetabeta}, and $s$  in \eqref{def gammacl}, we have
\begin{equation}\label{eqxi3}
\begin{aligned}
  \exp\Big(\frac{N}{2}\phi(z)\Big)&=\exp\Big[\ii\Big(\frac{4}{3}\xi(z)^3+\hat s\xi(z)+\frac{8}{3}\xi(\beta)^3\Big)\Big]\\
  &=\exp\Big[\ii\Big(\frac{4}{3}u^3+su\Big)\Big]\exp\Big(\ii\frac{8}{3}\xi(\beta)^3\Big)\Big(1-\frac{(4u^2+s)(2a\sqrt c b_cs-u^2t_c\gamma_c^3)}{4a\sqrt c b_c\gamma_c^2N^{1/3}}+{\cal O}\big(\frac{1}{N^{2/3}}\big)\Big).      
\end{aligned}
\end{equation}
Using the definitions of $g(z)$ in \eqref{def gfunction} and $\phi(z)$ in \eqref{def phi}, we also have
\begin{equation}\label{eq expansion of some factor}
(z-a)^{Nc}\ee^{\frac{N(2tg(z)+\phi(z))}{2}}=z^{\frac{n+Nc}{2}}(z-a)^{\frac{Nc}{2}}\ee^{\frac{Naz+tN\ell}{2}}.
\end{equation}

The common factor  of ${\rm I}_n$ and ${\rm II}_n$  from \eqref{eq common factor Q} is given by
\begin{equation}\label{eq common factor}
 \begin{aligned}
Q_n&={e^{ -Nz\overline{\zeta}}  }\frac{N^{Nc+n+1}\ee^{\ii\frac{8}{3}\xi(\beta)^3}}{\pi\Gamma(Nc+n+1)}z^{\frac{n+Nc}{2}}(z-a)^{\frac{Nc}{2}}\ee^{\frac{Naz+tN\ell}{2}}\Big(\zeta^{\frac{n+Nc}{2}}(\zeta-a)^{\frac{Nc}{2}}\ee^{\ii\frac{8}{3}\xi(\beta)^3}\ee^{\frac{Na\zeta+tN\ell}{2}}\Big)^*\\
& =\exp\bigg(\frac{N^{1/3}\gamma_c^2(u-\bar v)^2}{2}\bigg)\exp \bigg( \ii s (u-\bar v)-\frac{\ii(b_c+\sqrt c)}{6\sqrt c b_c}\gamma_c^3(u^3-\bar v^3)\bigg)  \\
&\quad \times \exp\bigg(\frac{16}{3}{\rm Re}(\ii\xi(\beta)^3)-\frac{2{\rm Re}(s^{3/2})}{3}\bigg)\frac{\sqrt N}{\sqrt 2 \pi^{3/2}b_c }\bigg(1+{\cal O}\Big(\frac{1}{N^{1/3}}\Big)\bigg)\\
&=\exp\bigg(\frac{N^{1/3}\gamma_c^2(u-\bar v)^2}{2}\bigg)\exp \bigg(\ii s (u-\bar v)-\frac{4\ii(b_c+\sqrt c)}{3a}(u^3-\bar v^3)\bigg)\frac{\sqrt N}{\sqrt 2 \pi^{3/2}b_c }\bigg(1+{\cal O}\Big(\frac{1}{N^{1/3}}\Big)\bigg),
\end{aligned}   
\end{equation}  where 
 the 1st identity is obtained by \eqref{eq expansion of some factor}, 
the 2nd equality is obtained by \eqref{common1}, the last equality is obtained by the expansion for $\xi(\beta)$ in \eqref{zeta beta} with  $\hat s = s(1+{\cal O}(N^{-2/3}))$ and the definition of $\gamma_c$ in \eqref{def gammacl}.

Now we compute the remaining part  $({\rm I}_n-{\rm II}_n)/Q_n$ of the difference between ${\rm I}_n$ and ${\rm II}_n$, without the common factor $Q_n$.

By \eqref{def Yz}, the relation between $Y_n$ and $\widetilde{Y}_n$ in \eqref{transform y to ytilde}, \eqref{Ytilde 0} and \eqref{pn inside b} in Proposition \ref{thm ortho1},  we have
\begin{equation}\label{eq alpha}
\begin{aligned}
\alpha_n&=p_n(0)=-\ee^{\ii\frac{16}{3}\xi(\beta)^3}\ee^{tN\ell}\bigg(\frac{q}{2r_1N^{1/3}b_c^*}+\frac{r_2p_{21}}{r_1^3N^{2/3}b_c^*}+\frac{p_{21}}{r_1^2N^{2/3}(b_c^*)^2}+{\cal O}\Big(\frac{1}{N}\Big)\bigg),\\
\beta_n&=[Y_n(0)]_{12}=1-\frac{r}{2r_1N^{1/3}b_c^*}-\frac{p_{22}}{r_1^2N^{2/3}(b_c^*)^2}+\frac{C_\beta}{N}+{\cal O}\Big(\frac{1}{N^{4/3}}\Big),\\
\gamma_n&=q_n(0)=-\bigg(1+\frac{r}{2r_1N^{1/3}b_c^*}-\frac{p_{11}}{r_1^2N^{2/3}(b_c^*)^2}+\frac{C_\gamma}{N}+{\cal O}\Big(\frac{1}{N^{4/3}}\Big)\bigg).\end{aligned}    
\end{equation} Here we will not need the explicit values of the coefficients $C_\beta$ and $C_\gamma$ in $\beta_n$ and $\gamma_n$ respectively. The numbers $C_\beta$ and $C_\gamma$ are defined by being the expansion coefficients in the $1/N$-expansion, which will be dropped out of the computation of the correlation kernel at a later stage. Moreover, in this and in the following we note the implicit dependence on $\hat s$, i.e., $q=q(\hat s)$, $r=r(\hat s)$, $p_{11}=p_{11}(\hat s)$ and $q'=q'(\hat s)$.

By \eqref{Ytilde inf}, the relation between $Y_n$ and $\widetilde{Y}_n$ in \eqref{transform y to ytilde},  and \eqref{pn outside b} in  Proposition \ref{thm ortho1}, we have
\begin{equation}\label{eq bn}
b_n=\lim_{z\to\infty } z^{n+1}[Y_n(z)]_{12}=-\ee^{\ii\frac{16}{3}\xi(\beta)^3}\ee^{tN\ell}\bigg(\frac{q}{2r_1N^{1/3}}+\frac{r_2p_{21}}{r_1^3N^{2/3}}+{\cal O}\Big(\frac{1}{N}\Big)\bigg).    \end{equation}

Using this, the expression of $b_c^*$ in \eqref{def b star}, the expressions of $\alpha_n$, $\beta_n$ and $\gamma_n$ in \eqref{eq alpha}, the relations of $p_{11}$, $p_{21}$ and $p_{22}$ in \eqref{relation up} and the expansions of $r_1$ and $r_2$ in \eqref{eq-gamma123} we have the following identities,
\begin{equation}\label{eq alphabngamma}
\begin{aligned}
\alpha_n&=\ee^{\ii\frac{16}{3}\xi(\beta)^3}\ee^{tN\ell}\bigg(\frac{q\gamma_c}{2b_cN^{1/3}}+\frac{a\gamma_c^2(rq+q')}{8b_c^2\sqrt c N^{2/3}}+{\cal O}\Big(\frac{1}{N}\Big)\bigg),\\
b_n&=\ee^{\ii\frac{16}{3}\xi(\beta)^3}\ee^{tN\ell}\bigg(\frac{q\gamma_c}{2N^{1/3}}+\frac{(b_c+\sqrt c )\gamma_c^2(rq+q')}{8b_c\sqrt c N^{2/3}}+{\cal O}\Big(\frac{1}{N}\Big)\bigg),\\
\beta_n\gamma_n&=-1+\frac{q^2\gamma_c^2}{4b_c^2N^{2/3}}-\frac{C_\beta+C_\gamma}{N}+{\cal O}\Big(\frac{1}{N^{4/3}}\Big),\\
\alpha_n\beta_n&=\ee^{\ii\frac{16}{3}\xi(\beta)^3}\ee^{tN\ell}\bigg(\frac{q\gamma_c}{2b_cN^{1/3}}+\frac{(b_c+\sqrt c )\gamma_c^2rq+a\gamma_c^2q'}{8b_c^2\sqrt c N^{2/3}}+{\cal O}\Big(\frac{1}{N}\Big)\bigg).\end{aligned}    \end{equation}
It follows that
\begin{equation}\label{I1}
\frac{N\alpha_n}{ 1+\beta_n\gamma_n}\frac{1}{\ee^{tN\ell+\ii\frac{16}{3}\xi(\beta)^3}}=\frac{2b_cN^{4/3}}{q\gamma_c}+\frac{a(rq+q')N}{2\sqrt c q^2}+{\frac{16b_c^3(C_\beta+C_\gamma)N}{2\gamma_c^3 q^3}}+{\cal O}\big(N^{2/3}\big).    \end{equation}

Similarly, by \eqref{pn inside b} in Proposition \ref{thm ortho1}, the expression of $b_c^*$ in \eqref{def b star}, the relations of $p_{11}$ and $p_{21}$ in \eqref{relation up} and the expansions of $r_1$ and $r_2$ in \eqref{eq-gamma123} we have
\begin{equation}\label{eq pqratio}
    \frac{q_n(a)}{p_n(a)}=\frac{1}{\ee^{\ii\frac{16}{3}\xi(\beta)^3}\ee^{tN\ell}}\bigg(-\frac{2\sqrt c N^{1/3}}{q\gamma_c}+\frac{(b_c+\sqrt c )\gamma_c^2rq-a\gamma_c^2q'}{2b_cq^2 }+{\cal O}\Big(\frac{1}{N^{1/3}}\Big)\bigg).
\end{equation}

Using \eqref{eq h11} and  \eqref{eq-gamma123}, and the boundedness of $\exp\big(\frac{N}{2}\phi(z)\big)$ from \eqref{eqxi3}, we have
\begin{equation}\label{I2} \frac{\ee^{\frac{N}{2}\phi(z)}[H\Psi]_{11}^z+{\cal O}\big(N^{-1}\big)}{\ee^{\frac{N}{2}\phi(z)}}=\Psi_{11}(\xi(z);\hat s)-\frac{(\sqrt c+ b_c)\gamma_c}{8\sqrt c b_cN^{1/3}}\big(r\Psi_{11}(\xi(z);\hat s)+q\Psi_{21}(\xi(z);\hat s)\big) +{\cal O}\Big(\frac{1}{N^{2/3}}\Big).\end{equation}
Similarly, we have
\begin{equation}\label{I21} \frac{\ee^{\frac{N}{2}\phi(z)}[H\Psi]_{21}^z+{\cal O}\big(N^{-1}\big)}{\ee^{\frac{N}{2}\phi(z)}}=\Psi_{21}(\xi(z);\hat s)+\frac{(\sqrt c+ b_c)\gamma_c}{8\sqrt c b_cN^{1/3}}\big(r\Psi_{21}(\xi(z);\hat s)+q\Psi_{11}(\xi(z);\hat s)\big) +{\cal O}\Big(\frac{1}{N^{2/3}}\Big).\end{equation}

Using $\beta_n\gamma_n$, $\beta_n\alpha_n$ and $b_n$ in \eqref{eq alphabngamma}, \eqref{I2} and \eqref{I21} with the expression for $z$ in \eqref{def z and zeta} we have
\begin{equation}\label{I3}
\begin{aligned}
 &\quad (c+t)\bigg(\frac{1+\beta_n\gamma_n}{z-a}-\frac{\beta_n\gamma_n}{z}\bigg)\frac{\ee^{\frac{N}{2}\phi(z)}[H\Psi]_{21}^z+{\cal O}(N^{-1})}{\ee^{\frac{N}{2}\phi(z)}}\\
 &\quad -\bigg(\frac{(c+t)\beta_n\alpha_n}{z}+\frac{ab_n-(c+t)\beta_n\alpha_n}{z-a}\bigg)\frac{ \ee^{\frac{N}{2}\phi(z)}[H\Psi]_{11}^z+{\cal O}(N^{-1})}{\ee^{tN\ell+\ii\frac{16}{3}\xi(\beta)^3}\ee^{\frac{N}{2}\phi(z)}}\\
&=b_c\Psi_{21}(\xi(z);\hat s)+\frac{(\sqrt c+b_c)(q\Psi_{11}(\xi(z);\hat s)+r\Psi_{21}(\xi(z);\hat s))-8\ii \sqrt c u\Psi_{21}(\xi(z);\hat s)}{8\sqrt c\gamma_c^{-1} N^{1/3}}+{\cal O}\Big(\frac{1}{N^{2/3}}\Big).
\end{aligned}\end{equation}

Combing \eqref{I1}, \eqref{I2} and \eqref{I3} we have
\begin{equation}\label{differ from I}
\begin{aligned}
&\frac{{\rm I}_n}{Q_n}
=\frac{N\alpha_n}{1+\beta_n\gamma_n}\frac{ \ee^{\frac{N}{2}\phi(z)}[H\Psi]_{11}^z+{\cal O}(N^{-1})}{\ee^{tN\ell+\ii\frac{16}{3}\xi(\beta)^3}\ee^{\frac{N}{2}\phi(z)}}\Bigg((c+t)\bigg(\frac{1+\beta_n\gamma_n}{\zeta-a}-\frac{\beta_n\gamma_n}{\zeta}\bigg)\frac{\ee^{\frac{N}{2}\phi(\zeta)}[H\Psi]_{21}^\zeta+{\cal O}\big(N^{-1}\big)}{\ee^{\frac{N}{2}\phi(\zeta)}}\\
&\quad\quad-\bigg(\frac{(c+t)\beta_n\alpha_n}{\zeta}+\frac{ab_n-(c+t)\beta_n\alpha_n}{\zeta-a}\bigg)\frac{\ee^{\frac{N}{2}\phi(\zeta)}[H\Psi]_{11}^\zeta+{\cal O}(N^{-1})}{\ee^{tN\ell+\ii\frac{16}{3}\xi(\beta)^3}\ee^{\frac{N}{2}\phi(z)}}\Bigg)^*\\
&=\frac{2b_c^2\Psi_{11}(\xi(z);\hat s)\overline{\Psi_{21}(\xi(\zeta);\hat s)}N^{4/3}}{q \gamma_c}+\frac{b_c\Psi_{11}(\xi(z);\hat s)\overline{\Psi_{21}(\xi(\zeta);\hat s)}\big(a\gamma_c(qr+q')+4\ii\sqrt c\gamma_cq\bar v\big)N}{2\sqrt c q^2\gamma_c}\\
&\quad-\frac{8b_c^4(C_\beta+C_\gamma)\Psi_{11}(\xi(z);\hat s)\overline{\Psi_{21}(\xi(\zeta);\hat s)}N}{q^3\gamma_c^3}\\
&\quad +\frac{b_c(\sqrt c+b_c)\big(\Psi_{11}(\xi(z);\hat s)\overline{\Psi_{11}(\xi(\zeta);\hat s)}-\Psi_{21}(\xi(z);\hat s)\overline{\Psi_{21}(\xi(\zeta);\hat s)}\big)N}{4\sqrt c }+{\cal O}(N^{2/3}).
\end{aligned}    
\end{equation}

Using $\beta_n\gamma_n$, $\beta_n\alpha_n$ and $b_n$ in \eqref{eq alphabngamma}, $\beta_{n+1}$ in \eqref{recurence1},  and the ratio of $q_n(a)$ and $p_n(a)$ in \eqref{eq pqratio}  we have
\begin{equation}\label{II1}
\begin{aligned}
&\quad \frac{N(Nc+n)}{Nc+n+1}\bigg(a+\frac{b_n \left(1+\beta_n\gamma_n\right)}{\alpha_n
\beta_n}-b_n\frac{q_n(a)}{p_n(a)}\bigg)\frac{\beta_n\gamma_n}{ \beta_{n+1}-\beta_n}\\
&=\frac{4b_c^3N^{5/3}}{q^2\gamma_c^2}-2b_c^{2}\frac{qr+q'}{q^3\gamma_c}N^{4/3}+{16b_c^{5}\frac{C_\beta+C_\gamma}{q^4\gamma_c^4}N^{4/3}}+{\cal O}(N).
\end{aligned}\end{equation}

Using $b_n$, $\beta_n\alpha_n$ and $\beta_n\gamma_n$ in \eqref{eq alphabngamma}, \eqref{I2} and \eqref{I21} we have
\begin{equation}\label{II2}
\begin{aligned}& \quad b_n \bigg(\frac{\ee^{\frac{N}{2}\phi(z)}[H\Psi]_{21}^z+{\cal O}(N^{-1})}{\ee^{\frac{N}{2}\phi(z)}\alpha_n
\beta_n(1+\beta_n\gamma_n)^{-1}}+\frac{ \ee^{\frac{N}{2}\phi(z)}[H\Psi]_{11}^z+{\cal O}(N^{-1})}{\ee^{tN\ell+\ii\frac{16}{3}\xi(\beta)^3}\ee^{\frac{N}{2}\phi(z)}}\bigg)\\
&=\frac{q\gamma_c\Psi_{11}(\xi(z);\hat s)}{2N^{1/3}}+\frac{(\sqrt c+b_c)\gamma_c^2(qr+2q')\Psi_{11}(\xi(z);\hat s)-(b_c-3\sqrt c)q^2\gamma_c^2\Psi_{21}(\xi(z);\hat s)}{16\sqrt c b_cN^{2/3}}+{\cal O}\Big(\frac{1}{N}\Big).
\end{aligned}\end{equation}

Using $\beta_n\alpha_n$ and $\beta_n\gamma_n$ in \eqref{eq alphabngamma}, \eqref{I2} and \eqref{I21} with the expression for $z$ in \eqref{def z and zeta} we have
\begin{equation}\label{II3}\begin{aligned}
 &\quad \frac{ \ee^{\frac{N}{2}\phi(z)}[H\Psi]_{21}^z+{\cal O}(N^{-1})}{z\ee^{\frac{N}{2}\phi(z)}}+\frac{\alpha_n}{\gamma_n}\frac{ \ee^{\frac{N}{2}\phi(z)}[H\Psi]_{11}^z+{\cal O}(N^{-1})}{z\ee^{tN\ell+\ii\frac{16}{3}\xi(\beta)^3}\ee^{\frac{N}{2}\phi(z)}}\\
&=\frac{\Psi_{21}(\xi(z);\hat s)}{b_c}+\frac{(b_c-3\sqrt c)q\Psi_{11}(\xi(z);\hat s)-\big(8\ii \sqrt c u-(\sqrt c+b_c)r\big)\Psi_{21}(\xi(z);\hat s)}{8\sqrt c b_c^2\gamma_c^{-1}N^{1/3}}+{\cal O}\Big(\frac{1}{N^{2/3}}\Big).
\end{aligned}\end{equation}

Combing \eqref{II1}, \eqref{II2} and \eqref{II3} we have
\begin{equation}\label{differ from II}
\begin{aligned}
&\frac{{\rm II}_n}{Q_n}
=\frac{N(Nc+n)}{Nc+n+1}\bigg(a+\frac{b_n \left(1+\beta_n\gamma_n\right)}{\alpha_n
\beta_n}-b_n\frac{q_n(a)}{p_n(a)}\bigg)\frac{\beta_n\gamma_n}{ \beta_{n+1}-\beta_n} b_n\Bigg(\frac{ \ee^{\frac{N}{2}\phi(z)}[H\Psi]_{21}^z+{\cal O}(N^{-1})}{\ee^{\frac{N}{2}\phi(z)}\alpha_n
\beta_n(1+\beta_n\gamma_n)^{-1}}\\
&\quad +\frac{ \ee^{\frac{N}{2}\phi(z)}[H\Psi]_{11}^z+{\cal O}(N^{-1})}{\ee^{tN\ell+\ii\frac{16}{3}\xi(\beta)^3}\ee^{\frac{N}{2}\phi(z)}}\Bigg) \Bigg(\frac{ \ee^{\frac{N}{2}\phi(\zeta)}[H\Psi]_{21}^\zeta+{\cal O}(N^{-1})}{\zeta\ee^{\frac{N}{2}\phi(\zeta)}}+\frac{\alpha_n}{\gamma_n}\frac{ \ee^{\frac{N}{2}\phi(\zeta)}[H\Psi]_{11}^\zeta+{\cal O}(N^{-1})}{\zeta\ee^{tN\ell+\ii\frac{16}{3}\xi(\beta)^3}\ee^{\frac{N}{2}\phi(\zeta)}}\Bigg)^*\\
&=\frac{2b_c^2\Psi_{11}(\xi(z);\hat s)\overline{\Psi_{21}(\xi(\zeta);\hat s)}N^{4/3}}{q \gamma_c}+\frac{b_c\Psi_{11}(\xi(z);\hat s)\overline{\Psi_{21}(\xi(\zeta);\hat s)}\big(a\gamma_c(qr+q')+4\ii\sqrt c\gamma_cq\bar v\big)N}{2\sqrt c q^2\gamma_c}\\
&\qquad -\frac{8b_c^4(C_\beta+C_\gamma)\Psi_{11}(\xi(z);\hat s)\overline{\Psi_{21}(\xi(\zeta);\hat s)}N}{q^3\gamma_c^3}\\
&\qquad+\frac{b_c(b_c-3\sqrt c)\big(\Psi_{11}(\xi(z);\hat s)\overline{\Psi_{11}(\xi(\zeta);\hat s)}-\Psi_{21}(\xi(z);\hat s)\overline{\Psi_{21}(\xi(\zeta);\hat s)}\big)N}{4\sqrt c }+{\cal O}(N^{2/3}).
\end{aligned}    
\end{equation}

Moreover, using the definition $\xi(z)$ in \eqref{zeta expanssion}  we have 
\begin{equation}\label{eq xiz app}
  \xi(z)=u+\frac{\ii}{N^{1/3}}\Big(\frac{s}{2\gamma_c^2}-\frac{t_c\gamma_c}{4a\sqrt c b_c}u^2\Big)+{\cal O}\Big(\frac{1}{N^{2/3}}\Big).   
\end{equation}
For $j,k=1,2$, using the expression of $\xi$ in  \eqref{eq xiz app}, and the definitions of $\hat s$ and $s$ in \eqref{def zetabeta} and \eqref{def gammacl} we have
\begin{equation}\label{eq psi approx}
\Psi_{jk}(\xi(z);\hat s)=\Psi_{jk}(u;s)\big(1+{\cal O}(N^{-1/3})\big),\quad 
\Psi_{jk}(\xi(\zeta);\hat s)=\Psi_{jk}(v;s)\big(1+{\cal O}(N^{-1/3})\big).
\end{equation}

Comparing \eqref{differ from I} and \eqref{differ from II}, one can see that the difference comes from the 4th term. Hence, the remaining part of ${\rm I}_n-{\rm II}_n$ is given by
\begin{equation}\label{eq remaining}
\begin{aligned}
\frac{{\rm I}_n}{Q_n}-\frac{{\rm II}_n}{Q_n}
&=b_cN\Big(\Psi_{11}(\xi(z);\hat s)\overline{\Psi_{11}(\xi(\zeta);\hat s)}-\Psi_{21}(\xi(z);\hat s)\overline{\Psi_{21}(\xi(\zeta);\hat s)}+{\cal O}\Big(\frac{1}{N^{1/3}}\Big)\Big)\\
&=b_cN\Big(\Psi_{11}(u;s)\overline{\Psi_{11}(v;s)}-\Psi_{21}(u;s)\overline{\Psi_{21}(v;s)}+{\cal O}\Big(\frac{1}{N^{1/3}}\Big)\Big).
\end{aligned}    
\end{equation}
Here the last equality is obtained by the facts in \eqref{eq psi approx}.

Using the common factor of ${\rm I}_n-{\rm II}_n$ in \eqref{eq common factor} and the remaining part of ${\rm I}_n-{\rm II}_n$ in \eqref{eq remaining}, we have
\begin{equation} \label{eq i and ii}
\begin{aligned}
\frac{1}{N^{5/6}}\overline{\partial}_{\zeta} {\cal K}_{n}(z, \zeta)=\frac{{\rm I}_n-{\rm II}_n}{N^{5/6}}
&=\exp\Big(\frac{N^{1/3}\gamma_c^2(u-\bar v)^2}{2}\Big)\exp \Big(\ii s (u-\bar v)-\frac{4\ii(b_c+\sqrt c)}{3a}(u^3-\bar v^3)\Big)\\
&\quad \times \frac{N^{2/3}}{ \sqrt 2 \pi^{3/2}}\Big(\Psi_{11}(u;s)\overline{\Psi_{11}(v;s)}-\Psi_{21}(u;s)\overline{\Psi_{21}(v;s)}+{\cal O}\Big(\frac{1}{N^{1/3}}\Big)\Big).
\end{aligned}
\end{equation}
 This ends the proof.   
\end{proof}

\section{Integrating the C-D identity: Proof of Theorem \ref{main theorem}}\label{section 3}
In this section, we prove Theorem \ref{main theorem} by using Lemma \ref{preker} and Theorem \ref{Thm 41}.

Let us write the real and imaginary components of the coordinates explicitly by
\begin{equation}\label{def uv}
 u:=u_R-\frac{\ii u_I}{\gamma_c N^{1/6}},\quad v:=v_R-\frac{\ii v_I}{\gamma_c N^{1/6}}.   
\end{equation}
Note that the imaginary component is scaled differently in $N$.

Using \eqref{def z and zeta} and \eqref{def uv} we have
\begin{equation}
    \frac{\ii \gamma_c}{N^{1/3}}\partial_{\zeta}=\partial_v = \frac{1}{2}(\partial_{v_R} + \ii \gamma_c N^{1/6} \partial_{v_I}),\quad 
    -\frac{\ii \gamma_c}{N^{1/3}}\overline\partial_{\zeta}=\overline\partial_v = \frac{1}{2}(\partial_{v_R} - \ii \gamma_c N^{1/6} \partial_{v_I}).
\end{equation}
It follows that
\begin{equation}\label{eq dvi}
    \partial_{v_I}= \frac{1}{N^{1/2}}(\partial_\zeta+\overline\partial_{\zeta}).
\end{equation}

Then \eqref{eq lem26} in Lemma \ref{preker}  can be written as
\begin{equation} \label{eq i and ii2}
\begin{aligned}
\frac{1}{N^{5/6}}\overline{\partial}_{\zeta} {\cal K}_{n}(z, \zeta)
&=\exp\bigg(\frac{N^{1/3}\gamma_c^2(u_R-v_R)^2}{2}-\ii (u_R-v_R)(u_I+v_I)\gamma_c N^{1/6}\bigg)\\
&\quad \times \exp \bigg(\ii s (u_R-v_R)-\frac{4\ii(b_c+\sqrt c)}{3a}(u_R^3-v_R^3)-\frac{(u_I+v_I)^2}{2}\bigg)\Big(1+{\cal O}\big(N^{\delta-1/6}\big)\Big)\\
&\quad \times \frac{N^{2/3}}{ \sqrt 2 \pi^{3/2}}\bigg(\Psi_{11}(u_R;s)\overline{\Psi_{11}(v_R;s)}-\Psi_{21}(u_R;s)\overline{\Psi_{21}(v_R;s)}\\
&\qquad -\frac{\ii u_I\big(\Psi_{11}'(u_R;s)\overline{\Psi_{11}(v_R;s)}-\Psi_{21}'(u_R;s)\overline{\Psi_{21}(v_R;s)}\big)}{\gamma_cN^{1/6}}\\
&\qquad +\frac{\ii v_I\big(\Psi_{11}(u_R;s)\overline{\Psi_{11}'(v_R;s)}-\Psi_{21}(u_R;s)\overline{\Psi_{21}'(v_R;s)}\big)}{\gamma_cN^{1/6}}+{\cal O}\big(N^{2\delta-1/3}\big)\bigg).
\end{aligned}
\end{equation}
Here and below, for an arbitrarily given $0<\delta<1/6$, all the error bounds  will be uniform over $(u_R,N^{-\delta} u_I,v_R,N^{-\delta}v_I)$ in a compact set of ${\mathbb R}^4$, which means that we allow $u_I$ and $v_I$ to grow as $\sim N^{\delta}$.

Since $\sigma_1\widetilde\Psi(-\xi;s)\sigma_1$ and $\overline{\widetilde\Psi(-\overline{\xi};s)}$ satisfy the same Riemann-Hilbert problem as $\widetilde\Psi$ in \eqref{rhp phi}, it follows that
\begin{equation}\label{relation psi and psi tilde}
\widetilde\Psi(\xi;s)=\sigma_1\widetilde\Psi(-\xi;s)\sigma_1,\quad \widetilde\Psi(\xi;s)=\overline{\widetilde\Psi(-\overline{\xi};s)},\end{equation} where $\sigma_1=\begin{pmatrix}
0&1\\1&0
\end{pmatrix}$ is the first Pauli matrix.
Then we have
\begin{equation*}
\overline{\widetilde\Psi_{11}(\overline{\xi};s)}=\widetilde\Psi_{11}(-\xi;s)=\widetilde\Psi_{22}(\xi;s),  \quad 
\overline{\widetilde\Psi_{21}(\overline{\xi};s)}=\widetilde\Psi_{21}(-\xi;s)=\widetilde\Psi_{12}(\xi;s).
\end{equation*}
Combining above relations with the definition of $\Psi$ in \eqref{def of psi0} we have the following relation
\begin{equation}\label{relation psi1121}
  \overline{\Psi_{11}(u_R;s)} =\Psi_{21}(u_R;s), \quad \overline{\Psi_{21}(v_R;s)} =\Psi_{11}(v_R;s)
\end{equation}

Let us assume $u_R\neq v_R$; we skip the case of $u_R=v_R$ which is analogous. Using the relations in  \eqref{relation psi1121},  we rewrite the equation  \eqref{eq i and ii2} as
\begin{equation}
\begin{aligned}
\frac{1}{N^{5/6}}\overline{\partial}_{\zeta} {\cal K}_{n}(z, \zeta)
&=\frac{N^{2/3}}{ \sqrt 2 \pi^{3/2}}\exp\bigg(\frac{N^{1/3}\gamma_c^2(u_R-v_R)^2}{2}-\ii (u_R-v_R)(u_I+v_I)\gamma_c N^{1/6}\bigg)\\
&\quad \times \exp \bigg(\ii s (u_R-v_R)-\frac{4\ii(b_c+\sqrt c)}{3a}(u_R^3-v_R^3)-\frac{(u_I+v_I)^2}{2}\bigg)\\
&\quad \times \Big(\Psi_{11}(u_R;s){\Psi_{21}(v_R;s)}-\Psi_{21}(u_R;s){\Psi_{11}(v_R;s)}+{\cal O}\big(N^{\delta-1/6}\big)\Big).
\end{aligned}
\end{equation}
To integrate the above equation let us define $F_n$ such that
\begin{equation}\label{eq knN}
\begin{aligned}
\frac{1}{N^{5/6}}{\cal K}_{n}(z, \zeta)
&=\exp\bigg(\frac{N^{1/3}\gamma_c^2(u_R-v_R)^2}{2}+\ii s (u_R-v_R)-\frac{4\ii(b_c+\sqrt c)}{3a}(u_R^3-v_R^3)\bigg)\\
&\quad \times \frac{1}{ \sqrt 2 \pi^{3/2}\gamma_c} \bigg(\exp \Big(-\ii (u_R-v_R)(u_I+v_I)\gamma_c N^{1/6}-\frac{(u_I+v_I)^2}{2}\Big)\\
&\quad \times \frac{\Psi_{21}(u_R;s)\Psi_{11}(v_R;s)-\Psi_{11}(u_R;s)\Psi_{21}(v_R;s)}{\ii(u_R-v_R)}+F_n\bigg).
\end{aligned}
\end{equation}

Taking the derivative of the both side by $v_I$ using the following identities from \eqref{eq dvi}:
\begin{equation}\label{eq differ vi1}
\begin{aligned}
\frac{1}{N^{5/6}}{\partial}_{v_I} {\cal K}_{n}(z, \zeta)
&=\frac{1}{N^{5/6}N^{1/2}}(\partial_\zeta+\overline\partial_\zeta){\cal K}_n(z,\zeta) =\frac{1}{N^{5/6}N^{1/2}}\overline\partial_\zeta{\cal K}_n(z,\zeta),
\end{aligned}
\end{equation}
we get that 
\begin{equation}
 \partial_{v_I}F_n =  {\cal O}(N^{\delta-1/6}) \exp \Big(-\frac{(u_I+v_I)^2}{2}\Big).
\end{equation}
Integrating over $v_I$ while allowing $v_I$ and $u_I$ to grow as much as ${\cal O}(N^{\delta})$, we get that 
\begin{equation}
    F_n = {\cal O}(N^{\delta - 1/6}) + F_{n}\big|_{v_I=-N^{\delta}},
\end{equation}
where the last term serves as the integration constant.
Similarly, exchanging the role of $u$ and $v$, we have 
\begin{equation}
    F_n = {\cal O}(N^{\delta - 1/6}) + F_{n}\big|_{u_I=-N^{\delta}}.
\end{equation}
Combining the two equations we have 
\begin{equation}\label{bound of fn}
    F_n = {\cal O}(N^{\delta - 1/6}) + F_{n}\big|_{v_I=u_I=-N^{\delta}}.
\end{equation}
We recall that the error bound is uniform over $(u_R,N^{-\delta} u_I,v_R,N^{-\delta}v_I)$ in a compact set of ${\mathbb R}^4$.

Using the relation between the correlation kernel ${\bf K}_n$ and ${\cal K}_{n}$ in \eqref{relation of kernels} and
\begin{equation}\label{the prefactor}
 \begin{aligned}
&\quad\,\, \frac{\ee^{Nz\overline\zeta}}{\ee^{\frac{N}{2}|z|^2+\frac{N}{2}|\zeta|^2}}\frac{|z-a|^{Nc}|\zeta-a|^{Nc}}{(z-a)^{Nc}(\overline\zeta-a)^{Nc}}\\
&=\exp\bigg(\ii a\gamma_cN^{2/3}\frac{u+\bar u-v-\bar v}{2}+N^{1/3}\gamma_c^2\frac{(u-\bar u)^2+(v-\bar v)^2}{4}-\frac{N^{1/3}\gamma_c^2}{2}(u-\bar v)^2\bigg)\\
&\qquad\times \exp\bigg(\frac{\ii\gamma_c^3(u^3+\bar u^3-v^3-\bar v^3)}{6\sqrt c}\bigg)\Big(1+{\cal O}\big(\frac{1}{N^{1/3}}\big)\Big)\\
&=\exp\bigg(\ii a\gamma_cN^{2/3}(u_R-v_R)-\frac{N^{1/3}\gamma_c^2}{2}(u_R- v_R)^2+\ii (u_R-v_R)(u_I+v_I)\gamma_c N^{1/6}\bigg)\\
&\qquad \times \exp\bigg(-\frac{(u_I-v_I)^2}{2}+\frac{2\ii\gamma_c^3(u_R^3-v_R^3)}{6\sqrt c}\bigg)\Big(1+{\cal O}\big(N^{2\delta-1/3}\big)\Big),
\end{aligned}   
\end{equation}
we get 
\begin{equation}\label{bf knN}
\begin{aligned}
\frac{1}{N^{5/6}}{\bf K}_{n}(z, \zeta)
&=\frac{\ee^{Nz\overline\zeta}}{\ee^{\frac{N}{2}|z|^2+\frac{N}{2}|\zeta|^2}}\frac{|z-a|^{Nc}|\zeta-a|^{Nc}}{(z-a)^{Nc}(\overline\zeta-a)^{Nc}} \frac{1}{N^{5/6}}{\cal K}_{n}(z,\zeta)\\
&=\exp\bigg(\ii a\gamma_cN^{2/3}(u_R-v_R)+\ii s (u_R-v_R)+\frac{4\ii(u_R^3-v_R^3)}{3}\bigg)\Big(1+{\cal O}\big(N^{2\delta-1/3}\big)\Big)\\
&\quad \times \frac{1}{ \sqrt 2 \pi^{3/2}\gamma_c} \bigg(\ee^{ -u_I^2-v_I^2}\frac{\Psi_{21}(u_R;s)\Psi_{11}(v_R;s)-\Psi_{11}(u_R;s)\Psi_{21}(v_R;s)}{\ii(u_R-v_R)}\\
&\qquad +F_n\exp\Big(\ii (u_R-v_R)(u_I+v_I)\gamma_c N^{1/6}\Big)\exp\Big(-\frac{(u_I-v_I)^2}{2}\Big)\bigg).
\end{aligned}
\end{equation}
Plugging in $u_I=v_I=-N^{\delta}$ we get
\begin{equation}\label{eq kn46}
 \begin{aligned}
\Big|\frac{1}{N^{5/6}}{\bf K}_n(z, \zeta) &\big|_{u_I=v_I=-N^{\delta}}\Big|= \Big(1+{\cal O}\big(N^{2\delta-1/3}\big)\Big)\\
&\times \left(\exp\big(-{2}N^{2\delta}\big)\frac{\Psi_{21}(u_R;s)\Psi_{11}(v_R;s)-\Psi_{11}(u_R;s)\Psi_{21}(v_R;s)}{\ii(u_R-v_R)}   +  F_{n}\big|_{v_I=u_I=-N^{\delta}}
\right).    
\end{aligned}
\end{equation}
The following theorem is from Proposition 3.6 in \cite[Section 3.5]{ Ameur 2010}. For the convenience of the readers, we put the proof of the following theorem in Appendix~\ref{appendix c}.
\begin{thm}\label{Thm 41} 
Let $Q(z)$ be given in \eqref{def Q}, and $n=N t$. There exists a constant $C$ that is independent of $z$ and $N$ such that
\begin{equation}
    {\bf K}_n(z,z)\leq CN\ee^{-N {\cal U}_\text{2D}(z)},\quad z\in \partial {\cal S} \cup {\cal S}^c.
\end{equation}
Here ${\cal U}_\text{2D}$ is the effective potential of the 2D equilibrium measure defined by
$${\cal U}_\text{2D}(z):=Q(z)-\frac{2t}{\text{Area}({\cal S})}\int_{\cal S}\log|z-w|\dd A(w)+\ell_{\text{2D}},$$
where $\ell_{\text{2D}}$ is chosen such that   ${\cal U}_\text{2D}=0$ on ${\cal S}\cup\partial{\cal S}$.
\end{thm}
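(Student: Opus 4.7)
The plan is to combine the reproducing-kernel characterization of the diagonal with a weighted Bernstein--Walsh-type maximum principle and the sub-mean-value inequality for subharmonic functions, following the strategy from \cite{Ameur 2010}. Writing $\|p\|_{NQ}^2 := \int_\CC |p(w)|^2\,\ee^{-NQ(w)}\dd A(w)$, the reproducing-kernel property of ${\bf K}_n$ gives
\begin{equation*}
 {\bf K}_n(z,z) \;=\; \ee^{-NQ(z)}\sup_{p\in P_{n-1}\setminus\{0\}}\frac{|p(z)|^2}{\|p\|_{NQ}^2},
\end{equation*}
where $P_{n-1}$ denotes the space of polynomials of degree at most $n-1$. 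The claim is therefore equivalent to the pointwise estimate
\begin{equation*}
 |p(z_0)|^2 \;\leq\; CN\,\ee^{N(Q(z_0)-{\cal U}_\text{2D}(z_0))}\,\|p\|_{NQ}^2, \qquad p\in P_{n-1},\ \, z_0\in\partial{\cal S}\cup{\cal S}^c.
\end{equation*}

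Next, I would introduce the obstacle function $\check Q$, defined as the largest subharmonic minorant of $Q$ on $\CC$ with growth $\check Q(z)\leq 2t\log|z|+O(1)$ at infinity. Standard potential theory yields $\check Q=Q$ on $\overline{\cal S}$, $\check Q$ harmonic on ${\cal S}^c\setminus\{a\}$, and the key identity $Q-\check Q={\cal U}_\text{2D}$. For any $p\in P_{n-1}$, $|p(w)|^2\,\ee^{-N\check Q(w)}$ is then log-subharmonic on ${\cal S}^c$ and decays at infinity since $\deg p\leq n-1<Nt$. Applying the maximum principle on ${\cal S}^c$ gives
\begin{equation*}
 |p(z_0)|^2\,\ee^{-N\check Q(z_0)} \;\leq\; \sup_{w\in\partial{\cal S}} |p(w)|^2\,\ee^{-NQ(w)}, \qquad z_0\in{\cal S}^c,
\end{equation*}
reducing the problem to a boundary estimate.

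Finally, I would bound the right-hand side via the sub-mean-value inequality for the subharmonic function $|p|^2$ on the disk $D(w,N^{-1/2})$:
\begin{equation*}
 |p(w)|^2 \;\leq\; \frac{N}{\pi}\int_{D(w,N^{-1/2})}|p(\zeta)|^2\dd A(\zeta) \;\leq\; \frac{N}{\pi}\Bigl(\max_{\zeta\in D(w,N^{-1/2})}\ee^{NQ(\zeta)}\Bigr)\|p\|_{NQ}^2.
\end{equation*}
Because $\partial{\cal S}$ stays a fixed positive distance from the Coulomb charge at $a$, the potential $Q$ is uniformly Lipschitz on $N^{-1/2}$-neighborhoods of $\partial{\cal S}$, so $\max_{\zeta\in D(w,N^{-1/2})}\ee^{NQ(\zeta)}\leq C\ee^{NQ(w)}$ and consequently $|p(w)|^2\ee^{-NQ(w)}\leq CN\|p\|_{NQ}^2$. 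Combined with the maximum-principle bound and the identity $\check Q(z_0)=Q(z_0)-{\cal U}_\text{2D}(z_0)$, this proves the desired inequality on ${\cal S}^c$; for $z_0\in\partial{\cal S}$, where ${\cal U}_\text{2D}(z_0)=0$, it follows directly from the sub-mean-value step.

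The principal technical point is ensuring that all constants are uniform in $z_0$ and $N$ at the critical parameter $t=t_c$, where $\partial{\cal S}$ self-touches at $b_c$ and is no longer $C^1$. This poses no obstacle: the maximum principle uses no regularity of $\partial{\cal S}$ beyond the fact that ${\cal S}^c$ is an open set with boundary $\partial{\cal S}$, and the sub-mean-value step needs only uniform Lipschitz continuity of $Q$ on $N^{-1/2}$-disks about $\partial{\cal S}$, which holds since $\partial{\cal S}$ stays away from $a$ throughout the transition. The degenerate endpoint $z_0=a\in{\cal S}^c$ is handled trivially, since the factor $|w-a|^{2Nc}$ hidden in $\ee^{-NQ}$ forces ${\bf K}_n(a,a)=0$.
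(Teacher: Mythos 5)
Your overall architecture (extremal characterization of ${\bf K}_n(z,z)$ over $P_{n-1}$, maximum principle for $|p|^2\ee^{-N\check Q}$ on ${\cal S}^c$ using the degree bound $n-1<Nt$ to control infinity, then a local estimate) is the same as the paper's, and the first two steps are sound. The gap is in the final sub-mean-value step. Uniform Lipschitz continuity of $Q$ only gives $|Q(\zeta)-Q(w)|\leq L|\zeta-w|\leq LN^{-1/2}$ on $D(w,N^{-1/2})$, hence
\begin{equation*}
\max_{\zeta\in D(w,N^{-1/2})}\ee^{NQ(\zeta)}\;\leq\;\ee^{L\sqrt N}\,\ee^{NQ(w)},
\end{equation*}
not $C\ee^{NQ(w)}$: the relevant exponent is $N\cdot LN^{-1/2}=L\sqrt N$, which is unbounded. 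Your chain of inequalities therefore yields only $|p(w)|^2\ee^{-NQ(w)}\leq N\ee^{L\sqrt N}\|p\|_{NQ}^2$, which is not a bound with an $N$-independent constant. Shrinking the disk to radius $1/N$ repairs the weight comparison but inflates the mean-value prefactor to $N^2/\pi$, giving at best ${\bf K}_n(z,z)\leq CN^2\ee^{-N{\cal U}_\text{2D}(z)}$, weaker than the stated $CN$.

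The fix --- and what the paper does in Appendix C --- is to never pull the weight out of the integral. Since $\Delta Q=4$ away from $a$ (and the logarithmic term only adds a positive point mass at $a$), the function $2\log|p(\zeta)|+4N|\zeta-w|^2-NQ(\zeta)$ is subharmonic on all of $\CC$, hence so is its exponential $|p(\zeta)|^2\ee^{4N|\zeta-w|^2-NQ(\zeta)}$. The sub-mean-value inequality on $D(w,N^{-1/2})$ applied to \emph{this} function gives
\begin{equation*}
|p(w)|^2\ee^{-NQ(w)}\;\leq\;\frac{N}{\pi}\int_{D(w,N^{-1/2})}|p(\zeta)|^2\ee^{4N|\zeta-w|^2}\ee^{-NQ(\zeta)}\,\dd A(\zeta)\;\leq\;\frac{N\ee^4}{\pi}\,\|p\|_{NQ}^2,
\end{equation*}
because the compensating factor satisfies $\ee^{4N|\zeta-w|^2}\leq\ee^4$ on the disk while $\ee^{-NQ(\zeta)}$ stays under the integral sign. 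With this replacement your argument goes through; note also that this local bound is valid at \emph{every} $w\in\CC$, so neither the distance of $\partial{\cal S}$ from $a$ nor any regularity of $\partial{\cal S}$ at the merging point is needed for it.
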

The explicit representation of ${\cal U}_\text{2D}$ is not important but we will remind the known properties of ${\cal U}_\text{2D}$ in our specific case: 
${\cal U}_\text{2D}=0$ on ${\cal S}$ and ${\cal U}_\text{2D}> 0$ on ${\cal S}^c$; ${\cal U}_\text{2D}$ is continuously differentiable. According to Lemma 2.2 in \cite{Ba 2015}, for any $z_0\in \partial{\cal S}$ and $\epsilon>0$ is small enough, we have that ${\cal U}_\text{2D}(z_0+\epsilon(n_x+\ii n_y))=2\epsilon^2+{\cal O}(\epsilon^3)$, where $(n_x,n_y)$ is the unit vector that is outer normal to $\partial{\cal S}$ at $z_0$.

Using Theorem \ref{Thm 41}, the above facts, and $z=b_c+\frac{u_I}{N^{1/2}}+\frac{\ii \gamma_c u_R}{N^{1/3}}$  we have ${\bf K}_n(z,z)= {\cal O}(N\exp[-\kappa u_I^2])$ as $N\to\infty$ for any constant $0<\kappa<2$.  This bound holds as long as $u_I=o(\sqrt N)$ or as long as $z-z_0=o(1)$ for any $z_0\in \partial{\cal S}$.   Using \eqref{eq kn46} and the fact that $|{\bf K}_n(z,\zeta)|\leq \sqrt{{\bf K}_n(z,z){\bf K}_n(\zeta,\zeta)} $ we have that 
$$F_{n}\big|_{v_I=u_I=-N^{\delta}}={\cal O}(N^{1/6}\exp(-\kappa N^{2\delta})).$$    Consequently, for finite $u_I$ and $v_I$, using \eqref{bf knN} we have
$$\begin{aligned}
\frac{1}{N^{5/6}}{\bf K}_{n}(z,\zeta)=\frac{1}{C_N(u_R,v_R)\gamma_c}\frac{\ee^{-u_I^2-v_I^2}}{\sqrt{\pi/2}}  \frac{\Psi_{21}(u_R;s)\Psi_{11}(v_R;s)-\Psi_{11}(u_R;s)\Psi_{21}(v_R;s)}{2\pi\ii(u_R-v_R)}+{\cal O}\Big(\frac{1}{N^{1/6-\delta}}\Big),
\end{aligned}$$
where
 \begin{equation*}
   C_N(u_R,v_R):=
   \ee^{-\ii (\varphi_N(u_R)-\varphi_N(v_R))}\,, \qquad \varphi_N(w):=a\gamma_cN^{2/3} w+s w+4w^3/3. 
      \end{equation*}

Changing the notations from $(u_I,u_R,v_I,v_R)$ to  $(x,y,x',y')$ respectively, we have \eqref{eq limiting kernel}. This ends the proof.

\section{Proof of Theorem \ref{main theorem3}}\label{section 4}
In this section, we prove Theorem \ref{main theorem3} using a similar strategy as in the proof of Theorem \ref{main theorem}.  

\subsection{The C-D identity: Proof of Lemma \ref{preker1}}
\begin{lemma}\label{preker1}
Let ${\cal K}_{n}(z,\zeta)$ be the pre-kernel defined in \eqref{def prekernel}, and ${D}_\tau:=\{z: |z-b_c^*|\leq\frac{1}{N^\tau}\}$ with $0\leq \tau<\frac{1}{3}$.  Defining the local coordinates $u$ and $v$ by
\begin{equation}\label{def zxy}
z:=b_c+u\quad\mbox{and}\quad    \zeta:=b_c+v  
\end{equation} such that $z\in D_c\setminus D_\tau$ and $\zeta\in D_c\setminus D_\tau$.
Then we have
\begin{equation}\label{eq prekernel 2}
\begin{aligned}
\overline{\partial}_{\zeta} {\cal K}_{n}(z, \zeta)
&=\frac{ N^{\frac32}}{\sqrt 2 \pi^{\frac32}  }\ee^{-Nu\overline v-\frac{N}{2}(u^2+\overline v^2)}\bigg[\exp\bigg(N\Big(\frac{u^3+\overline v^3}{3\sqrt c}-\frac{u^4+\overline v^4}{4c}+\frac{u^5+\overline v^5}{5c^{3/2}}\Big)\bigg)\\
&\quad \times \Big(1+\frac{\gamma_c r}{ N^{1/3}}\frac{u+\overline v}{2u\overline v}-\frac{\gamma_c^2(q^2-r^2)}{ N^{2/3}}\frac{(u+\overline v)^2}{8u^2\overline v^2}+{\cal O}\Big(Nu^6,N\overline v^6,\frac{N^{3\tau}}{N}\Big)\Big)\\
&  +\exp\bigg(N\Big(\frac{1}{3}\Big(\frac{\overline v^3}{b_c}+\frac{u^3}{\sqrt c}\Big)-\frac{1}{4}\Big(\frac{\overline v^4}{b_c^2}+\frac{u^4}{c^2}\Big)+\frac{1}{5}\Big(\frac{\overline v^5}{b_c^3}+\frac{u^5}{c^{3/2}}\Big)\Big)+\frac{2sN^{1/3}\overline v}{\gamma_c}\bigg)\\
&\quad \times \Big(\frac{\gamma_cq}{N^{1/3}}\frac{(\overline v+u)}{2u\overline v}+
\frac{\gamma_c^2(qr+q')}{4N^{2/3}}\Big(\frac{1}{u^2}-\frac{1}{\overline v^2}\Big)+{\cal O}\Big(\frac{N^{3\tau}}{N}\Big)\Big)\big(1+{\cal O}\big(Nu^6,N^{1/3}\overline v^2\big)\big)\\
& +\exp\bigg(N\Big(\frac{1}{3}\Big(\frac{u^3}{b_c}+\frac{\overline v^3}{\sqrt c}\Big)-\frac{1}{4}\Big(\frac{u^4}{b_c^2}+\frac{\overline v^4}{c}\Big)+\frac{1}{5}\Big(\frac{u^5}{b_c^3}+\frac{\overline v^5}{c^{3/2}}\Big)\Big)+\frac{2sN^{1/3}u}{\gamma_c}\bigg)\\
&\quad \times \Big(\frac{\gamma_cq}{N^{1/3}}\frac{(\overline v+u)}{2u\overline v}-
\frac{\gamma_c^2(qr+q')}{4N^{2/3}}\Big(\frac{1}{u^2}-\frac{1}{\overline v^2}\Big)+ {\cal O}\Big(\frac{N^{3\tau}}{N}\Big)\Big)\big(1+{\cal O}\big(N^{1/3}u^2,N\overline v^6\big)\big)\\
& -\exp\bigg(N\Big(\frac{u^3+\overline v^3}{3b_c}-\frac{u^4+\overline v^4}{4b_c^2}+\frac{u^5+\overline v^5}{5b_c^{3}}\Big)+\frac{2sN^{1/3}}{\gamma_c}(u+\overline v)\bigg)\\
&\quad \times \Big(1-\frac{\gamma_c r}{ N^{1/3}}\frac{u+\overline v}{2u\overline v}-\frac{\gamma_c^2(q^2-r^2)}{ N^{2/3}}\frac{(u+\overline v)^2}{8u^2\overline v^2}+{\cal O}\Big(N^{1/3}u^2,N^{1/3}\overline v^2,\frac{N^{3\tau}}{N}\Big)\Big)
\bigg] .\end{aligned}   
\end{equation}
\end{lemma}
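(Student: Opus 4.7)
\textbf{Proof proposal for Lemma \ref{preker1}.} The plan is to parallel the strategy used for Lemma \ref{preker}, but replacing the local parametrix asymptotics \eqref{pn strong} with the ``away from $b_c^*$'' asymptotics \eqref{pn omegaplus} of Proposition \ref{thm ortho1}. I start from the Christoffel--Darboux identity of Theorem \ref{thm 21}, writing
\[
\overline{\partial}_\zeta {\cal K}_n(z,\zeta)={\rm I}_n-{\rm II}_n
\]
exactly as in \eqref{main identity}--\eqref{eq IIN}, and I re-express ${\rm I}_n$, ${\rm II}_n$ by means of the $\psi_n$-identities \eqref{psi differ1}--\eqref{eq pn1 and pn11}, the $\psi_n'$ formula \eqref{psin p}, and the norming-constant relations \eqref{hn1}--\eqref{hn3}. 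The values of $\alpha_n,\beta_n,\gamma_n,b_n$ and their combinations have already been computed in \eqref{eq alpha}--\eqref{eq pqratio}; those expansions will be recycled verbatim since they are independent of where $z,\zeta$ sit inside $D_c$.

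The essential new input comes from \eqref{pn omegaplus}. On $D_c\setminus D_\tau$ each of $p_n(z)$ and $q_n(z)$ is a sum of two pieces, one with prefactor $e^{Naz+tN\ell}$ alone (call it the ``$\cal B$-interior type'') and one carrying the additional factor $e^{-N\phi(z)}=z^{n+Nc}(z-a)^{-Nc}e^{-Naz-tN\ell}$ (the ``$\cal B$-exterior type''). Inserting these into ${\rm I}_n-{\rm II}_n$ therefore produces four cross-products, according to whether the $z$-factor and the $\overline{\zeta}$-factor are of interior or exterior type. These four contributions match exactly the four exponential terms in the statement \eqref{eq prekernel 2}: the signs reflect the overall sign $+{\rm I}_n-{\rm II}_n$, while the rational prefactors in $(u+\overline v)/(u\overline v)$, $1/u^2-1/\overline v^2$, etc.\ drop out of the residue-type expansions of \eqref{pn omegaplus} once one uses $r_1=-\gamma_c^{-1}+O(N^{-2/3})$ from \eqref{eq-gamma123} and the relations \eqref{relation up}.

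The remaining work is to Taylor-expand the exponents in each cross-term around $z=b_c$, $\zeta=b_c$. Using $b_c-a=\sqrt c$, $t+c=t_c+c+O(N^{-2/3})=b_c^2+O(N^{-2/3})$, and writing
\[
Nc\log(\sqrt c+u)=Nc\log\sqrt c+N\sqrt c\,u-\tfrac{N}{2}u^2+\sum_{k\ge 3}\tfrac{(-1)^{k+1}}{k}\tfrac{Nu^k}{c^{(k-1)/2}},\quad
N(t+c)\log(b_c+u)=Nb_c^2\log b_c+Nb_c u-\tfrac{N}{2}u^2+\sum_{k\ge 3}\tfrac{(-1)^{k+1}}{k}\tfrac{Nu^k}{b_c^{k-2}},
\]
together with the corresponding expansions around $\bar\zeta=b_c+\bar v$ and the expansion $e^{-Nz\bar\zeta}=e^{-Nb_c^2-Nb_c(u+\bar v)-Nu\bar v}$, I combine with the $2tN\ell$ contribution coming from \eqref{ell expansion} and the Stirling factor \eqref{eq gamma Stirling} (as in \eqref{common1}); the $O(1)$ and $O(N)$ constants cancel as in the verification of \eqref{eq common factor}, leaving the four exponentials displayed in \eqref{eq prekernel 2}. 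The $s$-dependence $2sN^{1/3}/\gamma_c$ multiplying $u$ or $\bar v$ enters through the mismatch between $t$ and $t_c$ in the factor $N(t+c)\log(b_c+u)$, using \eqref{eq n and N}.

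The main obstacle is the bookkeeping of the Taylor expansions up to order $5$ in $u$ and $\bar v$. Because $u,\bar v$ are only controlled by $|u|,|\bar v|\gtrsim N^{-\tau}$, the natural small parameters $Nu^k$ need not be small for $k\le 5$, whereas $Nu^6=O(N^{1-6\tau})$ is negligible for $\tau<1/3$; hence every lower-order Taylor coefficient must be kept exactly, and the error terms in \eqref{pn omegaplus} of size $O(N^{-1+3\tau})$ have to be tracked through the four cross-products. Once this careful expansion is executed, the polynomial correction $\tfrac{\gamma_c r}{N^{1/3}}\tfrac{u+\bar v}{2u\bar v}$, $\tfrac{\gamma_c q}{N^{1/3}}\tfrac{u+\bar v}{2u\bar v}$ and their $N^{-2/3}$ refinements arise directly from the $r(\hat s)$, $q(\hat s)$, $p_{ij}(\hat s)$ entries of the residues at $b_c^*$ in \eqref{pn omegaplus}, finishing the identification with \eqref{eq prekernel 2}.
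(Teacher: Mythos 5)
Your proposal follows the same strategy as the paper's proof: starting from the Christoffel--Darboux identity, recycling the expansions of $\alpha_n,\beta_n,\gamma_n,b_n$ computed for Lemma~\ref{preker}, substituting the two-piece (``interior'' plus ``exterior'') representation of $p_n,q_n$ from \eqref{pn omegaplus} to obtain four cross-products, and then Taylor-expanding the exponents to fifth order as in \eqref{eq unc2}--\eqref{eq fact 3} and \eqref{commoncommon}. This is exactly the route taken in Section~\ref{section 4} (compare \eqref{eq IXY}, \eqref{eq IIXY}, \eqref{main difference 1}, \eqref{remainremain}), so your sketch is a correct plan of the paper's argument, with the one caveat that the phrase ``$Nu^6 = {\cal O}(N^{1-6\tau})$'' implicitly assumes $|u|\lesssim N^{-\tau}$ rather than merely $|u|\gtrsim N^{-\tau}$; in the lemma itself the error is simply recorded as ${\cal O}(Nu^6)$, and the quantitative gain only materializes later when the lemma is applied on the scale $|u|\sim N^{-\tau}$.
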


\begin{proof}
As $z\in D_c\setminus D_\tau$, $\zeta\in D_c\setminus D_\tau$ and 
$$b_c^*=b_c+\frac{s}{2\gamma_cN^{2/3}}+{\cal O}\Big(\frac{1}{N^{4/3}}\Big),$$
we will use Proposition \ref{thm ortho1} to compute the derivative of the pre-kernel ${\cal K}_{n}(z, \zeta)$ in Theorem \ref{thm 21}. 

Since $z\notin D_\tau$ we have $1/u={\cal O}(N^\tau)$. Recall that $$\overline{\partial}_{\zeta} {\cal K}_{n}(z, \zeta)= {\rm I}_n- {\rm II}_n.$$

For convenience, we write $p_n$ and $q_n$ in \eqref{pn omegaplus} as
\begin{equation}
\begin{aligned}
p_n(z)&=\ee^{Naz+tN\ell}\ee^{\ii\frac{16}{3}\xi(\beta)^3}\big(P_{21}^z\big)+\ee^{Naz+tN\ell}\ee^{-N\phi(z)}\big(P_{22}^z\big),\\
q_n(z) &=\ee^{Naz}\ee^{-N\phi(z)}\ee^{-\ii\frac{16}{3}\xi(\beta)^3}\big(P_{12}^z\big)-\ee^{Naz}\big(P_{11}^z\big),
\end{aligned}        
\end{equation}
using the short-hand notations below.
\begin{align}
P_{11}^z:&=  1-\frac{r}{2r_1N^{1/3}(z-b_c^*)}-\frac{p_{11}}{r_1^2N^{2/3}(z-b_c^*)^2}+{\cal O}\Big(\frac{N^{3\tau}}{N}\Big),\\
P_{12}^z:&=\frac{q}{2r_1N^{1/3}(z-b_c^*)}-\frac{r_2p_{12}}{r_1^3N^{2/3}(z-b_c^*)}+\frac{p_{12}}{r_1^2N^{2/3}(z-b_c^*)^2}+{\cal O}\Big(\frac{N^{3\tau}}{N}\Big),\\
P_{21}^z:&=\frac{q}{2r_1N^{1/3}(z-b_c^*)}+\frac{r_2p_{21}}{r_1^3N^{2/3}(z-b_c^*)}-\frac{p_{21}}{r_1^2N^{2/3}(z-b_c^*)^2}+{\cal O}\Big(\frac{N^{3\tau}}{N}\Big),\\
P_{22}^z:&=1+\frac{r}{2r_1N^{1/3}(z-b_c^*)}-\frac{p_{22}}{r_1^2N^{2/3}(z-b_c^*)^2}+{\cal O}\Big(\frac{N^{3\tau}}{N}\Big).
\end{align}  As before, $q$, $r$, $p_{11}$ , $p_{12}$, $p_{21}$ and $p_{22}$ depend on $\hat s$, i.e., $q=q(\hat s)$ etc.

When  $z\in D_c\setminus {D}_\tau$,  using \eqref{pn omegaplus} in  Proposition \ref{thm ortho1},  ${\rm I}_{n}$ from \eqref{eq in middle} can be written as
\begin{align}\label{eq IXY}
\begin{split}
&{\rm I}_n=\widehat Q_n \frac{1}{\ee^{\ii\frac{16}{3}\xi(\beta)^3}\ee^{tN\ell}}\frac{N\alpha_n}{ 1+\beta_n\gamma_n}\Big(P_{11}^z-\ee^{-\ii\frac{16}{3}\xi(\beta)^3}\ee^{-N\phi(z)} P_{12}^z\Big)\\
&\quad\quad\times \Bigg((c+t)\bigg(\frac{1+\beta_n\gamma_n}{\zeta-a}-\frac{\beta_n\gamma_n}{\zeta}\bigg) \Big(P_{21}^\zeta +\ee^{-\ii\frac{16}{3}\xi(\beta)^3}\ee^{-N\phi(\zeta)}P_{22}^\zeta\Big)\\
&\quad\quad \quad -\frac{1}{\ee^{\ii\frac{16}{3}\xi(\beta)^3}\ee^{tN\ell}}\bigg(\frac{(c+t)\beta_n\alpha_n}{\zeta}+\frac{ab_n-(c+t)\beta_n\alpha_n}{\zeta-a}\bigg) \Big(P_{11}^\zeta-\ee^{-\ii\frac{16}{3}\xi(\beta)^3}\ee^{-N\phi(\zeta)}P_{12}^\zeta\Big)\Bigg)^*,
\end{split}
\end{align} 
where we define
\begin{equation}\label{eq common2}
\widehat Q_n={{\ee^{ -Nz\overline\zeta} }}\frac{N^{Nc+n+1}}{\pi\Gamma(Nc+n+1)}(z-a)^{Nc}\ee^{Naz}\ee^{\frac{16}{3}\ii\xi(\beta)^3}\ee^{tN\ell}\Big((\zeta-a)^{Nc}\ee^{Na\zeta}\ee^{\frac{16}{3}\ii\xi(\beta)^3}\ee^{tN\ell}\Big)^*.    
\end{equation}

Similarly, ${\rm II}_{n}$ in \eqref{eq iin middle} can be written as \begin{equation}\label{eq IIXY}
\begin{aligned}
&{\rm II}_{n}=\widehat Q_n \frac{N(Nc+n)}{Nc+n+1}\bigg(a+\frac{b_n \left(1+\beta_n\gamma_n\right)}{\alpha_n
\beta_n}-b_n\frac{q_n(a)}{p_n(a)}\bigg)\frac{\beta_n\gamma_n}{ \big(\beta_{n+1}-\beta_n\big)}\\
&\quad\quad\times \Bigg(\frac{b_n \left(1+\beta_n\gamma_n\right)}{\alpha_n
\beta_n}\Big(P_{21}^z +\ee^{-\ii\frac{16}{3}\xi(\beta)^3}\ee^{-N\phi(z)}P_{22}^z\Big)+\frac{b_n}{\ee^{\ii\frac{16}{3}\xi(\beta)^3}\ee^{tN\ell}} \Big(P_{11}^z-\ee^{-\ii\frac{16}{3}\xi(\beta)^3}\ee^{-N\phi(z)} P_{12}^z\Big)\Bigg)\\
&\quad\quad \times \frac{1}{\overline \zeta}\Bigg(\Big(P_{21}^\zeta +\ee^{-\ii\frac{16}{3}\xi(\beta)^3}\ee^{-N\phi(\zeta)}P_{22}^\zeta\Big)+\frac{\alpha_n}{\gamma_n\ee^{\ii\frac{16}{3}\xi(\beta)^3}\ee^{tN\ell}}\Big(P_{11}^\zeta-\ee^{-\ii\frac{16}{3}\xi(\beta)^3}\ee^{-N\phi(\zeta)} P_{12}^\zeta\Big)\Bigg)^*.
\end{aligned}
\end{equation}

Similar to the proof of Theorem \ref{main theorem}, we now compute ${\rm I}_{n}$ and ${\rm II}_{n}$. 

Using the expanssion of $\ell$ in \eqref{ell expansion} and the definition of $\phi(z)$ in \eqref{def phi}  we have \begin{align}\label{common11}
\ee^{tN\ell}&=\exp\Big(-Nab_c-\frac{2{s^{3/2}}}{3}\Big)b_c^{Nb_c^2+N^{1/3}\frac{2b_cs}{\gamma_c}}c^{-Nc/2}\Big(1+{\cal O}\Big(\frac{1}{N^{1/3}}\Big)\Big),\\\label{common12}
\ee^{-N\phi(z)}\ee^{-\ii\frac{16}{3}\xi(\beta)^3}
    &=\exp (-Na u)\Big(\frac{\sqrt c}{\sqrt c + u}\Big)^{Nc} \Big(1+\frac{u}{b_c}\Big)^{Nb_c^2+N^{1/3}\frac{2b_cs}{\gamma_c}} \Big(1+{\cal O}\Big(\frac{1}{N^{1/3}}\Big)\Big).  \end{align}

Using the definition of $z$ in \eqref{def zxy}, \eqref{common11} and the 2nd identity in \eqref{common1}, $\widehat Q_n$ from \eqref{eq common2}  is written as 
\begin{equation}\label{commoncommon}
\begin{aligned}
\widehat Q_n
&=\exp \big(-N\sqrt c (\overline v+u)-Nu\overline v\big)\Big(1+\frac{u}{\sqrt c}\Big)^{Nc}\Big(1+\frac{\overline v}{\sqrt c}\Big)^{Nc}\\
&\quad \times\exp\bigg(\frac{32}{3}{\rm Re}(\ii\xi(\beta)^3)-\frac{4{\rm Re}(s^{3/2})}{3}\bigg)   \frac{\sqrt N}{\sqrt 2 \pi^{3/2}b_c }\Big(1+{\cal O}\Big(\frac{1}{N^{1/3}}\Big)\Big)\\
&=\exp \big(-N\sqrt c (\overline v+u)-Nu\overline v\big)\Big(1+\frac{u}{\sqrt c}\Big)^{Nc}\Big(1+\frac{\overline v}{\sqrt c}\Big)^{Nc} \frac{\sqrt N}{\sqrt 2 \pi^{3/2}b_c}\Big(1+{\cal O}\Big(\frac{1}{N^{1/3}}\Big)\Big),
\end{aligned}    
\end{equation}  where the 2nd equality is obtained by the expression for $\xi(\beta)$ in \eqref{zeta beta} with  $\hat s = s(1+{\cal O}(N^{-2/3}))$ which comes from \eqref{def zetabeta} and $t-t_c = {\cal O}(N^{-2/3})$. 

We now compute $({\rm I}_n-{\rm II}_n)/\widehat Q_n$. Recall that $u$ and $v$ are chosen such that $z\notin D_\tau$ and $\zeta\notin D_\tau$. Using ${\rm I}_{n}$ in \eqref{eq IXY} and  ${\rm II}_{n}$ in \eqref{eq IIXY} we have
\begin{equation}\label{main difference}
\begin{aligned}
\frac{{\rm I}_{n}}{\widehat Q_n}-\frac{{\rm II}_{n}}{\widehat Q_n}
&={\cal C}_1 \Big(P_{11}^z-\ee^{-\ii\frac{16}{3}\xi(\beta)^3}\ee^{-N\phi(z)} P_{12}^z\Big)\Big(P_{21}^\zeta +\ee^{-\ii\frac{16}{3}\xi(\beta)^3}\ee^{-N\phi(\zeta)}P_{22}^\zeta\Big)^*\\
& \quad - {\cal C}_2 \Big(P_{11}^z-\ee^{-\ii\frac{16}{3}\xi(\beta)^3}\ee^{-N\phi(z)} P_{12}^z\Big)\Big(P_{11}^\zeta-\ee^{-\ii\frac{16}{3}\xi(\beta)^3}\ee^{-N\phi(\zeta)} P_{12}^\zeta\Big)^*\\
& \quad -{\cal C}_3 \Big(P_{21}^z +\ee^{-\ii\frac{16}{3}\xi(\beta)^3}\ee^{-N\phi(z)}P_{22}^z\Big)\Big(P_{21}^\zeta +\ee^{-\ii\frac{16}{3}\xi(\beta)^3}\ee^{-N\phi(\zeta)}P_{22}^\zeta\Big)^* \\
& \quad -{\cal C}_4 \Big(P_{21}^z +\ee^{-\ii\frac{16}{3}\xi(\beta)^3}\ee^{-N\phi(z)}P_{22}^z\Big)\Big(P_{11}^\zeta-\ee^{-\ii\frac{16}{3}\xi(\beta)^3}\ee^{-N\phi(\zeta)} P_{12}^\zeta\Big)^*,
\end{aligned}
\end{equation}
where 
\begin{equation}
    \begin{aligned}
        {\cal C}_1:&=\frac{1}{\ee^{\ii\frac{16}{3}\xi(\beta)^3}\ee^{tN\ell}}\frac{N\alpha_n}{1+\beta_n\gamma_n}\bigg((c+t)\Big(\frac{1+\beta_n\gamma_n}{\zeta-a}-\frac{\beta_n\gamma_n}{\zeta}\Big)  \bigg)^*\\
&\quad -\frac{N(Nc+n)}{Nc+n+1}\bigg(a+\frac{b_n \left(1+\beta_n\gamma_n\right)}{\alpha_n
\beta_n}-b_n\frac{q_n(a)}{p_n(a)}\bigg)\frac{\beta_n\gamma_n}{ \big(\beta_{n+1}-\beta_n\big)}\frac{b_n}{\ee^{\ii\frac{16}{3}\xi(\beta)^3}\ee^{tN\ell}} \frac{1}{\overline \zeta},\\
        {\cal C}_2:&=\frac{N(Nc+n)}{Nc+n+1}\bigg(a+\frac{b_n \left(1+\beta_n\gamma_n\right)}{\alpha_n
\beta_n}-b_n\frac{q_n(a)}{p_n(a)}\bigg)\frac{\beta_n\gamma_n}{ \beta_{n+1}-\beta_n}\frac{1}{\overline \zeta}\frac{b_n}{\ee^{\ii\frac{16}{3}\xi(\beta)^3}\ee^{tN\ell}} \bigg(\frac{\alpha_n}{\gamma_n\ee^{\ii\frac{16}{3}\xi(\beta)^3}\ee^{tN\ell}}\bigg)^*\\
&\quad + \frac{1}{\ee^{\ii\frac{16}{3}\xi(\beta)^3}\ee^{tN\ell}}\frac{N\alpha_n}{ 1+\beta_n\gamma_n}\bigg(\frac{1}{\ee^{\ii\frac{16}{3}\xi(\beta)^3}\ee^{tN\ell}}\Big(\frac{(c+t)\beta_n\alpha_n}{\zeta}+\frac{ab_n-(c+t)\beta_n\alpha_n}{\zeta-a}\Big)\bigg)^*,\\
        {\cal C}_3:&=\frac{N(Nc+n)}{Nc+n+1}\bigg(a+\frac{b_n \left(1+\beta_n\gamma_n\right)}{\alpha_n
\beta_n}-b_n\frac{q_n(a)}{p_n(a)}\bigg)\frac{\beta_n\gamma_n}{ \beta_{n+1}-\beta_n}\frac{1}{\overline \zeta} \frac{b_n \left(1+\beta_n\gamma_n\right)}{\alpha_n
\beta_n},\\
        {\cal C}_4:&={\cal C}_3\bigg(\frac{\alpha_n}{\gamma_n\ee^{\ii\frac{16}{3}\xi(\beta)^3}\ee^{tN\ell}}\bigg)^*.
    \end{aligned}
\end{equation}
We further write \eqref{main difference} as
\begin{equation}\label{main difference 1}
\begin{aligned}
\frac{{\rm I}_{n}}{\widehat Q_n}-\frac{{\rm II}_{n}}{\widehat Q_n}
&={\cal C}_1P_{11}^z(P_{21}^\zeta)^*-{\cal C}_2 P_{11}^z(P_{11}^\zeta)^*-{\cal C}_3P_{21}^z(P_{21}^\zeta)^*-{\cal C}_4P_{21}^z(P_{11}^\zeta)^*\\
&\quad +\big(\ee^{-\ii\frac{16}{3}\xi(\beta)^3}\ee^{-N\phi(\zeta)}\big)^*\Big({\cal C}_1P_{11}^z(P_{22}^\zeta)^*+{\cal C}_2P_{11}^z(P_{12}^\zeta)^*-{\cal C}_3P_{21}^z(P_{22}^\zeta)^*+{\cal C}_4P_{21}^z(P_{12}^\zeta)^*\Big)\\
&\quad +\ee^{-\ii\frac{16}{3}\xi(\beta)^3}\ee^{-N\phi(z)}\Big(-{\cal C}_1 P_{12}^z(P_{21}^\zeta)^* +{\cal C}_2 P_{12}^z(P_{11}^\zeta)^*-{\cal C}_3P_{22}^z(P_{21}^\zeta)^*-{\cal C}_4P_{22}^z(P_{11}^\zeta)^* \Big)\\
&\quad +\ee^{-\ii\frac{16}{3}\xi(\beta)^3}\ee^{-N\phi(z)} \big(\ee^{-\ii\frac{16}{3}\xi(\beta)^3}\ee^{-N\phi(\zeta)}\big)^*\Big(-{\cal C}_1P_{12}^z(P_{22}^\zeta)^*-{\cal C}_2 P_{12}^z(P_{12}^\zeta)^*-{\cal C}_3P_{22}^z(P_{22}^\zeta)^*+{\cal C}_4P_{22}^z(P_{12}^\zeta)^*\Big).
\end{aligned}
\end{equation}

Using \eqref{def zxy} we get 
\begin{equation}\label{eq zandbcstar}
 \frac{1}{z-b_c^*}
 = \frac{1}{u}\bigg(1+ \frac{b_c-b_c^*}{u}\bigg)^{-1} =\frac{1}{u}+{\cal O}\Big(\frac{N^{2\tau}}{ N^{2/3}}\Big).
\end{equation} 
Using the relations of $r$, $p_{11}$ and $p_{22}$ in \eqref{relation up}, the expansion of $r_1$ in \eqref{eq-gamma123}, and \eqref{eq zandbcstar}, yields
\begin{equation}\label{P11expan}
\begin{aligned}
P_{11}^z&=1+\frac{\gamma_c r}{2u N^{1/3}}-\frac{\gamma_c^2(q^2-r^2)}{8u^2N^{2/3}}+{\cal O}\Big(\frac{N^{3\tau}}{N}\Big),\\
P_{22}^z&=1-\frac{\gamma_c r}{2u N^{1/3}}-\frac{\gamma_c^2(q^2-r^2)}{8u^2N^{2/3}}+{\cal O}\Big(\frac{N^{3\tau}}{N}\Big).
\end{aligned}   
\end{equation}

Similarly, with the relations of $r$, $p_{12}$ and $p_{21}$ in \eqref{relation up}, the expansions of $r_1, r_2$ in \eqref{eq-gamma123}, and \eqref{eq zandbcstar}, we find
\begin{equation}\label{P21expan}
\begin{aligned}
P_{12}^z
=-\frac{\gamma_cq}{2uN^{1/3}}+
\frac{\gamma_c^2(qr+q')}{4u^2N^{2/3}}+{\cal O}\Big(\frac{N^{3\tau}}{ N}\Big),\quad 
P_{21}^z
=-\frac{\gamma_cq}{2uN^{1/3}}-
\frac{\gamma_c^2(qr+q')}{4u^2N^{2/3}}+{\cal O}\Big(\frac{N^{3\tau}}{ N}\Big).    
\end{aligned}
\end{equation}

Moreover,  the expansions of $\beta_n\gamma_n$, $\beta_n\alpha_n$ and $b_n$ in \eqref{eq alphabngamma} and \eqref{def zxy}, imply
\begin{equation}\label{zza01}\frac{\alpha_n}{\gamma_n\ee^{\ii\frac{16}{3}\xi(\beta)^3}\ee^{tN\ell}}=-\frac{q\gamma_c}{2b_cN^{1/3}}+{\cal O}\Big(\frac{1}{N^{2/3}}\Big),\end{equation}
\begin{equation}\label{zza02}\frac{1}{z}\frac{b_n \left(1+\beta_n\gamma_n\right)}{\alpha_n
\beta_n}=\frac{q^2\gamma_c^2}{4b_c(b_c+u)N^{2/3}}+{\cal O}\Big(\frac{1}{N}\Big),\end{equation}
\begin{equation}\label{zza1}
 (c+t)\Big(\frac{1+\beta_n\gamma_n}{z-a}-\frac{\beta_n\gamma_n}{z}\Big)=\frac{b_c^2}{b_c+ u}+\frac{8b_cs(\sqrt c + u)+aq^2\gamma_c^3}{4(\sqrt c +u)(b_c+u)\gamma_cN^{2/3}}  +{\cal O}\Big(\frac{1}{N}\Big)   
\end{equation}
and
\begin{equation}\label{zza2}
\begin{aligned}
\frac{1}{\ee^{\ii\frac{16}{3}\xi(\beta)^3}\ee^{tN\ell}}\Big(\frac{(c+t)\beta_n\alpha_n}{z}&+\frac{ab_n-(c+t)\beta_n\alpha_n}{z-a}\Big)=\frac{au   \gamma _c q}{2 (\sqrt c+u)(b_c+u)N^{1/3}}\\
& -\frac{a(b_c+\sqrt c)u\gamma_c^2(qr+q')}{8b_c\sqrt{c}(\sqrt c+u)(b_c+u)N^{2/3}}-\frac{a\gamma_c^2q'}{4(\sqrt c+u)(b_c+u)N^{2/3}}+{\cal O}\Big(\frac{1}{N}\Big).    
\end{aligned}    
\end{equation}

Combining \eqref{I1}, \eqref{zza1}, \eqref{II1}, and $b_n$ in \eqref{eq alphabngamma}, results in
\begin{equation}
\begin{aligned}
 {\cal C}_1={\cal O}(N^{2/3}).
\end{aligned}
\end{equation}

Similarly, combining \eqref{II1}, $b_n$ in \eqref{eq alphabngamma}, \eqref{zza01}, \eqref{I1} and  \eqref{zza2}, we see that 
\begin{equation}
\begin{aligned}
{\cal C}_2=-\frac{Nb_c\sqrt c}{\sqrt c+ \overline v
}+{\cal O}(N^{2/3}),
\end{aligned}
\end{equation}
whereas combining \eqref{II1},  \eqref{zza02}, and \eqref{zza01}, shows that 
\begin{equation}
\begin{aligned}
{\cal C}_3=\frac{Nb_c^2}{b_c+ \overline v
}+{\cal O}(N^{2/3}),\quad {\cal C}_4={\cal O}(N^{2/3}).
\end{aligned}
\end{equation}

Substituting $\{{\cal C}_j\}$ from above expansions, \eqref{P11expan}, and \eqref{P21expan} into \eqref{main difference 1}, reveals
\begin{equation}\label{remainremain}
\begin{aligned}
\frac{({\rm I}_n-{\rm II_n})}{\widehat Q_n}
&=\frac{Nb_c\sqrt c}{\sqrt c+\overline v}\Big(1+\frac{\gamma_c r}{ N^{1/3}}\frac{u+\overline v}{2u\overline v}+\frac{\gamma_c^2r^2}{4u\overline v N^{2/3}}-\frac{\gamma_c^2(q^2-r^2)}{ N^{2/3}}\frac{(u^2+\overline v^2)}{8u^2\overline v^2}+{\cal O}\Big(\frac{N^{3\tau}}{N}\Big)\Big)\\
&-\frac{Nb_c^2}{b_c+\overline v}\Big(\frac{\gamma_c^2q^2}{4u\overline v N^{2/3}}+{\cal O}\Big(\frac{N^{3\tau}}{N}\Big)\Big)+{\cal O}(N^{1/3+\epsilon})\\
&+\ee^{-Na\overline v}\Big(\frac{\sqrt c}{\sqrt c + \overline v}\Big)^{Nc} \Big(1+\frac{\overline v}{b_c}\Big)^{Nb_c^2+\frac{2b_csN^{\frac13}}{\gamma_c}}\bigg(-\frac{Nb_c\sqrt c}{\sqrt c+\overline v}\Big(\frac{-\gamma_cq}{2\overline v N^{1/3}}+
\frac{\gamma_c^2(qr+q')}{4\overline v^2N^{2/3}}-\frac{\gamma_c^2qr}{4u\overline vN^{2/3}}+{\cal O}\Big(\frac{N^{3\tau}}{N}\Big)\Big)\\
&\quad -\frac{Nb_c^2}{b_c+\overline v}\Big(\frac{-\gamma_cq}{2 u N^{1/3}}-
\frac{\gamma_c^2(qr+q')}{4u^2N^{2/3}}+\frac{\gamma_c^2qr}{4u\overline vN^{2/3}}+{\cal O}\Big(\frac{N^{3\tau}}{N}\Big)\Big)+{\cal O}(N^{\frac23})\bigg)\\
&+\ee^{-Nau}\Big(\frac{\sqrt c}{\sqrt c + u}\Big)^{Nc} \Big(1+\frac{u}{b_c}\Big)^{Nb_c^2+\frac{2b_csN^{\frac13}}{\gamma_c}} \bigg(-\frac{Nb_c\sqrt c}{\sqrt c+\overline v}\Big(\frac{-\gamma_cq}{2uN^{1/3}}+
\frac{\gamma_c^2(qr+q')}{4u^2N^{2/3}}-\frac{\gamma_c^2qr}{4u\overline vN^{2/3}}+{\cal O}\Big(
\frac{N^{3\tau}}{ N}\Big)\Big)\\
&\quad -\frac{Nb_c^2}{b_c+\overline v}\Big(\frac{-\gamma_cq}{2\overline v N^{1/3}}-
\frac{\gamma_c^2(qr+q')}{4\overline v^2N^{2/3}}+\frac{\gamma_c^2qr}{4u\overline vN^{2/3}}+{\cal O}\Big(
\frac{N^{3\tau}}{ N}\Big)\Big)+{\cal O}(N^{2/3})\bigg)\\
& +\ee^{-Na(u+\overline v)}\Big(\frac{\sqrt c}{\sqrt c + u}\frac{\sqrt c}{\sqrt c + \overline v}\Big)^{Nc} \Big(\frac{b_c+u}{b_c}\frac{b_c+\overline v}{b_c}\Big)^{Nb_c^2+N^{\frac{1}{3}}\frac{2b_cs}{\gamma_c}}\bigg(\frac{Nb_c\sqrt c}{\sqrt c+\overline v}\Big(\frac{\gamma_c^2q^2}{4u\overline v N^{2/3}}+{\cal O}\Big(\frac{N^{3\tau}}{N}\Big)\Big)\\
&\quad -\frac{Nb_c^2}{b_c+\overline v}\Big(1-\frac{\gamma_c r}{ N^{1/3}}\frac{u+\overline v}{2u\overline v}+\frac{\gamma_c^2r^2}{4u\overline v N^{2/3}}-\frac{\gamma_c^2(q^2-r^2)}{ N^{2/3}}\frac{(u^2+\overline v^2)}{8u^2\overline v^2}+{\cal O}\Big(\frac{N^{3\tau}}{N}\Big)\Big)+{\cal O}(N^{1/3+\epsilon})\bigg)
.\end{aligned}   
\end{equation}
Here the first error bound ${\cal O}(N^{1/3+\epsilon})$ comes from the terms ${\cal C}_1P_{11}^z(P_{21}^\zeta)^*$ and ${\cal C}_4P_{21}^z(P_{11}^\zeta)^*$ in \eqref{main difference 1}, and the second error bound ${\cal O}(N^{1/3+\epsilon})$ comes from the terms ${\cal C}_1P_{12}^z(P_{22}^\zeta)^*$ and ${\cal C}_4P_{22}^z(P_{12}^\zeta)^*$ in \eqref{main difference 1}, the first error bound ${\cal O}(N^{2/3})$ comes from the term ${\cal C}_1P_{11}^z(P_{22}^\zeta)^*$ in \eqref{main difference 1}, and the second error bound ${\cal O}(N^{2/3})$ comes from the term ${\cal C}_4P_{22}^z(P_{11}^\zeta)^*$ in \eqref{main difference 1}. 
And the error bound ${\cal O}(N^{3\tau-1})$ stands for ${\cal O}\big(\frac{1}{Nu^3},\frac{1}{N\overline v^3},\frac{1}{Nu^2\overline v},\frac{1}{Nu\overline v^2}\big)$. 

Using the fact $1/u={\cal O}(N^\tau)$, we have, for any non-zero constant $k$,
\begin{equation}\label{eq unc2}
        \Big(1+\frac{u}{k}\Big)^{Nk^2}=\exp\Big(Nk^2\log\Big(1+\frac{u}{k}\Big)\Big)=\exp\bigg(Nk^2\Big(\frac{u}{k}-\frac{u^2}{2k^2}+\frac{u^3}{3k^{3}}-\frac{u^4}{4k^4}+\frac{u^5}{5k^5}\Big)\bigg)\big(1+{\cal O}\big(Nu^6\big)\big)
\end{equation} and 
\begin{equation}
    \Big(1+\frac{u}{b_c}\Big)^{N^{\frac{1}{3}}\frac{2b_cs}{\gamma_c}}=\exp\bigg(\frac{2suN^{1/3}}{\gamma_c}\bigg)\big(1+{\cal O}\big(N^{1/3}u^2\big)\big).
\end{equation}

It follows that
\begin{equation}\label{eq fact 1}
\begin{aligned}
\ee^{-Nau}\Big(\frac{\sqrt c}{\sqrt c + u}\Big)^{Nc} \Big(1+\frac{u}{b_c}\Big)^{Nb_c^2}
&=\exp\bigg(N\Big(\frac{u^3}{3}\Big(\frac{1}{b_c}-\frac{1}{\sqrt c}\Big)-\frac{u^4}{4}\Big(\frac{1}{b_c^2}-\frac{1}{c}\Big)+\frac{u^5}{5}\Big(\frac{1}{b_c^3}-\frac{1}{c^{3/2}}\Big)\Big)\bigg)\big(1+{\cal O}\big(Nu^6\big)\big),
\end{aligned}\end{equation}
\begin{equation}\label{eq fact 2}
\begin{aligned}
\ee^{-Nk(u+\overline v)}\Big(1+\frac{u}{k}\Big)^{Nk^2}\Big(1+\frac{\overline v}{k}\Big)^{Nk^2}
&=\exp\bigg(-N\Big(\frac{u^2+\overline v^2}{2}-\frac{u^3+\overline v^3}{3k}+\frac{u^4+\overline v^4}{4k^2}-\frac{u^5+\overline v^5}{5k^{3}}\Big)\bigg)\big(1+{\cal O}\big(Nu^6,N\overline v^2\big)\big)
\end{aligned}    
\end{equation} and 
\begin{equation}\label{eq fact 3}
    \Big(1+\frac{u}{b_c}\Big)^{N^{\frac{1}{3}}\frac{2b_cs}{\gamma_c}}\Big(1+\frac{\overline v}{b_c}\Big)^{N^{\frac{1}{3}}\frac{2b_cs}{\gamma_c}}=\exp\bigg(\frac{2s(u+\overline v)N^{1/3}}{\gamma_c}\bigg)\big(1+{\cal O}\big(N^{1/3}u^2,N^{1/3}\overline v^2\big)\big).
\end{equation}

Combining  \eqref{commoncommon} and  \eqref{remainremain} with \eqref{eq fact 1}, \eqref{eq fact 2} and \eqref{eq fact 3}, we have \eqref{eq prekernel 2}. This ends the proof.    
\end{proof}

\subsection{Integrating the C-D identity: Proof of Theorem \ref{main theorem3}}
Let us define
\begin{equation}\label{def further uv} 
\begin{aligned}
z:&=b_c+u,\quad u:=\frac{\ii(2cb_c^2)^{\frac{1}{4}}Y}{N^\tau}+\frac{\nu}{\sqrt N},\\
\zeta:&=b_c+v,\quad v:=\frac{\ii(2cb_c^2)^{\frac{1}{4}}Y}{N^\tau}+\frac{\eta}{\sqrt N}, 
\end{aligned}\end{equation}
where $(\nu,\eta)\in\CC^2$ and $Y\in\RR$.  

The following lemma holds for all positive $\tau$.

\begin{lemma}\label{Lemma 5.2}
Let $u$ and $v$ be given above. Assuming $\nu$ and $\eta$ are all ${\cal O}(N^\epsilon)$ with $\epsilon<1/2-\tau$ and $Y$ is bounded,  we have the following identities:
\begin{equation}\label{eq uv sym}
\begin{aligned}
&\quad \exp\Big(-Nu\overline v-\frac{N}{2}(u^2+\overline v^2)\Big)\exp\bigg(N\Big(\frac{u^3+\overline v^3}{3k}-\frac{u^4+\overline v^4}{4k^2}+\frac{u^5+\overline v^5}{5k^{3}}\Big)\bigg)\\
&=\exp\bigg(-\Big(\frac{\nu+\overline \eta}{\sqrt 2}+\frac{\sqrt c b_c Y^2\sqrt N}{k N^{2\tau}}\Big)^2\bigg)\Big(1+{\cal O}\Big(\frac{N^{2\epsilon}}{N^\tau},\frac{\sqrt N N^\epsilon}{N^{3\tau}}\Big)\Big),\end{aligned}    
\end{equation}
\begin{equation}\label{eq uv nonsym}
    \begin{aligned}
        &\quad \exp\Big(-Nu\overline v-\frac{N}{2}(u^2+\overline v^2)\Big)\exp\bigg(N\Big(\frac{1}{3}\Big(\frac{u^3}{b_c}+\frac{\overline v^3}{\sqrt c}\Big)-\frac{1}{4}\Big(\frac{u^4}{b_c^2}+\frac{\overline v^4}{c}\Big)+\frac{1}{5}\Big(\frac{u^5}{b_c^3}+\frac{\overline v^5}{c^{3/2}}\Big)\Big)+\frac{2sN^{\frac13}u}{\gamma_c}\bigg)\\
       &=\exp\bigg(\frac{\ii N (2cb_c^2)^{\frac{1}{4}}\sqrt 2aY^3}{3N^{3\tau}} -\frac{\ii N}{5}\frac{(2cb_c^2)^{\frac54}Y^5}{N^{5\tau}}\Big(\frac{1}{b_c^3}-\frac{1}{c^{3/2}}\Big)+\frac{\ii 2s(2cb_c^2)^{\frac{1}{4}}Y}{\gamma_cN^{\tau-1/3}}\bigg)\\
       &\quad \times \exp\bigg(-\frac{(\nu+\overline\eta)^2}{2}-\frac{\sqrt 2 Y^2\sqrt N}{ N^{2\tau}}\big(b_c\overline \eta+\sqrt c\nu\big)-\frac{Y^4 N}{N^{4\tau}}\Big(\frac{a^2}{2}  +b_c\sqrt c \Big)\bigg)\Big(1+{\cal O}\Big(\frac{N^{2\epsilon}}{N^{\tau}},\frac{\sqrt N N^{\epsilon}}{N^{3\tau}},\frac{N^\epsilon}{N^{1/6}}\Big)\Big),
    \end{aligned}
\end{equation}
and
\begin{equation}\label{eq uv nonsym2}
    \begin{aligned}
        &\quad \exp\Big(-Nu\overline v-\frac{N}{2}(u^2+\overline v^2)\Big)\exp\bigg(N\Big(\frac{1}{3}\Big(\frac{\overline v^3}{b_c}+\frac{u^3}{\sqrt c}\Big)-\frac{1}{4}\Big(\frac{\overline v^4}{b_c^2}+\frac{u^4}{c}\Big)+\frac{1}{5}\Big(\frac{\overline v^5}{b_c^3}+\frac{u^5}{c^{3/2}}\Big)\Big)+\frac{2sN^{1/3}\overline v}{\gamma_c}\bigg)\\
&=\exp\bigg(-\frac{\ii N (2cb_c^2)^{\frac{1}{4}}\sqrt 2aY^3}{3N^{3\tau}}+\frac{\ii N}{5}\frac{(2cb_c^2)^{\frac54}Y^5}{N^{5\tau}}\Big(\frac{1}{b_c^3}-\frac{1}{c^{3/2}}\Big)-\frac{\ii 2s(2cb_c^2)^{\frac{1}{4}}Y}{\gamma_cN^{\tau-1/3}}\bigg)\\
&\quad \times \exp\bigg(-\frac{(\nu+\overline\eta)^2}{2}-\frac{\sqrt 2 Y^2\sqrt N}{ N^{2\tau}}\big(\sqrt c\overline \eta+b_c\nu\big)-\frac{Y^4 N}{N^{4\tau}}\Big(\frac{a^2}{2}+b_c\sqrt c \Big)\bigg)\Big(1+{\cal O}\Big(\frac{N^{2\epsilon}}{N^{\tau}},\frac{\sqrt N N^{\epsilon}}{N^{3\tau}},\frac{N^\epsilon}{N^{1/6}}\Big)\Big).
    \end{aligned}
\end{equation}

\end{lemma}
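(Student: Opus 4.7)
The plan is to expand $u$ and $\overline v$ around their common dominant imaginary part. Set $w := \ii(2cb_c^2)^{1/4}Y/N^\tau$, so that $u = w + \nu/\sqrt N$ and $\overline v = -w + \overline\eta/\sqrt N$; under the assumption $|\nu|,|\eta|=O(N^\epsilon)$ with $\epsilon<\tfrac12-\tau$, the residuals $\nu/\sqrt N$ and $\overline\eta/\sqrt N$ are strictly smaller than $|w|=O(N^{-\tau})$. The first step is the exact algebraic identity
\begin{equation*}
u\overline v + \tfrac{1}{2}(u^2+\overline v^2) = \tfrac{1}{2}(u+\overline v)^2 = \tfrac{(\nu+\overline\eta)^2}{2N},
\end{equation*}
which gives $\exp(-Nu\overline v - \tfrac{N}{2}(u^2+\overline v^2)) = \exp(-\tfrac{1}{2}(\nu+\overline\eta)^2)$ with no error whatsoever. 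Every remaining term in each of the three identities must therefore come from the cubic, quartic and quintic contributions in the second exponential factor.

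Next, for the symmetric identity \eqref{eq uv sym} I would exploit the parity of $w$: in $u^n + \overline v^n$ the pure $w^n$-term reinforces for even $n$ but cancels for odd $n$. Extracting the leading contributions yields
\begin{equation*}
\tfrac{N(u^3+\overline v^3)}{3k} \supset \tfrac{N w^2(\nu+\overline\eta)}{k\sqrt N} = -\tfrac{\sqrt{2c}\,b_c Y^2(\nu+\overline\eta)\sqrt N}{kN^{2\tau}},\qquad -\tfrac{N(u^4+\overline v^4)}{4k^2}\supset -\tfrac{Nw^4}{2k^2} = -\tfrac{c b_c^2 Y^4 N}{k^2 N^{4\tau}},
\end{equation*}
which combined with $-\tfrac{(\nu+\overline\eta)^2}{2}$ assemble into the complete square $-\bigl(\tfrac{\nu+\overline\eta}{\sqrt 2}+\tfrac{\sqrt c b_c Y^2\sqrt N}{kN^{2\tau}}\bigr)^2$ appearing in the statement. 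The dominant surviving corrections are $\tfrac{Nw(\nu^2-\overline\eta^2)}{k}=O(N^{2\epsilon-\tau})$ from the cubic and $\tfrac{Nw^3(\nu-\overline\eta)}{k^2\sqrt N}=O(\sqrt N N^{\epsilon}/N^{3\tau})$ from the quartic; the quintic and higher tails are strictly smaller under $\epsilon<\tfrac12-\tau$. Exponentiating and using $\exp(\textrm{small})=1+O(\textrm{small})$ then produces the multiplicative error in \eqref{eq uv sym}.

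For the non-symmetric identities \eqref{eq uv nonsym} and \eqref{eq uv nonsym2}, the coefficients of $u^n$ and $\overline v^n$ differ (namely $b_c^{-(n-2)}$ vs.\ $c^{-(n-2)/2}$), so the pure $w^n$-terms with odd $n$ no longer cancel. Using $\tfrac{1}{b_c}-\tfrac{1}{\sqrt c} = -\tfrac{a}{b_c\sqrt c}$ and $w^3=-\ii(2cb_c^2)^{3/4}Y^3/N^{3\tau}$, the uncancelled cubic produces the imaginary phase $\tfrac{\ii N \sqrt 2\,a(2cb_c^2)^{1/4}Y^3}{3N^{3\tau}}$; the uncancelled quintic produces the analogous $Y^5$-phase; the quartic, together with $(c+b_c^2)/2 = a^2/2+b_c\sqrt c$, produces the real contribution $-\tfrac{Y^4 N}{N^{4\tau}}(\tfrac{a^2}{2}+b_c\sqrt c)$; and the subleading cubic term $\tfrac{Nw^2}{\sqrt N}\bigl(\tfrac{\nu}{b_c}+\tfrac{\overline\eta}{\sqrt c}\bigr)$ gives $-\tfrac{\sqrt 2 Y^2\sqrt N}{N^{2\tau}}(\sqrt c\,\nu+b_c\,\overline\eta)$ (with the roles of $\nu,\overline\eta$ swapping in \eqref{eq uv nonsym2}, which is obtained simply by swapping $u\leftrightarrow \overline v$ throughout). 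Finally, the linear source $\tfrac{2sN^{1/3}u}{\gamma_c}$ (respectively $\tfrac{2sN^{1/3}\overline v}{\gamma_c}$) contributes $\tfrac{2\ii s(2cb_c^2)^{1/4}Y}{\gamma_c N^{\tau-1/3}}$ from its $w$-part and $O(N^{\epsilon-1/6})$ from its $\nu/\sqrt N$-part, giving precisely the additional $N^\epsilon/N^{1/6}$ error that appears only in the non-symmetric cases.

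The hard part will be the meticulous bookkeeping of all subleading terms in the binomial expansions of $u^n$ and $\overline v^n$ for $n=3,4,5$, and verifying that, after multiplication by $N$ (or $N^{1/3}$ for the linear source), no neglected cross-term exceeds the stated errors $O(N^{2\epsilon}/N^{\tau})$ and $O(\sqrt N N^{\epsilon}/N^{3\tau})$ (and $O(N^\epsilon/N^{1/6})$ in the non-symmetric case). The hypothesis $\epsilon<\tfrac12-\tau$ is exactly what makes the aggregate additive error inside the exponent $o(1)$, so that it converts cleanly into a $1+O(\cdot)$ multiplicative factor outside the exponential.
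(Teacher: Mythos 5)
Your proof is correct and follows essentially the same route as the paper: a direct Taylor expansion about the dominant imaginary part $w=\ii(2cb_c^2)^{1/4}Y/N^\tau$, with the surviving leading terms assembling into the stated squares and phases and with the error bounds coming from exactly the terms the paper identifies ($Y\nu^2$, $Y^3\nu$, and the $\nu/\sqrt N$ part of the linear source); your observation that $-Nu\overline v-\tfrac N2(u^2+\overline v^2)=-\tfrac12(\nu+\overline\eta)^2$ holds exactly is a clean way to dispose of the quadratic part. One small caveat: the hypothesis $\epsilon<\tfrac12-\tau$ serves to order the homogeneous terms (so that $Y\nu^2$ and $Y^3\nu$ dominate all others), but it does not by itself make $N^{2\epsilon-\tau}$ or $N^{1/2+\epsilon-3\tau}$ tend to zero — that needs the stronger restriction on $\epsilon$ imposed where the lemma is applied — so your closing sentence overstates what the hypothesis buys, though the paper's own statement shares this looseness.
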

\begin{proof}
The assumption that $\nu$ and $\eta$ are all ${\cal O}(N^\epsilon)$ with $\epsilon+\tau<1/2$ simply means that, in the expressions of $u$ and $v$ \eqref{def further uv}, the term containing $Y$ scales bigger than the term containing $\nu$ or $\eta$ in the limit of large $N$.  The former scales $N^{-\tau}$ while the latter scales $N^{\epsilon-1/2}$.

The proof is done by a straightforward computation, but let us explain the error bounds.  The error comes from ignoring the terms containing $Y \nu^2, Y^3 \nu$ and $Y^4 \nu$.  The other homogeneous terms are smaller than one of these; for example, the term containing $Y^2 \nu^2$ is smaller than $Y^3\nu$ because one of $\nu$ is replaced by $Y$ to go from the former to the latter.   Furthermore, since it is also clear that the term containing $Y^3\nu$ dominates over the term containing $Y^4 \nu$, hence the final error bounds in \eqref{eq uv sym} come from the two terms $Y \nu^2$ and $Y^3 \nu$, which give $N{\cal O}(N^{-\tau}N^{2(\epsilon-1/2)})$ and $N{\cal O}(N^{-3\tau}N^{\epsilon-1/2})$ respectively.

\end{proof}

\subsubsection{Case: \texorpdfstring{$1/6<\tau\leq 1/4$}{}}

If $1/6<\tau\leq 1/4$, let us redefine $\nu$ and $\eta$ such that
\begin{equation}
\begin{aligned}
\nu_\text{old}=\nu-\frac{AY^2 \sqrt N}{\sqrt2 N^{2\tau}},\quad 
\eta_\text{old}=\eta-\frac{AY^2 \sqrt N}{\sqrt2 N^{2\tau}}, 
\end{aligned}\end{equation}
where $\nu_\text{old}$ and $\mu_\text{old}$ refer to the variables, $\nu$ and $\mu$, showing up in \eqref{def further uv}.
Since $\nu_\text{old}$ and $\mu_\text{old}$ are ${\cal O}(N^\epsilon)$ in Lemma \ref{Lemma 5.2} this definition requires that the redefined $\nu$ and $\eta$ are also ${\cal O}(N^\epsilon)$ and $\epsilon \geq 1/2-2\tau.$ Here we assume $Y$ is bounded and finitely away from $0$.

Using Lemma \ref{preker1}, Lemma \ref{Lemma 5.2}, and the facts that \begin{equation}\label{facts uv}
\begin{aligned}
\frac{1}{ N^{1/3}}\Big(\frac{1}{u}+\frac{1}{\overline v}\Big)\sim {\cal O}\big(\frac{N^{4\tau}}{N^{4/3}}\big),\quad 
\frac{1}{ N^{1/3}}\Big(\frac{1}{u}-\frac{1}{\overline v}\Big)\sim {\cal O}\big(\frac{N^{\tau}}{N^{1/3}}\big),   \end{aligned} 
\end{equation}  we have
\begin{equation}\label{eq diff preker 2}
\begin{aligned}
&\overline{\partial}_{\zeta} {\cal K}_{n}(z, \zeta)
=\frac{ N^{3/2}}{\sqrt 2 \pi^{3/2}  }\bigg[\exp\bigg(-\Big(\frac{\nu+\overline \eta}{\sqrt 2}+\frac{(b_c-A) Y^2\sqrt N}{N^{2\tau}}\Big)^2\bigg)-\exp\bigg(-\Big(\frac{\nu+\overline \eta}{\sqrt 2}+\frac{(\sqrt c-A) Y^2\sqrt N}{N^{2\tau}}\Big)^2\bigg)\bigg]\\
&\quad +\frac{ N^{3/2}}{\sqrt 2 \pi^{3/2}  }\bigg[\exp\bigg(-\frac{(\nu+\overline\eta)^2}{2}-\frac{\sqrt 2 Y^2\sqrt N}{ N^{2\tau}}\big((b_c-A)\overline \eta-(A-\sqrt c)\nu\big)-\frac{Y^4 N}{N^{4\tau}}\Big(\frac{a^2}{2}  -(b_c-A)(A-\sqrt c) \Big)\bigg)\\
&\quad \quad\times \exp\Big(\frac{\ii N (2cb_c^2)^{\frac{1}{4}}\sqrt 2aY^3}{3N^{3\tau}} -\frac{\ii N}{5}\frac{(2cb_c^2)^{\frac54}Y^5}{N^{5\tau}}\Big(\frac{1}{b_c^3}-\frac{1}{c^{3/2}}\Big)+\frac{\ii 2s(2cb_c^2)^{\frac{1}{4}}Y}{\gamma_cN^{\tau-1/3}}\Big){\cal O}\Big(\frac{N^{3\tau}}{N}\Big)\\
&\quad+\exp\bigg(-\Big(\frac{ \nu+\overline\eta}{\sqrt2}+\frac{(b_c-A) Y^2\sqrt N}{N^{2\tau}}\Big)^2\bigg){\cal O}\Big(\frac{N^{2\epsilon}}{N^\tau},\frac{\sqrt N N^\epsilon}{N^{3\tau}},\frac{N}{N^{6\tau}},\frac{N^{3\tau}}{N}\Big)\\
&\quad+\exp\bigg(-\frac{(\nu+\overline\eta)^2}{2}-\frac{\sqrt 2 Y^2\sqrt N}{ N^{2\tau}}\big((\sqrt c-A)\overline \eta-(A-b_c)\nu\big)-\frac{Y^4 N}{N^{4\tau}}\Big(\frac{a^2}{2}  -(b_c-A)(A-\sqrt c) \Big)\bigg)\\
&\quad\quad \times \exp\Big(-\frac{\ii N (2cb_c^2)^{\frac{1}{4}}\sqrt 2aY^3}{3N^{3\tau}} +\frac{\ii N}{5}\frac{(2cb_c^2)^{\frac54}Y^5}{N^{5\tau}}\Big(\frac{1}{b_c^3}-\frac{1}{c^{3/2}}\Big)-\frac{\ii 2s(2cb_c^2)^{\frac{1}{4}}Y}{\gamma_cN^{\tau-1/3}}\Big){\cal O}\Big(\frac{N^{3\tau}}{N}\Big) \\
&\quad+\exp\bigg(-\Big(\frac{ \nu+\overline\eta}{\sqrt2}+\frac{(\sqrt c-A) Y^2\sqrt N}{N^{2\tau}}\Big)^2\bigg){\cal O}\Big(\frac{N^{2\epsilon}}{N^\tau},\frac{\sqrt N N^\epsilon}{N^{3\tau}},\frac{N^{\frac13}}{N^{2\tau}},\frac{N^{3\tau}}{N},\frac{N^{\epsilon}}{N^{1/6}}\Big)
\bigg].\end{aligned}   
\end{equation}

Let us define $F_n$ such that 
\begin{equation}\label{eq Fdef1} 
\begin{aligned}
\frac{1}{n}{\cal K}_{n}(z, \zeta)
&=\frac{1}{2 \pi t_c}\bigg[{\rm erfc}\bigg(\frac{(\sqrt c -A)Y^2\sqrt N}{N^{2\tau}}+\frac{\nu+\overline{\eta}}{\sqrt 2}\bigg)-{\rm erfc}\bigg(\frac{(b_c-A) Y^2\sqrt N}{N^{2\tau}}+\frac{\nu+\overline{\eta}}{\sqrt 2}\bigg)+F_n(z,\zeta)\bigg].
\end{aligned}
\end{equation}

Taking the derivative with respect to $\overline\eta$ on both sides, we have
\begin{equation} \label{eq diffFdef1} 
\begin{aligned}
\frac{1}{n}\overline{\partial}_{\eta} {\cal K}_{n}(z, \zeta)
&=\frac{1}{\sqrt 2 t_c\pi^{3/2} }\bigg[\exp \bigg(-\bigg(\frac{(b_c-A)Y^2\sqrt N}{N^{2\tau}}+\frac{\nu+\overline{\eta}}{\sqrt 2}\bigg)^2\bigg)-\exp \bigg(-\bigg(\frac{(\sqrt c-A) Y^2\sqrt N}{N^{2\tau}}+\frac{\nu+\overline{\eta}}{\sqrt 2}\bigg)^2\bigg)\\
&\quad +\overline\partial_{\eta} F_n\bigg].
\end{aligned}
\end{equation}
Moreover, using \eqref{eq diff preker 2} and the following relation,
\begin{equation} 
\begin{aligned}
\frac{1}{n}\overline{\partial}_{\eta} {\cal K}_{n}(z, \zeta)&=\frac{1}{\sqrt N}\frac{1}{n}\overline{\partial}_{\zeta} {\cal K}_{n}(z, \zeta),
\end{aligned}
\end{equation}
we have 
\begin{equation}
\begin{aligned}\label{PartialEtaFn}
\overline\partial_{\eta} F_n
&=\exp\bigg(-\frac{(\nu+\overline\eta)^2}{2}-\frac{\sqrt 2 Y^2\sqrt N}{ N^{2\tau}}\big((b_c-A)\overline \eta-(A-\sqrt c)\nu\big)-\frac{Y^4 N}{N^{4\tau}}\Big(\frac{a^2}{2}  -(b_c-A)(A-\sqrt c) \Big)\bigg)\\
&\quad\times \exp\Big(\frac{\ii N (2cb_c^2)^{\frac{1}{4}}\sqrt 2aY^3}{3N^{3\tau}} -\frac{\ii N}{5}\frac{(2cb_c^2)^{\frac54}Y^5}{N^{5\tau}}\Big(\frac{1}{b_c^3}-\frac{1}{c^{3/2}}\Big)+\frac{\ii 2s(2cb_c^2)^{\frac{1}{4}}Y}{\gamma_cN^{\tau-1/3}}\Big){\cal O}\Big(\frac{N^{3\tau}}{N}\Big)\\
&+\exp\bigg(-\Big(\frac{ \nu+\overline\eta}{\sqrt2}+\frac{(b_c-A) Y^2\sqrt N}{N^{2\tau}}\Big)^2\bigg){\cal O}\Big(\frac{N^{2\epsilon}}{N^\tau},\frac{\sqrt N N^\epsilon}{N^{3\tau}},\frac{N}{N^{6\tau}},\frac{N^{3\tau}}{N}\Big)\\
&+\exp\bigg(-\frac{(\nu+\overline\eta)^2}{2}-\frac{\sqrt 2 Y^2\sqrt N}{ N^{2\tau}}\big((\sqrt c-A)\overline \eta-(A-b_c)\nu\big)-\frac{Y^4 N}{N^{4\tau}}\Big(\frac{a^2}{2}  -(b_c-A)(A-\sqrt c) \Big)\bigg)\\
&\quad\times \exp\Big(-\frac{\ii N (2cb_c^2)^{\frac{1}{4}}\sqrt 2aY^3}{3N^{3\tau}} +\frac{\ii N}{5}\frac{(2cb_c^2)^{\frac54}Y^5}{N^{5\tau}}\Big(\frac{1}{b_c^3}-\frac{1}{c^{3/2}}\Big)-\frac{\ii 2s(2cb_c^2)^{\frac{1}{4}}Y}{\gamma_cN^{\tau-1/3}}\Big){\cal O}\Big(\frac{N^{3\tau}}{N}\Big) \\
&+\exp\bigg(-\Big(\frac{ \nu+\overline\eta}{\sqrt2}+\frac{(\sqrt c-A) Y^2\sqrt N}{N^{2\tau}}\Big)^2\bigg){\cal O}\Big(\frac{N^{2\epsilon}}{N^\tau},\frac{\sqrt N N^\epsilon}{N^{3\tau}},\frac{N^{1/3}}{N^{2\tau}},\frac{N^{3\tau}}{N}\Big) 
.\end{aligned}   
\end{equation}

We define $\zeta_0=b_c+\frac{\ii(2cb_c^2)^{\frac{1}{4}}Y}{N^\tau}-\frac{AY^2}{\sqrt2 N^{2\tau}}+\frac{\eta_0+i \im(\eta)}{\sqrt N}$ for some positive $\eta_0$, such that $\im \zeta=\im \zeta_0$ and $\zeta_0$ sits outside the droplet.  Note that the droplet boundary can be at most ${\cal O}(N^{-2\tau}, N^{-2/3})$ from $b_c$ in the horizontal direction.  The first error comes from the second order approximation of the droplet boundary and the second error comes from moving droplet boundary by ${\cal O}(t-t_c)$.  For $\zeta_0$ to be outside the droplet, $|\eta_0/\sqrt N|$ needs to scale greater than ${\cal O}(N^{-2\tau})$ for $\tau\leq 1/4$.  This can be satisfied if $\eta_0 \sim N^\delta$ with
\begin{equation}
    \delta\geq 1/2 -2\tau.
\end{equation}
Such $\delta$ exists within the range of our consideration because $1/2 -2\tau$, the right hand side, is smaller than $1/2-\tau$, the upper bound of $\epsilon$ in Lemma \ref{Lemma 5.2}. 
As a result $F_n(z,\zeta_0)$ is exponentially suppressed for large $N$ by \eqref{eq Fdef1} and Theorem \ref{Thm 41}.  

Moreover, since $\frac{a^2}{2}  -(b_c-A)(A-\sqrt c)\geq a^2/4$, we obtain that
\begin{equation}\label{eq mix1}
\begin{aligned}
-\frac{(\nu+\overline\eta)^2}{2}-\frac{\sqrt 2 Y^2\sqrt N}{ N^{2\tau}}\big((b_c-A)\overline \eta-(A-\sqrt c)\nu\big)-\frac{Y^4 N}{N^{4\tau}}\Big(\frac{a^2}{2}  -(b_c-A)(A-\sqrt c) \Big)
< -\frac{a^2}{8} \frac{Y^4 N}{N^{4\tau}}
\end{aligned} 
\end{equation}
and
\begin{equation}\label{eq mix2}
\begin{aligned}
-\frac{(\nu+\overline\eta)^2}{2}-\frac{\sqrt 2 Y^2\sqrt N}{ N^{2\tau}}\big((\sqrt c-A)\overline \eta-(A-b_c)\nu\big)-\frac{Y^4 N}{N^{4\tau}}\Big(\frac{a^2}{2}  -(b_c-A)(A-\sqrt c) \Big)< -\frac{a^2}{8} \frac{Y^4 N}{N^{4\tau}}
\end{aligned}    
\end{equation} for a sufficiently large $N$ with $\tau<1/4$. This guarantees that the real exponents in \eqref{PartialEtaFn} are negative.

Assuming $\eta={\cal O}(N^\delta)$, we have
$$\begin{aligned}
|F_n(z,\zeta)| &= \Big|\int_{\eta_{0}}^{\overline\eta}\overline\partial_{\eta} F + F_n(z,\zeta_0)\Big|\le  {\cal O}\Big(\frac{N^{2\epsilon}}{N^\tau},\frac{\sqrt N N^\epsilon}{N^{3\tau}},\frac{N^{1/3}}{N^{2\tau}},\frac{N^{3\tau}}{N}\Big) + |F_n(z,\zeta_0)|= {\cal O}\Big(\frac{N^{2\epsilon}}{N^\tau},\frac{\sqrt N N^\epsilon}{N^{3\tau}},\frac{N^{1/3}}{N^{2\tau}},\frac{N^{3\tau}}{N}\Big).    
\end{aligned}$$ 
Here the error bound is uniform for $0<\epsilon<\min\{\tau/2,3\tau-1/2\}$.

Similarly, we exchange the roles of $\eta$ and $\nu$. Taking $\nu_0>0$ such that $z_0=b_c+\frac{\ii(2cb_c^2)^{\frac{1}{4}}Y}{N^\tau}-\frac{AY^2}{\sqrt2 N^{2\tau}} +\frac{\nu_0}{\sqrt N}$  remains outside the droplet, we have
$$\begin{aligned}
|F_n(z,\zeta)| &= {\cal O}\Big(\frac{N^{2\epsilon}}{N^\tau},\frac{\sqrt N N^\epsilon}{N^{3\tau}},\frac{N^{1/3}}{N^{2\tau}},\frac{N^{3\tau}}{N}\Big)+ |F_n(z_0,\zeta)|= {\cal O}\Big(\frac{N^{2\epsilon}}{N^\tau},\frac{\sqrt N N^\epsilon}{N^{3\tau}},\frac{N^{1/3}}{N^{2\tau}},\frac{N^{3\tau}}{N}\Big).    
\end{aligned}$$
It follows that
\begin{equation}\label{prekernel inside}
\begin{aligned}
\frac{1}{n}{\cal K}_{n}(z, \zeta)
&=\frac{1}{2 \pi t_c}\bigg[{\rm erfc}\bigg(\frac{(\sqrt c -A)Y^2\sqrt N}{N^{2\tau}}+\frac{\nu+\overline{\eta}}{\sqrt 2}\bigg)-{\rm erfc}\bigg(\frac{(b_c-A) Y^2\sqrt N}{N^{2\tau}}+\frac{\nu+\overline{\eta}}{\sqrt 2}\bigg)\bigg]\\
&\quad +{\cal O}\Big(\frac{N^{2\epsilon}}{N^\tau},\frac{\sqrt N N^\epsilon}{N^{3\tau}},\frac{N^{1/3}}{N^{2\tau}},\frac{N^{3\tau}}{N}\Big).
\end{aligned}
\end{equation}

Moreover, by using \eqref{eq unc2} we obtain
\begin{equation}\label{eq relation kk}
 \begin{aligned} \frac{\ee^{Nz\overline\zeta}}{\ee^{\frac{N}{2}|z|^2+\frac{N}{2}|\zeta|^2}}\frac{|z-a|^{Nc}|\zeta-a|^{Nc}}{(z-a)^{Nc}(\overline\zeta-a)^{Nc}}
&=\frac{\exp(Nu\overline v)}{\exp\big(\frac{N}{2}|u|^2+\frac{N}{2}|v|^2\big)}\frac{\exp\big(\frac{Nb_c}{2}(u- \overline u)\big)}{\exp\big(\frac{ Nb_c}{2}(v-\overline v)\big)}\frac{(\sqrt c+\overline u)^{\frac{Nc}{2}}}{(\sqrt c+ u)^{\frac{Nc}{2}}}\frac{(\sqrt c+ v)^{\frac{Nc}{2}}}{(\sqrt c+  \overline v)^{\frac{Nc}{2}}}\\
&=\frac{G(\nu,\eta)}{C_{N,\tau}(X,Y,\nu,\eta)},
\end{aligned}    
\end{equation}
where $G(\nu,\eta):=\ee^{\nu\overline\eta-\frac{|\nu|^2}{2}-\frac{|\eta|^2}{2}}$ is the Ginibre kernel, and 
$$\begin{aligned}
C_{N,\tau}(X,Y,\nu,\eta)&:=\ee^{\frac{ (\overline\nu-\nu+\eta-\overline\eta)( X+\sqrt 2\sqrt N b_c)}{2\sqrt 2}}\ee^{\frac{\nu+\overline\nu-\eta-\overline\eta}{2}\frac{\ii(2cb_c^2)^{\frac{1}{4}} Y\sqrt N}{N^{2\tau}} }\frac{(\sqrt c+  u(\nu))^{\frac{Nc}{2}}}{(\sqrt c+\overline {u(\nu)})^{\frac{Nc}{2}}}\frac{(\sqrt c+  \overline {v(\eta)})^{\frac{Nc}{2}}}{(\sqrt c+ v(\eta))^{\frac{Nc}{2}}}.\end{aligned}$$

Using the relation between the correlation kernel ${\bf K}_n$ and ${\cal K}_{n}$ in \eqref{relation of kernels} and \eqref{eq relation kk}, we get 
\begin{equation}\label{eq thm3 limit1}
\begin{aligned}
\frac{1}{n}{\bf K}_{n}(z, \zeta)
&=\frac{\ee^{Nz\overline\zeta}}{\ee^{\frac{N}{2}|z|^2+\frac{N}{2}|\zeta|^2}}\frac{|z-a|^{Nc}|\zeta-a|^{Nc}}{(z-a)^{Nc}(\overline\zeta-a)^{Nc}} \frac{1}{n}{\cal K}_{n}(z,\zeta)\\
&=\frac{G(\nu,\eta)}{C_{N,\tau}(-AY^2\sqrt N/N^{2\tau},Y,\nu,\eta)}\frac{1}{2 \pi t_c}\bigg[{\rm erfc}\bigg(\frac{(\sqrt c -A)Y^2\sqrt N}{N^{2\tau}}+\frac{\nu+\overline{\eta}}{\sqrt 2}\bigg)\\
&\quad  -{\rm erfc}\bigg(\frac{(b_c -A)Y^2\sqrt N}{N^{2\tau}}+\frac{\nu+\overline{\eta}}{\sqrt 2}\bigg)\bigg]+{\cal O}\Big(\frac{N^{2\epsilon}}{N^\tau},\frac{\sqrt N N^\epsilon}{N^{3\tau}},\frac{N^{1/3}}{N^{2\tau}},\frac{N^{2\tau}}{N^{2/3}}\Big).
\end{aligned}
\end{equation} Here the error is uniform for any $1/6<\tau\leq 1/4$ and for any  $0<\epsilon<\min\{\tau/2,3\tau-1/2\}$. Taking $A=-X/Y^2$ and $n,N\to\infty$, we have \eqref{eq main thm3}.

In particular, when $\tau=1/4$ and $A=-X/Y^2$, we have
\begin{equation}\label{eq middle step}
\begin{aligned}
\frac{1}{n}{\bf K}_{n}(z, \zeta)
&=\frac{G(\nu,\eta)}{C_{N,1/4}(X,Y,\nu,\eta)}\frac{1}{2 \pi t_c}\bigg[{\rm erfc}\Big(X+\sqrt c Y^2+\frac{\nu+\overline{\eta}}{\sqrt 2}\Big)-{\rm erfc}\Big(X+b_c Y^2+\frac{\nu+\overline{\eta}}{\sqrt 2}\Big)\bigg] +{\cal O}\Big(\frac{N^{2\epsilon}}{N^{1/4}},\frac{1}{N^{1/6}}\Big).
\end{aligned}
\end{equation}
 Here the error is uniform over bounded regions $\{(X,Y)\in \RR^2\}$ and $\{(N^{-\epsilon}\nu, N^{-\epsilon}\eta)\in \CC^2\}$ that is finitely away from $Y=0$ with $0<\epsilon< 1/8$.

For the case of $0<\tau\leq \frac{1}{6}$, one can see from \eqref{eq thm3 limit1} that ${\cal O}\big(N^{1/3-2\tau}\big)$-term also contributes. It follows that the leading term and the error bar in \eqref{eq thm3 limit1} need to be further modified with additional corrections. We will not pursue such case in the current paper. See Remark \ref{Remark 1}.


\subsubsection{Case: \texorpdfstring{$1/4<\tau<3/10$}{}}
If $1/4<\tau<3/10$, we denote $\nu$ and $\eta$ in \eqref{def further uv} as
\begin{equation}\label{def munu}
    \begin{aligned}
  \nu:&=x+\frac{\ii \sqrt 2\pi y}{aY^2}\frac{N^{2\tau}}{\sqrt N}-\frac{b_c+\sqrt c}{2\sqrt 2}\frac{Y^2\sqrt N}{N^{2\tau}}+\frac{s}{\sqrt 2 \gamma_c N^{1/6}},\\
\eta:&=x'+\frac{\ii \sqrt 2\pi y'}{aY^2}\frac{N^{2\tau}}{\sqrt N}-\frac{b_c+\sqrt c}{2\sqrt 2}\frac{Y^2\sqrt N}{N^{2\tau}}+\frac{s}{\sqrt 2 \gamma_c N^{1/6}}.
    \end{aligned}
\end{equation}

Since $\nu$ and $\mu$ are ${\cal O}(N^\epsilon)$ in Lemma \ref{Lemma 5.2} the above definition implies that $\epsilon=2\tau-1/2$ and we are allowed to have $x\sim {\cal O}(N^{2\tau-1/2})$ and $x'\sim {\cal O}(N^{2\tau-1/2})$ to be able to use the lemma.

In the discussion below, however, we only need $x$ and $x'$ to be ${\cal O}(N^\delta)$ with an arbitrarily small $\delta>0$.  In fact we will assume $\delta=0$ until we need to turn on nonzero $\delta$. We  assume that $(Y,y,y')\in \RR^3$ in a bounded region and away from $Y=0$.  

Using \eqref{eq uv sym} in Lemma \ref{Lemma 5.2} with $k=\sqrt c$ and $b_c$, respectively,  we have
\begin{equation}\label{eq first}
\begin{aligned}
&\quad \exp\Big(-Nu\overline v-\frac{N}{2}(u^2+\overline v^2)\Big)\exp\bigg(N\Big(\frac{u^3+\overline v^3}{3\sqrt c}-\frac{u^4+\overline v^4}{4c}+\frac{u^5+\overline v^5}{5c^{3/2}}\Big)\bigg)\\
&=\exp\bigg(-\Big(\frac{\nu+\overline \eta}{\sqrt 2}+\frac{b_c Y^2\sqrt N}{ N^{2\tau}}\Big)^2\bigg)\Big(1+{\cal O}\Big(\frac{N^{2\epsilon}}{N^\tau},\frac{\sqrt N N^\epsilon}{N^{3\tau}}\Big)\Big)\\
&=\ee^{-\big(\frac{x+x'}{\sqrt 2}+\frac{\ii \pi (y-y')N^{2\tau}}{aY^2\sqrt N}\big)^2}\ee^{-\ii\pi (y-y')\big(1+\frac{ \sqrt 2  sN^{2\tau}}{aY^2\gamma_c N^{2/3}}\big)}\Big(1-\frac{ aY^2\sqrt N(x+x')}{\sqrt 2N^{2\tau}}+{\cal O}\Big(\frac{N^{3\tau}}{N},\frac{1}{N^{\tau}},\frac{N}{N^{4\tau}}\Big)\Big)\\
&=\ee^{-\big(\frac{x+x'}{\sqrt 2}+\frac{\ii \pi (y-y')N^{2\tau}}{aY^2\sqrt N}\big)^2}\ee^{-\ii\pi (y-y')\big(1+\frac{ \sqrt 2  sN^{2\tau}}{aY^2\gamma_c N^{2/3}}\big)}\Big(1-\frac{ aY^2\sqrt N(x+x')}{\sqrt 2N^{2\tau}}+{\cal O}\Big(\frac{N^{3\tau}}{N},\frac{N}{N^{4\tau}}\Big)\Big)
\end{aligned}    
\end{equation} 
and 
\begin{equation}\label{eq second}
\begin{aligned}
&\quad \exp\Big(-Nu\overline v-\frac{N}{2}(u^2+\overline v^2)\Big)\exp\bigg(N\Big(\frac{u^3+\overline v^3}{3b_c}-\frac{u^4+\overline v^4}{4b_c^2}+\frac{u^5+\overline v^5}{5b_c^{3}}\Big)+\frac{2sN^{1/3}(u+\overline v)}{\gamma_c}\bigg)\\
&=\exp\bigg(-\Big(\frac{\nu+\overline \eta}{\sqrt 2}+\frac{\sqrt c Y^2\sqrt N}{N^{2\tau}}\Big)^2+\frac{2s(\nu+\overline\eta)}{\gamma_c N^{1/6}}\bigg)\Big(1+{\cal O}\Big(\frac{N^{2\epsilon}}{N^\tau},\frac{\sqrt N N^\epsilon}{N^{3\tau}}\Big)\Big)\\
&=\ee^{-\big(\frac{x+x'}{\sqrt 2}+\frac{\ii \pi (y-y')N^{2\tau}}{aY^2\sqrt N}\big)^2}\ee^{\ii\pi (y-y')\big(1+\frac{ \sqrt 2  sN^{2\tau}}{aY^2\gamma_c N^{2/3}}\big)}\Big(1+\frac{ aY^2\sqrt N(x+x')}{\sqrt 2N^{2\tau}}+{\cal O}\Big(\frac{N^{3\tau}}{N},\frac{N}{N^{4\tau}}\Big)\Big).
\end{aligned}    
\end{equation}

Here the error ${\cal O}(N^{3\tau-1},N^{-\tau})$ comes from the error  ${\cal O}(N^{2\epsilon-\tau}, N^{\epsilon+1/2-3\tau})$ with $\epsilon=2\tau-1/2$, the error ${\cal O}(N^{1-4\tau})$ comes from the term containing $Y^4$. The final error bound is obtain by the fact that $1/4<\tau<3/10$.

Similarly, using \eqref{eq uv nonsym} in Lemma \ref{Lemma 5.2} we have
\begin{equation}\label{eq third}
    \begin{aligned}
        &\quad \exp\Big(-Nu\overline v-\frac{N}{2}(u^2+\overline v^2)\Big)\exp\bigg(N\Big(\frac{1}{3}\Big(\frac{u^3}{b_c}+\frac{\overline v^3}{\sqrt c}\Big)-\frac{1}{4}\Big(\frac{u^4}{b_c^2}+\frac{\overline v^4}{c}\Big)+\frac{1}{5}\Big(\frac{u^5}{b_c^3}+\frac{\overline v^5}{c^{3/2}}\Big)\Big)+\frac{2sN^{1/3}u}{\gamma_c}\bigg)\\
&=\ee^{\frac{\ii Y(2cb_c^2)^{1/4}}{N^{\tau-1/3}}\big(\frac{\sqrt2 a Y^2}{3N^{2\tau-2/3}} + \frac{2s}{\gamma_c}\big)}\ee^{-\big(\frac{x+x'}{\sqrt 2}+\frac{\ii \pi (y-y')N^{2\tau}}{aY^2\sqrt N}\big)^2}\ee^{\ii\pi (y+y')\big(1+\frac{ \sqrt 2  sN^{2\tau}}{aY^2\gamma_c N^{2/3}}\big)} \Big(1+\frac{ aY^2\sqrt N(x-x')}{\sqrt 2N^{2\tau}}+{\cal O}\Big(\frac{N}{N^{4\tau}},\frac{N^{3\tau}}{N}\Big)\Big)
    \end{aligned}
\end{equation} 
and using  \eqref{eq uv nonsym2} in Lemma \ref{Lemma 5.2}, we have
\begin{equation}\label{eq forth}
    \begin{aligned}
        &\quad \exp\Big(-Nu\overline v-\frac{N}{2}(u^2+\overline v^2)\Big)\exp\bigg(N\Big(\frac{1}{3}\Big(\frac{\overline v^3}{b_c}+\frac{u^3}{\sqrt c}\Big)-\frac{1}{4}\Big(\frac{\overline v^4}{b_c^2}+\frac{u^4}{c}\Big)+\frac{1}{5}\Big(\frac{\overline v^5}{b_c^3}+\frac{u^5}{c^{3/2}}\Big)\Big)+\frac{2sN^{1/3}\overline v}{\gamma_c}\bigg)\\
&=\ee^{-\frac{\ii Y(2cb_c^2)^{1/4}}{N^{\tau-1/3}}\big(\frac{\sqrt2 a Y^2}{3N^{2\tau-2/3}} + \frac{2s}{\gamma_c}\big)}\ee^{-\big(\frac{x+x'}{\sqrt 2}+\frac{\ii \pi (y-y')N^{2\tau}}{aY^2\sqrt N}\big)^2}\ee^{-\ii\pi (y+y')\big(1+\frac{ \sqrt 2  sN^{2\tau}}{aY^2\gamma_c N^{2/3}}\big)}\Big(1-\frac{ aY^2\sqrt N(x-x')}{\sqrt 2N^{2\tau}}+{\cal O}\Big(\frac{N}{N^{4\tau}},\frac{N^{3\tau}}{N}\Big)\Big).
    \end{aligned}
\end{equation}

Using Lemma \ref{preker1}, the facts in \eqref{facts uv} and above equations, we have
\begin{equation}
\begin{aligned}
\overline{\partial}_{\zeta} {\cal K}_{n}(z, \zeta)
&=\frac{ N^{3/2}\ee^{-\big(\frac{x+x'}{\sqrt 2}+\frac{\ii \pi (y-y')N^{2\tau}}{aY^2\sqrt N}\big)^2}}{\sqrt 2 \pi^{3/2}  }\bigg[\ee^{-\ii\pi (y-y')\big(1+\frac{ \sqrt 2  sN^{2\tau}}{aY^2\gamma_c N^{2/3}}\big)}\Big(1-\frac{ aY^2\sqrt N(x+x')}{\sqrt 2N^{2\tau}}+{\cal O}\Big(\frac{N^{3\tau}}{N},\frac{N}{N^{4\tau}}\Big)\Big)\\
&\quad -\ee^{\ii\pi (y-y')\big(1+\frac{ \sqrt 2  sN^{2\tau}}{aY^2\gamma_c N^{2/3}}\big)}\Big(1+\frac{ aY^2\sqrt N(x+x')}{\sqrt 2N^{2\tau}}+{\cal O}\Big(\frac{N^{3\tau}}{N},\frac{N}{N^{4\tau}}\Big)\Big)+{\cal O}\Big(\frac{N^{3\tau}}{N}\Big)
\bigg] .\end{aligned}   
\end{equation}

Moreover, using the following relation, we have
\begin{equation} 
\begin{aligned}
\frac{1}{n}\partial_{x'} {\cal K}_{n}(z, \zeta)&=\frac{1}{n}\overline{\partial}_{\eta} {\cal K}_{n}(z, \zeta)=\frac{1}{n}\frac{1}{\sqrt N}\overline{\partial}_{\zeta} {\cal K}_{n}(z, \zeta)\\
&=\frac{\ee^{-\big(\frac{x+x'}{\sqrt 2}+\frac{\ii \pi (y-y')N^{2\tau}}{aY^2\sqrt N}\big)^2}}{ \sqrt{2} t_c \pi^{3/2}  }\bigg[\ee^{-\ii\pi (y-y')\big(1+\frac{ \sqrt 2  sN^{2\tau}}{aY^2\gamma_c N^{2/3}}\big)}\Big(1-\frac{ aY^2\sqrt N(x+x')}{\sqrt 2N^{2\tau}}+{\cal O}\Big(\frac{N^{3\tau}}{N},\frac{N}{N^{4\tau}}\Big)\Big)\\
&\quad -\ee^{\ii\pi (y-y')\big(1+\frac{ \sqrt 2  sN^{2\tau}}{aY^2\gamma_c N^{2/3}}\big)}\Big(1+\frac{ aY^2\sqrt N(x+x')}{\sqrt 2N^{2\tau}}+{\cal O}\Big(\frac{N^{3\tau}}{N},\frac{N}{N^{4\tau}}\Big)\Big) +{\cal O}\Big(\frac{N^{3\tau}}{N}\Big)
\bigg].
\end{aligned}
\end{equation}

Let us define $F_n$ such that 
$$\frac{1}{n}{\cal K}_{n}(z, \zeta)=\frac{\ee^{-\big(\frac{x+x'}{\sqrt 2}+\frac{\ii \pi (y-y')N^{2\tau}}{aY^2\sqrt N}\big)^2}}{2\pi t_c}\frac{\ii aY^2\sqrt N}{\pi^{3/2}N^{2\tau}}\bigg(\frac{\ee^{-\ii  \pi(y-y')\big(1+\frac{ \sqrt 2  sN^{2\tau}}{aY^2\gamma_c N^{2/3}}\big)}-\ee^{\ii  \pi(y-y')\big(1+\frac{ \sqrt 2  sN^{2\tau}}{aY^2\gamma_c N^{2/3}}\big)}}{y-y'}+F_n(z,\zeta)\bigg).$$

Taking the derivative with respect to $x'$ on both sides, we have
$$\begin{aligned}
\frac{1}{n}\partial_{x'}{\cal K}_{n}(z, \zeta)
&=\frac{\ee^{-\big(\frac{x+x'}{\sqrt 2}+\frac{\ii \pi (y-y')N^{2\tau}}{aY^2\sqrt N}\big)^2}}{\sqrt 2 t_c \pi^{3/2} } \bigg(\ee^{-\ii  \pi(y-y')\big(1+\frac{ \sqrt 2  sN^{2\tau}}{aY^2\gamma_c N^{2/3}}\big)}-\ee^{\ii  \pi(y-y')\big(1+\frac{ \sqrt 2  sN^{2\tau}}{aY^2\gamma_c N^{2/3}}\big)}\\
&\quad -\frac{1}{\sqrt 2\pi }\frac{\ii aY^2\sqrt N\big(\ee^{-\ii  \pi(y-y')\big(1+\frac{ \sqrt 2  sN^{2\tau}}{aY^2\gamma_c N^{2/3}}\big)}-\ee^{\ii  \pi(y-y')\big(1+\frac{ \sqrt 2  sN^{2\tau}}{aY^2\gamma_c N^{2/3}}\big)}\big)(x+x')}{(y-y')N^{2\tau}}\bigg)\\
&\quad + \partial_{x'}\Big(\ee^{-\big(\frac{x+x'}{\sqrt 2}+\frac{\ii \pi (y-y')N^{2\tau}}{aY^2\sqrt N}\big)^2}\frac{1}{2\pi t_c}\frac{\ii aY^2\sqrt N}{\pi^{3/2}N^{2\tau}} F_n(z,\zeta)\Big)
\end{aligned}$$

It follows that 
\begin{equation}\label{F derivative}
\begin{aligned}
&\quad \partial_{x'}\Big(\ee^{-\big(\frac{x+x'}{\sqrt 2}+\frac{\ii \pi (y-y')N^{2\tau}}{aY^2\sqrt N}\big)^2}\frac{1}{2\pi t_c}\frac{\ii aY^2\sqrt N}{\pi^{3/2}N^{2\tau}} F_n(z,\zeta)\Big)\\
&=\frac{\ee^{-\big(\frac{x+x'}{\sqrt 2}+\frac{\ii \pi (y-y')N^{2\tau}}{aY^2\sqrt N}\big)^2}}{\sqrt 2 t_c \pi^{3/2} } \bigg[\frac{aY^2\sqrt N (x+x')}{\sqrt 2 N^{2\tau}}\bigg(\frac{\ii \big(\ee^{-\ii  \pi(y-y')\big(1+\frac{ \sqrt 2  sN^{2\tau}}{aY^2\gamma_c N^{2/3}}\big)}-\ee^{\ii  \pi(y-y')\big(1+\frac{ \sqrt 2  sN^{2\tau}}{aY^2\gamma_c N^{2/3}}\big)}\big)}{\pi(y-y')}\\
&\quad -\ee^{-\ii\pi (y-y')\big(1+\frac{ \sqrt 2  sN^{2\tau}}{aY^2\gamma_c N^{2/3}}\big)}-\ee^{\ii\pi (y-y')\big(1+\frac{ \sqrt 2  sN^{2\tau}}{aY^2\gamma_c N^{2/3}}\big)}\bigg)+{\cal O}\Big(\frac{N^{3\tau}}{N},\frac{N}{N^{4\tau}}\Big)\bigg]\\
&=\frac{\ee^{-\big(\frac{x+x'}{\sqrt 2}+\frac{\ii \pi (y-y')N^{2\tau}}{aY^2\sqrt N}\big)^2}}{ \sqrt{2} t_c \pi^{3/2}  }{\cal O}\Big(\frac{\sqrt N(y-y')}{N^{2\tau}},\frac{N^{3\tau}}{N}\Big).
\end{aligned}    
\end{equation}

Setting $\widehat{X}:=\frac{\sqrt 2 \pi (y-y')N^{2\tau}}{aY^2 \sqrt N}$, then \eqref{F derivative} reads 
\begin{equation} \label{derivative of Fn}
\frac{\sqrt{N}}{N^{2 \tau}}\partial_{x'} \Big(\ee^{-\frac{1}{2}(x+x'+\ii \widehat{X})^2}F_n\Big) = \ee^{-\frac{1}{2}(x+x'+\ii \widehat{X})^2}\bigg[ \frac{\sqrt{N}}{N^{2\tau}}(x+x')(y-y')\hat g(y-y')+{\cal O}\Big(\frac{N^{3\tau}}{N}, \frac{N}{N^{4\tau}}\Big) \bigg],
\end{equation}
where $\hat g: \CC \to \CC$ is a bounded analytic function. So far we have assumed $x$ and $x'$ in \eqref{F derivative} are bounded. Now we will let them be ${\cal O}( N^\delta)$ for an arbitrarily small  nonzero $\delta>0$, to be able to integrate both sides of the equation \eqref{F derivative} in the $x'$ variable from $x'$ to $N^\delta$.

We estimate the integrals over the first and second summand in \eqref{derivative of Fn} separately. For the second summand we use its size and the Gaussian decay in $x'$ to see that 
\[
\int_{x'}^{N^\delta} \ee^{-\frac{1}{2}(x+t')^2 - \ii (x+t')\widehat{X}}
{\cal O}\Big(\frac{N^{3\tau}}{N}, \frac{N}{N^{4\tau}}\Big)
\dd t' = {\cal O}\Big(\frac{N^{3\tau}}{N}, \frac{N}{N^{4\tau}}\Big)\,.
\]
The second summand does not contribute when $y=y'$. In case $y \ne y'$
 we use a standard stationary phase approximation and that $\widehat{X}\gg 1$ to see that 
 \[
\int_{x'}^{N^\delta} \ee^{-\frac{1}{2}(x+t')^2 - \ii (x+t')\widehat{X}}(x+t') \dd t' = {\cal O}\Big(\frac{N^\delta}{\widehat{X}}\Big)\,.
 \]
 Together we have
$$F_n ={\cal O}\bigg(\frac{|y-y'|N^{\delta}}{\widehat{X}},\frac{N^{5\tau}}{N^{3/2}}, \frac{\sqrt{N}}{N^{2\tau}}\bigg)= {\cal O}\Big(\frac{N^\delta\sqrt N}{N^{2\tau}},\frac{N^{5\tau}}{N^{3/2}}\Big)$$
where $\delta>0$ is arbitrarily small.

Consequently, we have 
$$\frac{1}{n}{\cal K}_{n}(z, \zeta)=\frac{\ee^{-\big(\frac{x+x'}{\sqrt 2}+\frac{\ii \pi (y-y')N^{2\tau}}{aY^2\sqrt N}\big)^2}}{2\pi t_c}\frac{\ii aY^2\sqrt N}{\pi^{3/2}N^{2\tau}}\bigg(\frac{\ee^{-\ii  \pi(y-y')\big(1+\frac{ \sqrt 2  sN^{2\tau}}{aY^2\gamma_c N^{2/3}}\big)}-\ee^{\ii  \pi(y-y')\big(1+\frac{ \sqrt 2  sN^{2\tau}}{aY^2\gamma_c N^{2/3}}\big)}}{y-y'}+
N^\delta
{\cal O}\Big(\frac{\sqrt N}{N^{2\tau}},\frac{N^{5\tau}}{N^{3/2}}\Big)\bigg).$$

Using the relation between the correlation kernel ${\bf K}_n$ and ${\cal K}_{n}$ in \eqref{relation of kernels} and \eqref{eq relation kk},
we get 
\begin{equation}\label{eq thm3 limit2}
\begin{aligned}
\frac{1}{n}{\bf K}_{n}(z, \zeta)
&=\frac{\ee^{Nz\overline\zeta}}{\ee^{\frac{N}{2}|z|^2+\frac{N}{2}|\zeta|^2}}\frac{|z-a|^{Nc}|\zeta-a|^{Nc}}{(z-a)^{Nc}(\overline\zeta-a)^{Nc}} \frac{1}{n}{\cal K}_{n,N}(z,\zeta)\\
&=\frac{1}{\widehat C_{N}(Y,x,y,x',y')}\ee^{-\frac{(x-x')^2}{2}+\frac{\ii\sqrt 2\pi(x+x')(y-y')N^{2\tau}}{aY^2\sqrt N}-\frac{\pi^2(y-y')^2N^{4\tau}}{a^2Y^4N}}\ee^{-\big(\frac{x+x'}{\sqrt 2}+\frac{\ii \pi (y-y')N^{2\tau}}{aY^2\sqrt N}\big)^2}\\
&\times \frac{1}{2\pi t_c}\frac{\ii aY^2\sqrt N}{\pi^{3/2}N^{2\tau}}\bigg(\frac{\ee^{-\ii  \pi(y-y')\big(1+\frac{ \sqrt 2  sN^{2\tau}}{aY^2\gamma_c N^{2/3}}\big)}-\ee^{\ii  \pi(y-y')\big(1+\frac{ \sqrt 2  sN^{2\tau}}{aY^2\gamma_c N^{2/3}}\big)}}{y-y'}+
N^\delta
{\cal O}\Big(\frac{\sqrt N}{N^{2\tau}},\frac{N^{5\tau}}{N^{3/2}}\Big)\bigg)\\
&=\frac{1}{\widehat C_{N}(Y,x,y,x',y')}\frac{a Y^2\sqrt N\ee^{-x^2-(x')^2}}{ \pi^{3/2}t_cN^{2\tau}}\bigg(\frac{\sin \big(\pi(y-y')\big(1+\frac{ \sqrt 2  sN^{2\tau}}{aY^2\gamma_c N^{2/3}}\big)\big) }{\pi(y-y')}+
N^\delta
{\cal O}\Big(\frac{\sqrt N}{N^{2\tau}},\frac{N^{5\tau}}{N^{3/2}}\Big)\bigg),
\end{aligned}
\end{equation} 
where $$\widehat C_{N}(Y,x,y,x',y'):=C_{N,\tau}\Big(-\frac{b_c+\sqrt c}{2}\frac{Y^2\sqrt N}{N^{2\tau}}+\frac{s}{\sqrt 2 \gamma_c N^{1/6}},Y,x+\frac{\ii \sqrt 2\pi y}{aY^2}\frac{N^{2\tau}}{\sqrt N},x'+\frac{\ii \sqrt 2\pi y'}{aY^2}\frac{N^{2\tau}}{\sqrt N}\Big),$$ and the last error bar comes from the fact that $1/4<\tau<3/10$. 
It follows that
\begin{equation}
 \begin{aligned}
\frac{\widehat C_{N}(Y,x,y,x',y')N^{2\tau}}{n\sqrt N}{\bf K}_{n}(z, \zeta)
&=\frac{a Y^2\ee^{-x^2-(x')^2}}{ \pi^{3/2}t_c} \bigg(\frac{\sin \big(\pi(y-y')\big(1+\frac{ \sqrt 2  sN^{2\tau}}{aY^2\gamma_c N^{2/3}}\big)\big) }{\pi(y-y')}+
N^\delta
{\cal O}\Big(\frac{\sqrt N}{N^{2\tau}},\frac{N^{5\tau}}{N^{3/2}}\Big)\bigg).     \end{aligned}
\end{equation} 
The error is uniform over a bounded region $\{Y\in \RR\}$ that is finitely away from $Y=0$ for any $1/4<\tau<3/10$ and  an arbitrarily small $\delta>0$. 

This ends the proof of Theorem \ref{main theorem3}.

For $3/10\leq\tau<1/3$, one can see that ${\cal O}(N^{5\tau+\delta-3/2})$-term in the error becomes larger as $N$ grows. To handle such term we need additional corrections. However we still believe that  the limiting kernel in \eqref{eq main thm32} holds. See Remark \ref{Remark 1}.

\appendix
\section{Relations in the Painlev\'e II Riemann-Hilbert problem}\label{appendix a}

Let $\widetilde\Psi:=\widetilde\Psi(\xi;\hat s)$. Since the Riemann-Hilbert problem \eqref{rhp phi} has constant jump conditions, we write the following lax pair,
\begin{equation}\label{differ 1}
    \frac{\dd}{\dd \xi}\widetilde\Psi=A\widetilde\Psi,\quad  \frac{\dd}{\dd \hat s}\widetilde\Psi=B\widetilde\Psi, 
\end{equation}
where
\begin{equation}
A={A}(\xi,\hat s):=\begin{bmatrix}-4\ii\xi^2-\ii(\hat s+2q^2)&4\xi q+2\ii q'\\
4\xi q-2\ii q'&4\ii\xi^2+\ii(\hat s+2q^2)\end{bmatrix}    
\end{equation} and 
\begin{equation}
{B}={B}(\xi,\hat s):=\begin{bmatrix}-\ii\xi&q\\
q&\ii\xi\end{bmatrix}.    \end{equation} 

The compatibility condition of the linear system \eqref{differ 1} gives
\begin{equation*}
    A_{\hat s}-B_{\xi}+[A,B]=0.\end{equation*} 
It follows that $q(\hat s)$ is the Hastings-McLeod solution of the Painlev\'e II equation $q''=\hat sq+2q^3$.

Let us denote
\begin{equation}\label{def pi xi}
\Pi(\xi;\hat s):=I+\frac{\Pi_1(\hat s)}{2\ii\xi}+\frac{\Pi_2(\hat s)}{\xi^2}+\frac{\Pi_3(\hat s)}{(\ii\xi)^3}+{\cal O}\left(\frac{1}{\xi^4}\right),\end{equation} where
\begin{equation}\label{def pi1 pi2 pi3}
    \Pi_1(\hat s)= \begin{pmatrix}
r(\hat s)&q(\hat s)\\-q(\hat s)&-r(\hat s)
\end{pmatrix},\quad \Pi_2(\hat s)= \begin{pmatrix}
p_{11}(\hat s)&p_{12}(\hat s)\\p_{21}(\hat s)&p_{22}(\hat s)
\end{pmatrix}, \quad \Pi_3(\hat s)= \begin{pmatrix}
 q_{11}(\hat s)&q_{12}(\hat s)\\q_{21}(\hat s)&q_{22}(\hat s)
\end{pmatrix}.
\end{equation}
Since the Painlev\'e II Riemann-Hilbert problem for $\Psi$ in \eqref{rhp phi} only has constant jumps, we have
\begin{equation}\label{def psi differ}
    \frac{\dd \Psi}{\dd \xi}=A(\xi;\hat s)\Psi, 
\end{equation} where the matrix function $A(\xi;\hat s)$ is meromorphic and can be determined by
identifying the singularities. Using $\Pi(\xi;\hat s)$ in \eqref{def pi xi} with \eqref{def pi1 pi2 pi3}, and \eqref{def psi differ}, we have
\begin{equation}\begin{aligned}
A(\xi;\hat s)&=\begin{bmatrix}-4\ii\xi^2-\ii(\hat s+2q^2)&4\xi q-2\ii q r+8\ii p_{12}\\
4\xi q+2\ii q r-8\ii p_{21}&4\ii\xi^2+\ii(\hat s+2q^2)\end{bmatrix} +\frac{A_1(\hat s)}{\xi}+\frac{A_2(\hat s)}{\xi^2}+\frac{A_3(\hat s)}{\xi^3},
\end{aligned} 
\end{equation}
where
\begin{equation}\begin{aligned}
A_1(\hat s)=\begin{bmatrix}4q(p_{12}-p_{21})&q^3-8q_{12}+4p_{12}r+q(\hat s-4p_{22}-r^2)\\
q^3+8q_{21}+4p_{21}r+q(\hat s-4p_{11}-r^2)&-4q(p_{12}-p_{21})\end{bmatrix},
\end{aligned} 
\end{equation}
and the corresponding entries of $A_2$ and $A_3$ are given by
\begin{equation}\begin{aligned}
[A_2(\hat s)]_{11}&=\frac{\ii}{2}(-16p_{21}^2+q^4+r+8p_{21}qr+q^2(\hat s-r^2)),\\
[A_2(\hat s)]_{22}&=-\frac{\ii}{2}(-16p_{21}^2+q^4+r+8p_{21}qr+q^2(\hat s-r^2)),\\
[A_2(\hat s)]_{12}&=\frac{\ii}{2q}(q^4r-16p_{21}(p_{21}r-2q_{12})+8qr(p_{21}r-q_{12})+q^2(1-8q_{22}+\hat sr-r^3)),\\
[A_2(\hat s)]_{21}&=\frac{-\ii}{2q}(q^4r-16p_{21}(p_{21}r+2q_{21})+8qr(p_{21}r+q_{21})+q^2(1+8 q_{11}+\hat sr-r^3)).
\end{aligned} 
\end{equation}
\begin{equation}\begin{aligned}
[A_3(\hat s)]_{11}&=\hat sp_{12}q-4p_{12}p_{22}q+\frac{q^2}{4}+2p_{12}q^3-2q^2\hat q_{11}+8p_{12}q_{21}-2q^2q_{22}+p_{22}q^2r\\
&-\frac{(1+4p_{12}q)r^2}{4}-p_{11}(2+q(4p_{12}-4p_{21}+qr))+p_{21}(-2q^3+8q_{12}+q(-\hat s+4p_{22}+r^2)),\\
[A_3(\hat s)]_{22}&=-(\hat s-4p_{11})(p_{12}-p_{21})q+2(p_{21}-p_{12})q^3-8p_{21}q_{12}-8p_{12}q_{21}-\frac{r^2}{4}\\
&+(p_{12}-p_{21})qr^2+q^2(\frac{1}{4}+2q_{11}+2q_{22}+p_{11}r)+p_{22}(-2-q(-4p_{12}+4p_{21}+qr)).
\end{aligned} 
\end{equation}

By the fact that $A_1(\hat s)=0$, we have
\begin{equation}\label{final 1}
\begin{aligned}
    p_{12}&=p_{21},\\
    p_{11}&=\frac{q^3+8q_{21}+4p_{21}r+q(\hat s-r^2)}{4q},\\
    p_{22}&=\frac{q^3-8q_{12}+4p_{21}r+q(\hat s-r^2)}{4q}.
\end{aligned}\end{equation}

Furthermore, by the fact that $A_2(\hat s)=0$, we have
\begin{equation}\label{final 12}
\begin{aligned}
    r&=-\hat sq^2-q^4+(q')^2,\\
    (q_{12}+q_{21})q'&=q^2( q_{11}+q_{22}).
\end{aligned}\end{equation}

Defining
\begin{equation}
   B(\xi;\hat s):= \frac{\dd \Psi}{\dd \hat s} \cdot \Psi^{-1},
\end{equation} by the similar procedure as above, we have
\begin{equation}\begin{aligned}
B(\xi;\hat s)&=\begin{bmatrix}-\ii\xi&q\\
q&\ii\xi\end{bmatrix} +\frac{B_1(\hat s)}{\xi},
\end{aligned} 
\end{equation}
where
\begin{equation}\begin{aligned}
B_1(\hat s)=\begin{bmatrix}-\frac{\ii(q^2+r')}{2}&\frac{\ii(4p_{12}-qr-q')}{2}\\
\frac{\ii(4p_{21}-qr-q')}{2}&\frac{\ii(q^2+r')}{2}\end{bmatrix}.
\end{aligned} 
\end{equation}

By the fact that $B_1(\hat s)=0$, we have
\begin{align}\label{final 2}
    p_{12}=\frac{qr+q'}{4},\quad r'=-q^2.
\end{align}

Using the fact that $A_3(\hat s)=0$ and the relation $(q_{12}+q_{21})q'=q^2( q_{11}+q_{22})$ from \eqref{final 12}, we have
\begin{align}\label{final 3}
    q_{12}=-q_{21}=\frac{1}{16}(2\hat sq+q^3+\hat s^2q^5+2\hat sq^7+q^9-2q^2(\hat s+q^2)q'-2q^3(\hat s+q^2)(q')^2+2(q')^3+q(q')^4).
\end{align}

Therefore,
\begin{align}
p_{12}(\hat s)=p_{21}(\hat s)&=\frac{q(\hat s)r(\hat s)+q'(\hat s)}{4}, \label{eq p12}\\
 p_{11}(\hat s)=p_{22}(\hat s)&=\frac{2q(\hat s)q'(\hat s)(\hat s+q(\hat s)^2)+q(\hat s)^2-r(\hat s)^2}{8}+\frac{2q'(\hat s)(r(\hat s)-q'(\hat s)^2)}{8q(\hat s)}\\
 &=\frac{q(\hat s)^2-r(\hat s)^2}{8}.\label{eq p11}
\end{align}

\section{Proof of Theorem \ref{thm 21}}\label{appendix b}

Let $p_j$ be the monic orthogonal polynomials satisfy the following orthogonality conditions
\begin{align} \label{Pj monic}
	\int_{ \CC }  p_j(z) \overline{ p_k (z) } |z-a|^{2Nc} \ee^{-N |z|^2}\,\dd A(z)= h_j \, \delta_{jk},
\end{align}
where $h_j$ is the norming constant. 
Let us denote
\begin{equation}
\psi_j(z):= (z-a)^{Nc}p_j(z), \qquad \phi_j (z):=(z-a)^{Nc} \frac{p_j(z)}{h_j}. 
\end{equation}

We define the inner product
\begin{equation}\label{inner product}
	\langle U | V \rangle := \int_\CC \overline{U(z)}\,V(z)\,\ee^{-N|z|^2}\,\dd A(z)=\overline{ \langle V| U \rangle  }
\end{equation}
and denote $\Psi:=[ \psi_0,\psi_1,\cdots ]^{\rm T}$, $\Phi:=[ \phi_0,\phi_1,\cdots ]^{\rm T}$, where ${\rm T}$ is the transpose of a matrix. Let
\begin{equation} \label{projection op}
	\Pi_n:=\textup{diag}(\underbrace{1,\cdots,1}_n,0,\cdots )
\end{equation}
be the projection operator, we write
\begin{equation}
	{\cal K}_n(z,\zeta)=\ee^{-N z\bar{\zeta}} \Phi^*(\zeta) \Pi_n \Psi(z),
\end{equation} where the superscript $*$ means the complex conjugation.
Then we have 
\begin{equation} \label{bfK bp eta}
\overline\partial_\zeta {\cal K}_{n}(z,\zeta) = -N \ee^{ -N z \bar{\zeta} }  \Big(z-\frac{1}{N}\overline\partial_\zeta\Big) 	 \Phi^*(\zeta) \Pi_n \Psi(z).
\end{equation}

Let us denote 
\begin{equation}
L_{j,j-1}=-\frac{\langle z\psi_j | \phi_0 \rangle}{\langle z\psi_{j-1} | \phi_0 \rangle}\quad \mbox{for each $j$}.
\end{equation}
Notice that the denominator does not vanish due to the fact that $p_{n+1}(0)\neq 0$ and $p_n(z)\neq z p_{n-1}(z)$ for all $n$, which is obtained by the following relation  \begin{equation*}
    h_n=-\frac{\Gamma(Nc+n+1)}{2\ii N^{Nc+n+1}}\frac{\widetilde{h}_n}{p_{n+1}(0)}
\end{equation*} with $\widetilde h_n\neq 0$ and $h_n\neq 0$ for all $n$ from \cite[Proposition 3.3, Proposition 7.1]{Ba 2015}.
Note also that 
$$	
z \big( \psi_j(z)+L_{j,j-1} \psi_{j-1}(z)  \big) \,  \bot \, \phi_0
$$
with respect to the inner product \eqref{inner product}.
The numbers $L_{j,j-1}$ are building blocks to define the lower diagonal matrix 
$$
L:=\begin{bmatrix}0&0&0&0&\dots\\
L_{2,1}&0&0&0&\dots\\
0&L_{3,2}&0&0&\dots\\
0&0&L_{4,3}&0&\dots\\
\vdots&\vdots&\vdots&\ddots&\vdots
\end{bmatrix}.
$$

We write 
\begin{equation}\label{def psitilde}
\widetilde{ \psi}_j :=\psi_j+L_{j,j-1} \psi_{j-1}. 
\end{equation}
Then if 
$$
\phi(z)= (\text{polynomials of deg} \le j-2) \cdot (z-a) \cdot (z-a)^{Nc}, 
$$
we have
\begin{equation}
\langle \phi \, |  \, z \widetilde{\psi}_j \rangle =\langle \partial \phi \, | \, \widetilde{\psi}_j \rangle=0.
\end{equation}
It follows that
$$ 
\mathrm{span}\{ \phi_0, \phi_1,\cdots, \phi_{j-1}  \} \, \bot  \, z \widetilde{\psi}_j,
$$
which leads to 
\begin{equation}\label{z psi1}
z \widetilde{\psi}_j(z) 
= \psi_{j+1}(z)+B_{j,j} \, \psi_j(z)
\end{equation}
for some $B_{j,j}$.
Thus we obtain 
\begin{equation} \label{B act}
 z( I+L ) \Psi=B \, \Psi,  \qquad 
B:=\begin{bmatrix}
B_{1,1}&1&0&0&\dots\\
0&B_{2,2}&1&0&\dots\\
0&0&B_{3,3}&1&\dots\\
0 & 0 &0& B_{4,4}&\ddots \\
\vdots & \vdots & \vdots &\ddots&\ddots
\end{bmatrix}.
\end{equation}

Let us also write
\begin{equation} \label{Ujj+1}
 U_{j,j+1}=-\frac{h_{j+1}}{h_j}\frac{p_j(a)}{p_{j+1}(a)}, 
\end{equation}
which is well-defined due to the fact that $h_j\neq 0$ for all $j$ from \cite[Proposition 3.3, Proposition 7.1]{Ba 2015}, and define the upper diagonal matrix 
$$
U:=\begin{bmatrix}
0& U_{1,2}&0&0&\dots\\
0&0& U_{2,3} &0&\dots\\
0&0&0& U_{3,4}&\dots\\
0 & 0 & 0 &0&\ddots\\
\vdots & \vdots & \vdots &\vdots&\ddots
\end{bmatrix}.
$$
Then the function 
\begin{align}
\begin{split}
\label{whphi}
\widehat{\phi}_j(z)&:=\phi_j(z)+U_{j,j+1} \phi_{j+1}(z)
=(\text{polynomials of deg}\le j) \cdot (z-a)\cdot (z-a)^{Nc}
\end{split}
\end{align}
satisfies
\begin{equation} 
\langle \partial \widehat{\phi}_j\, | \, \psi_k \rangle = \langle \widehat{\phi}_j\,| \, z \, \psi_k \rangle=0 \quad \text{if }k\le j-2.  
\end{equation}
Thus we have
\begin{equation}
	\partial \widehat{\phi}_j=A_{j,j} \phi_j+A_{j,j-1} \phi_{j-1}
\end{equation}
for some $A_{j,k}$, equivalently, 
\begin{equation} \label{A act}
	\partial(I+U) \Phi=A\, \Phi, \qquad 
A:=\begin{bmatrix}
A_{1,1}&0&0&0&\dots\\
A_{2,1}&A_{2,2}&0&0&\dots\\
0&A_{3,2}&A_{3,3}&0&\dots\\
0&0&A_{4,3}&A_{4,4}&\dots\\
\vdots&\vdots&\vdots&\ddots&\ddots
\end{bmatrix}.
\end{equation}

We now determine $A_{j,j-1}$ and $A_{j,j}$. Note that integration by parts gives
\begin{align} \label{AB rel intbypart}
\begin{split}
	\overline{B} (I+U)^t&=\overline{B} \langle \Psi \, | \, \Phi^t \rangle (I+U)^t = \langle B \Psi \, | \, \Phi^t (I+U)^t \rangle 
	\\
	&=\langle  z( I+L ) \Psi\,| \, \Phi^t(I+U)^t \rangle 
=\frac{1}{N}\langle ( I+L ) \Psi\,| \, \partial \Phi^t(I+U)^t \rangle  	\\
	&=\frac{1}{N}\langle ( I+L ) \Psi\,| \, \Phi^tA^t \rangle=\frac{1}{N}(I+\overline{L}) A^t.
\end{split}
\end{align}
Thus we obtain the relation
\begin{equation}\label{AB relation}
	\tfrac{1}{N}A(I+L^*)=(I+U)B^*, \qquad 	B=\tfrac{1}{N}(I+L) A^* (I+U^*)^{-1}.
\end{equation}
Comparing the terms involving $A_{j,j-1}$, one can observe that
\begin{equation} \label{Ajj-1}
A_{j,j-1}=N.
\end{equation}
To determine $A_{j,j}$, note that
\begin{align*}
\partial \widehat{\phi}_j(z)&= \partial \Big(  \phi_j+U_{j,j+1} \phi_{j+1} \Big)
=\frac{1}{h_j}\partial \Big( (z-a)^{Nc}p_j \Big)+\frac{U_{j,j+1} }{h_{j+1}}\partial  \Big( (z-a)^{Nc}p_{j+1} \Big)
\\
&=(z-a)^{Nc-1} \frac{1}{h_j} \bigg( \Big( Ncp_j+(z-a)p_j' \Big)-\frac{ p_j(a) }{ p_{j+1}(a) } \Big( Ncp_{j+1}+(z-a)p_{j+1}' \Big) \bigg)
\\
&=(z-a)^{Nc} \frac{1}{h_j} \bigg( \frac{Nc}{z-a} \Big(p_j-\frac{ p_j(a) }{ p_{j+1}(a) } p_{j+1} \Big) + p_j' -\frac{ p_j(a) }{ p_{j+1}(a) } p_{j+1}'  \bigg).
\end{align*}
This gives 
$$
A_{j,j} p_j+N\frac{ h_j }{ h_{j-1} } p_{j-1}= \frac{Nc}{z-a} \Big(p_j-\frac{ p_j(a) }{ p_{j+1}(a) } p_{j+1} \Big) + p_j' -\frac{ p_j(a) }{ p_{j+1}(a) } p_{j+1}' . 
$$
Comparing the coefficient of $z^{j}$ term of this identity, we obtain 
\begin{equation}\label{Ajj}
 A_{j,j}=-  \frac{p_j(a)}{p_{j+1}(a)}  (Nc+j+1) .
\end{equation}
Notice in particular that $A_{j,k}$'s are real. 

Now let us consider the decomposition
\begin{equation}\label{A decomposition}
	A=N\, T_-+A_0,
\end{equation} 
where 
$$T_-:=\begin{bmatrix}0&0&0&0&\dots\\
1&0&0&0&\dots\\
0&1&0&0&\dots\\
0&0&1&0&\dots\\
\vdots&\vdots&\vdots&\ddots&\vdots
\end{bmatrix}, \qquad A_0:=\textup{diag}(A_{1,1},A_{2,2},\dots)
$$
are the translation and the diagonal part respectively. Write
\begin{equation}
A^*=T_++A_0^*, \qquad T_+:=T_-^*.
\end{equation}

Note also that we have
\begin{align*}
&\quad (T_+-\zeta) \Psi=\big(T_+-(I+L)^{-1}B \big) \Psi
=\Big(T_+-(I+L)^{-1}\frac{1}{N}(I+L) A^* (I+U^*)^{-1} \Big) \Psi
\\
&=\Big(T_+-\frac{1}{N}A^* (I+U^*)^{-1} \Big) \Psi=(T_+(I+U^*)-\frac{1}{N}A^* ) (I+U^*)^{-1} \Psi=\Big(T_+U^*-\tfrac{1}{N}A_0^* \Big) (I+U^*)^{-1} \Psi,
\end{align*}
where the second and the fourth identity follow from \eqref{AB relation} and \eqref{A decomposition} respectively.

Let us claim that $(T_+U^*-\tfrac{1}{N}A_0^* )$ is invertible. 
Suppose that this is not the case. Then there exists some $k$ such that
$U^*_{k-1,k}-\frac{1}{N}A_{k-1,k-1}=0$. Consequently, we have $\psi_{k}(z)=z\psi_{k-1}(z)$. This contradicts the assumption that $a\neq 0$.
Therefore, $(T_+U^*-\tfrac{1}{N}A_0^* )$ is invertible, which is also equivalent to 
\begin{equation} \label{h j+1 j}
h_{j+1} \not= \frac{Nc+j+1}{N}h_j
\end{equation}
by \eqref{Ujj+1} and \eqref{Ajj}. 

By letting 
\begin{equation}
	\widehat{\Psi}:=\Big(T_+U^*-\frac{1}{N}A_0^*\Big)^{-1}  (T_+-z)\Psi,
\end{equation}
we have 
\begin{equation}
	(I+U^*) 	\widehat{\Psi}=\Psi. 
\end{equation}

Note that 
\begin{align*}
	\begin{split}
	\widehat{\Psi}&:=[ \widehat{\psi}_0,\widehat{\psi}_1,\cdots ]^t
	=\Big(T_+U^*-\frac{1}{N}A_0^*\Big)^{-1}  (T_+-z)  [ \psi_0,\psi_1,\cdots ]^t
	\\
	&=\textup{diag}\Big(  U^*_{1,2}-\frac{1}{N}A_{1,1},U^*_{2,3}-\frac{1}{N}A_{2,2},\cdots  \Big)^{-1} [ \psi_1-z \psi_0,\psi_2-z \psi_1,\cdots ]^t.
\end{split}
\end{align*}
Thus we have 
\begin{equation} \label{psi hat}
\widehat{\psi}_j=\frac{  \psi_{j+1}-z \psi_j }{ U_{j,j+1}^*-\frac{1}{N} A_{j,j} }.
\end{equation}
Here the denominator again does not vanish due to \eqref{Ujj+1}, \eqref{Ajj} and \eqref{h j+1 j}.
Then by \eqref{B act} and \eqref{AB relation}, we have
\begin{align*}
(I+L) z 	(I+U^*) 	\widehat{\Psi}&=(I+L) z  \Psi = B\Psi 
=B(I+U^*) \widehat{\Psi} =\frac{1}{N}(I+L) A^* \widehat{\Psi},
\end{align*}
which leads to 
\begin{equation} \label{UA Psihat}
  z 	(I+U^*) 	\widehat{\Psi}=\tfrac{1}{N}A^* \widehat{\Psi}. 
\end{equation}

Combining \eqref{UA Psihat}, \eqref{A act} and \eqref{AB relation}, we obtain  
\begin{align*}
\Big(z-\frac{1}{N}\partial_\zeta\Big) 	 \Phi^*(\zeta) \Pi_n \Psi(z)
&=\Big(z-\frac{1}{N}\partial_\zeta\Big) 	 \Phi^*(\zeta) \Pi_n (I+U^*)\widehat{\Psi}(z)
\\
&=\frac{1}{N} \Phi^*(\zeta) [\Pi_n,A^*] \widehat{\Psi}(z)-\frac{1}{N}\partial_\zeta \Phi^*(\zeta)[\Pi_n,I+U^*] \widehat{\Psi}(z).
\end{align*}
Moreover by \eqref{Ajj-1} and \eqref{psi hat}, we have 
\begin{align*}
	\begin{split}
\Phi^*(\zeta) [\Pi_n,A^*] \widehat{\Psi}(z)&=\overline{ \phi_{n-1}(\zeta) } A_{n,n-1} \widehat{\psi}_n(z)-\overline{ \phi_{n}(\zeta) } A_{n-1,n} \widehat{\psi}_{n-1}(z)
\\
&=\frac{N}{  U_{n,n+1}^*-\frac{1}{N} A_{n,n} } \overline{ \phi_{n-1}(\zeta) }  (  \psi_{n+1}(z)-z \psi_n(z)  ).
\end{split}
\end{align*}
Similarly, we obtain 
\begin{align*} 
	\begin{split}
	\partial_\zeta \Phi^*(\zeta)[\Pi_n,I+U^*] \widehat{\Psi}(z)
	&=	\partial_\zeta \overline{ \phi_{n-1}(\zeta) } U_{n,n-1}^* \widehat{\psi}_n(z)-	\partial_\zeta \overline{ \phi_{n}(\zeta) } U_{n-1,n}^* \widehat{\psi}_{n-1}(z)
		\\
		&=    -\frac{U_{n-1,n}^*}{ U_{n-1,n}^*-\frac{1}{N} A_{n-1,n-1} }	\partial_\zeta \overline{ \phi_{n}(\zeta) }( \psi_{n}(z)-z \psi_{n-1}(z) ).
	\end{split}
\end{align*}
Combining all of the above identities with \eqref{bfK bp eta}, the proof is complete.

\section{Proof of Theorem \ref{Thm 41}}\label{appendix c}

Since $\Delta Q(z) =(\partial_x^2+\partial_y^2)Q(z) = 4$ away from the singularity, the function, $2\log|u(z)|+4N |z-z_0|^2-N Q(z)$, is a subharmonic function (and harmonic away from the singularities) for any polynomial $u(z)$.  Exponentiating the subharmonic function we get that
$|u(z)|^2 \ee^{4N|z-z_0|^2-NQ(z)}$ is also a subharmonic function, using Jensen's inequality.  Integrating around a small disk of radius $1/\sqrt N$ centered at $z_0$, that we denote by $D(z_0;1/\sqrt N)$ below, we get
\begin{align}
|u(z_0)|^2\ee^{-NQ(z_0)}&\leq \frac{N}{\pi}\int_{D(z_0;1/\sqrt N)}|u(z)|^2 \ee^{4N|z-z_0|^2-NQ(z)}\dd A(z)
\\&<\frac{N\ee^{4}}{\pi}\int_{D(z_0;1/\sqrt N)}|u(z)|^2 \ee^{-NQ(z)}\dd A(z)
\\
&<\frac{N\ee^{4}}{\pi}\int_{\CC}|u(z)|^2 \ee^{-NQ(z)}\dd A(z).
\end{align}
The above inequality holds for arbitrary $z_0\in\CC$.  Taking the log of the above inequality, we get
$$2\log|u(z)| - N Q(z) \leq \log \frac{N\ee^4}{\pi}\|u\|_{L^2(NQ)}, \qquad z\in\CC,
$$ 
where we define $$\|u\|_{L^2(NQ)}=\int_{\CC}|u(z)|^2 \ee^{-NQ(z)}\dd A(z).$$
Let us define $\widehat Q$, adopting the notation of \cite{Ameur 2010}, by
$$\widehat Q(z)= \frac{2t}{\text{Area}({\cal S})}\int_{\cal S}\log|z-w|\dd A(w)-\ell_{\text{2D}},  $$
where $\ell_{\text{2D}}$ is chosen such that $Q=\widehat Q$ on the boundary of ${\cal S}$.  Since $\widehat Q$ is harmonic in ${\cal S}^c$, and $\widehat Q$ satisfies the growth condition $\widehat Q(z)\sim 2t\log|z|$ as $|z|\to\infty$, we have
$$2\log|u(z)| - N \widehat Q(z) \leq \log \frac{N\ee^4}{\pi}\|u\|_{L^2(NQ)}
$$
on the boundary of ${\cal S}$, and the left hand side goes to $-\infty$ if $u(z)$ is a polynomial of degree less than $tN$. Since the left hand side is harmonic on ${\cal S}^c$ the above inequality holds for all $z\in{\cal S}^c\cup\partial {\cal S}$.

Let us take $n=tN$ and $u(z)=\ee^{\frac{N}{2}Q(z)}{\bf K}_n(z,w)/\sqrt{{\bf K}_n(w,w)}$ which is a polynomial of degree $n-1$.  We have $\|u\|_{L^2(NQ)}=1$ and, by exponentiating the above inequality, we get
$$
\ee^{N(Q(z)-\widehat Q(z))}\frac{|{\bf K}_n(z,w)|^2}{{\bf K}_n(w,w)}\leq \frac{N\ee^4}{\pi}, \quad \text{for}\quad z\in {\cal S}^c\cup\partial{\cal S}\quad\text{and}\quad w\in\CC.
$$
Taking $w=z$ we have
$$
{\bf K}_n(z,z)\leq \frac{N\ee^4}{\pi}\ee^{-N(Q(z)-\widehat Q(z))}, \quad \text{for}\quad z\in {\cal S}^c\cup\partial{\cal S}.
$$ Setting ${\cal U}_\text{2D}(z)=Q(z)-\widehat Q(z)$, we finish the proof Theorem \ref{Thm 41}.

{\bf Acknowledgements:} We thank Yacin Ameur, Sung-Soo Byun, Tom Claeys, and Arno Kuijlaars for helpful discussions.

\end{document}